\newcommandx{\hcomment}[2][1=]{\todo[linecolor=blue,backgroundcolor=blue!25,bordercolor=blue,#1]{#2}}
\newtheorem {theorem}{Theorem}
\newtheorem {lemma}{Lemma}
\newtheorem {proposition}{Proposition}
\newtheorem {corollary}{Corollary}
\newtheorem {definition}{Definition}
\theoremstyle{remark}
\newtheorem*{remark}{Remark}
\newcolumntype{P}[1]{>{\centering\arraybackslash}p{#1}}
\newcolumntype{M}[1]{>{\centering\arraybackslash}m{#1}}
\newcommand{\calH}{\mathcal{H}}
\newcommand{\calT}{\mathcal{T}}
\newcommand{\calF}{\mathcal{F}}
\newcommand{\calM}{\mathcal{M}}
\newcommand{\up}{\mathbb{U}}
\newcommand{\horiz}{\mathbb{H}}
\newcommand{\down}{\mathbb{D}}
\newcommand{\whp}{\alpha} 
\newcommand{\hg}{Helly-gap}
\mathchardef\mhyphen="2D
\newcommand{\ecc}{e}
\date{\today}
\title{Helly-gap of a graph and vertex eccentricities}
\author{\medskip Feodor F. Dragan and Heather M. Guarnera   \\
{\small Algorithmic Research Laboratory, Department of Computer Science} \\
{\small Kent State University, Kent, Ohio, USA} \\
\href{mailto:dragan@cs.kent.edu}
{\small\texttt{dragan@cs.kent.edu}},
\href{mailto:hmichaud@kent.edu}
{\small\texttt{hmichaud@kent.edu}}
}
\begin{document}

\maketitle

\begin{abstract} 
A new metric parameter for a graph, Helly-gap, is  introduced. A graph~$G$ is called $\whp$-weakly-Helly if any system of pairwise intersecting disks in~$G$ has a nonempty common  intersection when the radius of each disk is increased by an additive value $\whp$. The minimum $\whp$ for which a graph $G$ is $\whp$-weakly-Helly is called the \hg{} of~$G$ and denoted by $\whp(G)$. 
The \hg{} of a graph~$G$ is characterized by distances in the injective hull $\calH(G)$, which is a (unique) minimal Helly graph which contains~$G$ as an isometric subgraph. This characterization is used as a tool to generalize many eccentricity related results known for Helly graphs ($\whp(G)=0$), as well as for chordal graphs ($\whp(G)\le 1$), distance-hereditary graphs ($\whp(G)\le 1$) and $\delta$-hyperbolic graphs ($\whp(G)\le 2\delta$), to all 
graphs, parameterized by their Helly-gap $\whp(G)$. 
Several additional graph classes are shown to have a bounded \hg{}, including AT-free graphs and graphs with bounded tree-length, bounded chordality or bounded $\alpha_i$-metric. 
\medskip

\noindent
{\bf Keywords:} Helly-type property; injective hull; chordal graphs; $\delta$-hyperbolic graphs; tree-length; $\alpha_i$-metric, eccentricity; center; diameter; radius.  
\end{abstract}

\section{Introduction}
All graphs in this paper are unweighted, undirected, connected, and without loops or multiple edges. 
In a graph $G=(V(G),E(G))$, a disk $D_G(v,r)$ with radius $r$ and centered at a vertex~$v$ consists of all vertices with distance at most~$r$ from~$v$, i.e., $D_G(v,r) = \{u \in V(G) : d_G(u,v) \le r\}$.
A graph $G$ is called \emph{Helly} if every system of pairwise intersecting disks of $G$ has a non-empty intersection.

\begin{definition}[Helly graph]
A graph $G$ is Helly if, for any system of disks $\calF = \{ D_G(v, r(v)) : v \in S \subseteq V(G)\}$, the following Helly property holds:
if $X \cap Y \neq \emptyset$ for every $X,Y \in \calF$,
then $\underset{v \in S}{\bigcap} D_G(v, r(v)) \neq \emptyset$.
\end{definition}

Helly graphs are well investigated. They have several characterizations and important  features as established in~\cite{FDraganPhD,DraganCenters,Bandelt:1989:DAR:72175.72177,NOWAKOWSKI1983223,QUILLIOT1985186,BANDELT199134}.  
They are exactly the so-called {\em absolute retracts of reflexive graphs} and
possess a certain elimination scheme~\cite{FDraganPhD,DraganCenters,Bandelt:1989:DAR:72175.72177,NOWAKOWSKI1983223,BANDELT199134}. They can be recognized in $O(n^2m)$ time~\cite{FDraganPhD} where $n$ is the number of vertices and $m$ is the number of edges. The Helly property works as a compactness criterion on graphs~\cite{QUILLIOT1985186}. More importantly, every graph is isometrically embeddable into a Helly graph~\cite{Isbell1964/65,DRESS1984321,ursLang} (see Section \ref{inj-hull} for more details). 

%
Many nice properties of Helly graphs are based on the eccentricity $\ecc_G(v)$ of a vertex~$v$, which is defined as the maximum distance from~$v$ to any other vertex of the graph (i.e., $\ecc_G(v) = \max_{u \in V(G)}d_G(v,u)$).
The minimum (maximum) eccentricity in a graph $G$ is the radius $rad(G)$ (diameter $diam(G)$).
The center $C(G)$ of a graph $G$ is the set of all vertices with minimum eccentricity (i.e., $C(G) = \{v \in V(G) : \ecc_G(v) = rad(G)\}$.
Graph's diameter is tightly bounded in Helly graphs as $2rad(G) \ge diam(G) \ge 2rad(G) - 1$~\cite{FDraganPhD,DraganCenters}.
Moreover, the eccentricity function in Helly graphs is unimodal~\cite{DraganCenters}, that is, any local minimum coincides with the global minimum;
this is equivalent to the condition that, for any vertex $v \in V(G)$, $\ecc_G(v) = d_G(v,C(G)) + rad(G)$ holds~\cite{FDraganPhD}. The unimodality of the eccentricity function was recently used in \cite{DBLP:journals/corr/abs-1910-10412,DucoffeNEW} to compute the radius, diameter and a central vertex  of a Helly graph in subquadratic time. For a vertex $v \in V(G)$, let  $F_G(v)$ denote 
the set of all vertices farthest from $v$, that is, $F_G(v) = \{ u \in V(G) : \ecc_G(v) = d_G(v,u) \}$. 
For every vertex $v$ of a Helly graph $G$, each vertex $u \in F_G(v)$ satisfies  $\ecc_G(u) \ge 2rad(G) - diam(C(G))$~\cite{FDraganPhD}. 
Although the  center $C(G)$ of a Helly graph $G$ may have an arbitrarily large diameter (as any Helly graph is the center of some Helly graph), $C(G)$ induces a Helly graph and is isometric in $G$~\cite{FDraganPhD}.
Additionally, any power of a Helly graph is a Helly graph as well~\cite{FDraganPhD}.

In this paper, we introduce a far reaching generalization of Helly graphs, $\whp$-weakly-Helly graphs. We define \emph{$\whp$-weakly-Helly graphs} as those graphs which are ``weakly'' Helly with the following simple generalization:
for any system of disks, if the disks pairwise intersect, then by expanding the radius of each disk by some integer $\whp$ there forms a common intersection.
Thus, Helly graphs are exactly 0-weakly-Helly graphs.

\begin{definition}[$\whp$-weakly-Helly graph]
A graph $G$ is $\whp$-weakly-Helly if, for any system of disks $\calF = \{ D_G(v, r(v)) : v \in S \subseteq V(G)\}$, the following $\whp$-weakly-Helly property holds:
if $X \cap Y \neq \emptyset$ for every $X,Y \in \calF$,
then $\underset{v \in S}{\bigcap} D_G(v, r(v) + \whp) \neq \emptyset$.\\  Clearly, every graph is $\whp$-weakly-Helly for some $\alpha$. We call the minimum~$\whp$ for which a graph~$G$ is $\whp$-weakly-Helly the \hg{} of~$G$, denoted by $\whp(G)$.
\end{definition}

Interestingly, there are a few results in the literature which demonstrate that such a $\alpha$-weakly-Helly property with bounded $\alpha$ is present in some graphs and in some metric spaces.  
In~\cite{DBLP:conf/approx/ChepoiE07}, Chepoi and  Estellon showed that for every $\delta$-hyperbolic geodesic space $(X, d)$ (and for every $\delta$-hyperbolic graph $G$), if disks 
of the family $\calF = \{ D(x, r(x)) : v \in S\subseteq X, S$ is compact$\}$ pairwise intersect, then the disks $\{D(x, r(x) + 2\delta) : x \in S\}$ have a nonempty common intersection (see also ~\cite{Chepoi_2008}). That is, the disks in $\delta$-hyperbolic geodesic spaces and in $\delta$-hyperbolic graphs satisfy $(2\delta)$-weakly-Helly property. Even earlier, Lenhart, Pollack et al.~\cite[Lemma 9]{link-center} established that the disks of simple polygons endowed with the link distance satisfy a Helly-type property which implies 1-weakly-Helly 
property. For chordal graphs~\cite{DBLP:journals/dm/DraganB96} and for distance-hereditary graphs~\cite{10.1007/3-540-58218-5_34} (as well as for a more general class of graphs~\cite{DBLP:journals/dm/Chepoi98}) the following similar result is  known: if for a set $M\subseteq V(G)$ and radius function $r: M\rightarrow \mathbb{N}$, every two vertices $x,y\in M$ satisfy $d_G(x,y)\le r(x)+r(y)+1$, then there is a clique $K$ in $G$ such that $d_G(v,K)\le r(v)$ for all $v\in M$. Hence, clearly,  every chordal graph and every distance-hereditary graph is 1-weakly-Helly. Note that two disks $D_G(v,p)$ and $D_G(u,q)$ intersect if and only if $d_G(v,u)\le p+q$. 

There are numerous other approaches to generalize in some form the Helly graphs. One may loosen the restriction on the type of sets which must satisfy the Helly property. If one considers neighborhoods or cliques, rather than arbitrary disks, then one gets the neighborhood-Helly graphs and the clique-Helly graphs as superclasses of Helly graphs (see \cite{DBLP:journals/ipl/LinS07,DBLP:journals/dam/GroshausLS17,doi:10.1137/1.9780898719796} and papers cited therein). 
One may also generalize the Helly-property that a family of sets satisfies. 
This can be accomplished by specifying a minimum number or size of sets that pairwise intersect, a minimum number of subfamilies that have a common intersection, or the size of the intersection (see \cite{ECKHOFF1993389} and papers cited therein). Far reaching examples include the $(p,q)$-Helly 
property (see~\cite{ALON1992103,ALON200279} and papers cited therein) and 
the fractional Helly property  (see~\cite{10.2307/2042758,BARANY2003227} and papers cited therein). 
%
 There is also a related to ours notion of $\lambda$-hyperconvexity in metric spaces $(X,d)$~\cite{espinola2001,khamsi2001}, where $\lambda$ is the smallest multiplicative constant~such that for any system of disks $\calF = \{ D(x, r(x)) : x \in S \subseteq X\}$, the following property holds: if $X \cap Y \neq \emptyset$ for every $X,Y \in \calF$, then $\underset{x \in S}{\bigcap} D(x, \lambda \cdot r(x)) \neq \emptyset$. 
Several classical metric spaces are $\lambda$-hyperconvex for a bounded $\lambda$ (see~\cite{espinola2001} and papers cited therein). These include reflexive Banach spaces and dual Banach spaces ($\lambda \le 2$) and Hilbert spaces ($\lambda \le \sqrt 2$). 
The classical Jung Theorem asserting that each subset $S$ of the Euclidean space $\mathbb{E}^m$ with finite diameter $D$ is contained in a disk of radius at most $\sqrt{\frac{m}{2(m+1)}}D$ belongs to this kind of results, too.  


In this paper, we are interested in $\whp$-weakly-Helly graphs, where $\whp$ describes an additive constant by which radius of each disk in the family of pairwise intersecting disks can be 
increased in order to obtain a nonempty common intersection of all expanded disks.
By definition, any $\whp$-weakly-Helly graph is $(\whp+1)$-hyperconvex. 
%
We find that there are also an abundance of graph classes  with a bounded \hg{} $\whp(G)$.
Here, 
we generalize many eccentricity related results known for Helly graphs (as well as for chordal graphs, distance-hereditary graphs and $\delta$-hyperbolic graphs) to all 
graphs, parameterized by their Helly-gap $\whp(G)$.
%
We provide a characterization of weakly-Helly graphs with respect to their injective hulls and use this as a tool to prove those generalizations. Several additional well-known graph classes that are $\whp$-weakly-Helly for some constant $\whp$ are also identified. 

The main results of this paper can be summarized as follows. After providing in Section~\ref{sec:prelim} necessary definitions, notations and some  relevant results on Helly graphs, in Section~\ref{inj-hull} we give a characterization of $\whp$-weakly-Helly graphs through distances in injective hulls. The \emph{injective hull} of a graph $G$, denoted by $\calH(G)$, is a (unique) minimal Helly graph which contains $G$ as an isometric subgraph~\cite{Isbell1964/65,DRESS1984321}. We show that $G$ is an $\whp$-weakly-Helly graph if and only if for every vertex $h \in V(\calH(G))$ there is a vertex $v \in V(G)$ such that $d_{\calH(G)}(h,v) \leq \whp$ (Theorem~\ref{thm:closeRealVertex}). In Section~\ref{sec:diam-rad-ecc}, we relate the diameter, radius, and all eccentricities in~$G$ to their counterparts in~$H:=\calH(G)$. In particular, we show that $\ecc_G(v) = \ecc_H(v)$ for all $v\in V(G)$, $diam(G) = diam(H)$, and  $rad(G) - \whp(G) \le rad(H) \le rad(G)$. Additionally, we show  
$diam_G(C(G)) -  2\whp(G) \le diam_H(C(H)) \le diam_G(C^{\whp(G)}(G)) + 2\whp(G)$, where $C^\ell(G) = \{ v \in V(G) : \ecc_G(v) \le rad(G) + \ell\}$.
We also provide several bounds on $\whp(G)$ including its relation to diameter and radius as well as investigate the \hg{} of powers of weakly-Helly graphs (Theorem~\ref{thm:hGapBounds}).
In particular, $\lfloor (2rad(G) - diam(G))/2 \rfloor \le \whp(G) \le \lfloor diam(G)/2 \rfloor$ holds.
The eccentricity function in $\whp$-weakly-Helly graphs is shown to exhibit the property that any vertex~$v \notin C^\whp(G)$ has a nearby vertex, within distance~$2\whp+1$ from~$v$, with strictly smaller eccentricity (Theorem~\ref{thm:unimodality}).
In Section~\ref{sec:estimatingEcc}, we give upper and lower bounds on the eccentricity~$\ecc_G(v)$ of a vertex~$v$.
We consider bounds based on the distance from~$v$ to a closest vertex in $C^{\alpha(G)}(G)$
, whether~$v$ is farthest from some other vertex and if $diam_G(C^{\whp(G)}(G))$ is bounded.
In particular, we show that 
$|\ecc_G(v) - d_G(v,C^{\whp(G)}(G)) - rad(G)| \le \whp(G)$ holds for any vertex~$v \in V(G)$ (Theorem~\ref{thm:eccentricities}).
We also prove the existence of a spanning tree~$T$ of $G$ which gives an approximation of all vertex eccentricities in~$G$ with an additive error depending only on $\whp(G)$ and $diam_G(C^{\whp(G)}(G))$ (Theorem~\ref{thm:spanningTree}).
We find that in any shortest path to $C^{\whp(G)}(G)$, the number of vertices with locality more than 1 does not exceed $2\whp(G)$, where the locality of a vertex~$v$ is the minimum distance from~$v$ to a vertex of strictly smaller eccentricity (Theorem~\ref{thm:upHorizontalEdgesBoundWH}). All these results greatly generalize some known facts about distance-hereditary graphs, chordal graphs, and $\delta$-hyperbolic graphs. Those graphs have bounded Helly gap. 
In Section~\ref{sec:cases}, we identify several more  (well-known) graph classes with a bounded \hg{}, including $k$-chordal graphs, AT-free graphs, rectilinear grids, graphs with a bounded $\alpha_i$-metric, and graphs with bounded tree-length or tree-breadth. We conclude with open questions in Section~\ref{sec:conclusion}. 


\section{Preliminaries}\label{sec:prelim}
All graphs $G=(V(G),E(G))$ occurring in this paper are undirected, connected, and without loops or multiple edges.
The \emph{length} of a path from a vertex~$u$ to a vertex~$v$ is the number of edges in the path.
The \emph{distance} $d_G(u,v)$ between two vertices~$u$ and~$v$ is the length of a shortest path connecting them in $G$.
The distance from a vertex~$u$ to a vertex set $S \subseteq V(G)$ is defined as $d_G(u,S) = \min_{s \in S}d_G(u,s)$. 
The \emph{eccentricity} of a vertex is defined as $\ecc_G(v) = \max_{u \in V(G)}d_G(v,u)$.
The vertex set $F_G(v)$ for a vertex $v \in V(G)$ denotes the set of all vertices farthest from $v$, that is, $F_G(v) = \{ u \in V(G) : \ecc_G(v) = d_G(v,u) \}$.
A \emph{disk} of radius~$k$ centered at a set~$S$ (or a vertex) is the set of vertices of distance at most~$k$ from~$S$, that is, $D_G(S,k) = \{u \in V(G): d_G(u,S) \leq k\}$. 
We omit~$G$ from subindices when the graph is known by context.

The $k^{th}$ power $G^k$ of a graph $G$ is a graph that has the same set of vertices, but in which two distinct vertices are adjacent if and only if their distance in $G$ is at most $k$. A subgraph $G'$ of a graph $G$ is called {\em isometric} (or a {\em distance preserving subgraph}) if for any two vertices $x,y$ of $G'$, $d_G(x,y)=d_{G'}(x,y)$ holds.
The \emph{interval} $I(x,y)$ is the set of all vertices belonging to a shortest $(x,y)$-path, that is, $I(x,y) = \{u \in V(G) : d(x,u) + d(u,y) = d(x,y)\}$.
The interval \emph{slice} $S_k(x,y)$ is the set of vertices on $I(x,y)$ which are at distance~$k$ from vertex~$x$.
An interval $I(x,y)$ is said to be $\kappa$-thin if, for any $0 \le \ell \le d(x,y)$ and any $u,v \in S_\ell(x,y)$, $d(u,v) \le \kappa$ holds. The smallest $\kappa$ for which all intervals of $G$ are $\kappa$-thin is called the \emph{interval thinness} of $G$ and denoted by $\kappa(G)$.
By $G - \{x\}$ we denote an induced subgraph of~$G$ obtained from~$G$ by removing a vertex~$x \in V(G)$. 

The minimum (maximum) eccentricity in a graph $G$ is the \emph{radius} $rad(G)$ (\emph{diameter} $diam(G)$).
The \emph{center} $C(G)$ is the set of vertices with minimum eccentricity (i.e., $C(G) = \{v \in V(G) : \ecc(v) = rad(G)\}$.
It will be useful to define also $C^k(G) = \{v \in V(G) : \ecc(v) \le rad(G) + k\}$; then $C(G) = C^0(G)$.
Let $M \subseteq V(G)$ be a subset of vertices of~$G$.
We distinguish the eccentricity with respect to~$M$ as follows: denote by $\ecc_G^M(v)$ the maximum distance from vertex~$v$ to any vertex $u \in M$, i.e., $\ecc_G^M(v) = \max_{u \in M} d_G(v,u)$.
We then define
$F_G^M(v) = \{ u \in M : \ecc_G^M(v) = d_G(v,u) \}$,
$rad_G(M) = \min_{v \in V(G)} \ecc_G^M(v)$, and
$diam_G(M) = \max_{v \in M} \ecc_G^M(v)$.
Let $C_G^\ell(M) = \{ v \in V(G) : \ecc_G^M(v) \le rad_G(M) + \ell\}$ and $C_G(M) = C_G^0(M)$.
For simplicity, when $M=V(G)$, we continue to use the notation $rad(G)$, $C(G)$, and $diam(G)$.

The following results for Helly graphs will prove useful for $\whp$-weakly-Helly graphs.
\begin{lemma}{\bf\cite{FDraganPhD}}\label{lem:allCentersHellyIsom}
Let~$G$ be a Helly graph. For every $M \subseteq V(G)$, the graph induced by the center $C_G(M)$ is Helly and is an isometric subgraph of~$G$.
\end{lemma}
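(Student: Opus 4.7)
The plan is to establish the two assertions separately, using throughout the auxiliary family of disks $\calF_M := \{ D_G(m, rad_G(M)) : m \in M \}$. The point of $\calF_M$ is that a vertex $x$ belonging to every disk of $\calF_M$ satisfies $d_G(x,m) \le rad_G(M)$ for all $m \in M$, hence $\ecc_G^M(x) \le rad_G(M)$ and $x \in C_G(M)$; conversely any vertex of $C_G(M)$ lies in every disk of $\calF_M$, so those disks pairwise intersect at any fixed vertex of $C_G(M)$.

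For \emph{isometricity}, fix $u, v \in C_G(M)$ with $d_G(u,v) = k$ and argue by induction on $k$ that some shortest $(u,v)$-path lies entirely in $C_G(M)$. The cases $k \le 1$ are trivial; for $k \ge 2$, I would produce a neighbor $u' \in C_G(M)$ of $u$ with $d_G(u', v) = k-1$ and invoke the induction hypothesis on the pair $(u', v)$. To get such a $u'$, apply the Helly property of $G$ to the family $\{ D_G(u, 1), D_G(v, k-1) \} \cup \calF_M$. Pairwise intersection is immediate: $D_G(u,1)$ and $D_G(v, k-1)$ meet at a neighbor of $u$ on a shortest $(u,v)$-path; $D_G(u,1)$ and $D_G(v, k-1)$ contain $u$ and $v$, respectively, both of which lie in every disk of $\calF_M$ because $u, v \in C_G(M)$; and the disks of $\calF_M$ share $u$. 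A common vertex $u'$ then satisfies $d_G(u', u) \le 1$, $d_G(u', v) \le k-1$, and $d_G(u', m) \le rad_G(M)$ for every $m \in M$. Since $k \ge 2$ we cannot have $u' = u$, so $u'$ is a neighbor of $u$ lying on a shortest $(u,v)$-path, and the last inequalities place $u'$ in $C_G(M)$.

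For the \emph{Helly part}, consider any pairwise intersecting family $\{ D_{C_G(M)}(c_i, r_i) \}_{i \in I}$ in the subgraph induced by $C_G(M)$. By the isometricity just proved, $D_{C_G(M)}(c_i, r_i) = D_G(c_i, r_i) \cap C_G(M)$, so the corresponding $G$-disks also pairwise intersect. Augmenting by $\calF_M$ preserves pairwise intersection: each $D_G(c_i, r_i)$ contains $c_i \in C_G(M)$, which lies in every member of $\calF_M$, and the members of $\calF_M$ share any fixed $c_{i_0}$. The Helly property of $G$ then yields a common vertex $x$; membership in every $D_G(m, rad_G(M))$ forces $x \in C_G(M)$, while membership in every $D_G(c_i, r_i)$ shows that $x$ lies in the intersection of the original family.

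The main obstacle is spotting the right auxiliary family $\calF_M$; once it is available, both pairwise-intersection checks reduce to one-line triangle inequalities, and the only genuinely delicate point is ruling out $u' = u$ in the inductive step, which follows immediately from $k \ge 2$. The Helly property of $G$ does all of the real work in both parts.
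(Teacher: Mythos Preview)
Your argument is correct. Note, however, that the paper does not supply its own proof of this lemma: it is quoted from~\cite{FDraganPhD} and stated without proof, so there is no in-paper argument to compare against. That said, your approach---adjoining the auxiliary family $\calF_M = \{D_G(m, rad_G(M)) : m \in M\}$ to whatever disk system is under consideration, so that any common point produced by the Helly property of $G$ is automatically forced into $C_G(M)$---is precisely the classical technique used in the original source and in the surrounding literature. Both the inductive isometricity step and the Helly step are handled cleanly; the only small remark is that in the Helly part you should perhaps state explicitly that isometricity gives $d_{C_G(M)}(x,c_i) = d_G(x,c_i) \le r_i$, which is what places $x$ back in $D_{C_G(M)}(c_i,r_i)$, but this is implicit in what you wrote.
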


Given this lemma, it will be convenient to also denote by $C_G(M)$ a subgraph of~$G$ induced by $C_G(M)$.

\begin{lemma}{\bf\cite{FDraganPhD,Bandelt:1989:DAR:72175.72177}}\label{lem:removeDominatingVertex}
Let $G$ be a Helly graph. If there are two distinct vertices~$w,x \in V(G)$ such that $D(w,1) \supseteq D(x,1)$, then $G - \{x\}$ is Helly and an isometric subgraph of~$G$.
\end{lemma}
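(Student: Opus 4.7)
The plan is to prove the two conclusions sequentially: first that $G' := G - \{x\}$ is isometric in $G$, then that it is Helly, using the isometry in the second step.

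For isometry, I would take any two vertices $u,v \in V(G')$ and a shortest $(u,v)$-path $P$ in $G$. If $P$ avoids $x$, we are done. Otherwise, since $u,v \neq x$, the vertex $x$ is an interior vertex of $P$, flanked by two neighbors $a,b$. The hypothesis $D(x,1) \subseteq D(w,1)$ forces $a,b \in D(w,1)$, so each of $a,b$ is either equal to $w$ or adjacent to $w$. If $a = w$, then $b \in D(w,1) = D(a,1)$ gives $b \sim a$, making the subpath $\ldots,a,x,b,\ldots$ non-shortest, a contradiction; symmetrically if $b = w$. So $w \neq a,b$, and replacing $x$ by $w$ in $P$ gives a walk in $G'$ of the same length as $P$, which contains a $(u,v)$-path of length $\leq d_G(u,v)$. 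Combined with $d_{G'}(u,v) \geq d_G(u,v)$, this gives equality.

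For the Helly property, I would take a pairwise intersecting family $\calF' = \{D_{G'}(v_i, r_i)\}_{i \in I}$ of disks in $G'$. By the isometry just established, $D_{G'}(v_i,r_i) = D_G(v_i,r_i) \cap V(G')$, so the corresponding disks $\calF = \{D_G(v_i, r_i)\}_{i \in I}$ in $G$ also pairwise intersect. Since $G$ is Helly, $\calF$ has a common point $y \in V(G)$. If $y \neq x$, then $y \in V(G')$ lies in $\bigcap_i D_{G'}(v_i,r_i)$ and we are done. The delicate case, which I expect is the main obstacle, is $y = x$: then every $v_i$ satisfies $d_G(v_i,x) \leq r_i$, and since each $v_i \neq x$, in fact $r_i \geq 1$. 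I claim $w$ is also a common point. Indeed, if $d_G(v_i,x) \leq r_i - 1$ then $d_G(v_i,w) \leq d_G(v_i,x) + d_G(x,w) \leq (r_i-1) + 1 = r_i$, using $w \in D(x,1) \subseteq D(w,1)$, which gives $d_G(x,w) \leq 1$. Otherwise $d_G(v_i,x) = r_i$; pick a neighbor $x_i$ of $x$ on a shortest $(v_i,x)$-path, so $d_G(v_i,x_i) = r_i - 1$ and $x_i \in D(x,1) \subseteq D(w,1)$, yielding $d_G(v_i,w) \leq (r_i-1) + 1 = r_i$. Either way $w \in D_G(v_i,r_i)$, and since $w \neq x$, we have $w \in \bigcap_i D_{G'}(v_i,r_i)$, completing the proof.

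The only real subtlety is the $y = x$ sub-case above; everything else is a routine path-swap or Helly application. The argument uses the containment $D(w,1) \supseteq D(x,1)$ in two distinct ways — once to reroute shortest paths through $w$ instead of $x$, and once to transfer a penultimate neighbor of $x$ on a geodesic to a neighbor of $w$ — so it is worth highlighting that the same domination hypothesis drives both halves of the statement.
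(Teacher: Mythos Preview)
Your proof is correct. The paper does not actually prove this lemma; it is imported from \cite{FDraganPhD,Bandelt:1989:DAR:72175.72177} and merely stated. So there is no in-paper argument to compare against, but your write-up is exactly the standard proof one finds for this fact.

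A couple of minor comments for polish. In the isometry step, when you argue that $a=w$ leads to a contradiction, you are implicitly using that $a\neq b$; this is immediate since $P$ is a shortest path and $x$ lies strictly between them, but it is worth a word. In the Helly step, the case split on whether $d_G(v_i,x)\le r_i-1$ or $d_G(v_i,x)=r_i$ is really unnecessary: the second argument (take the penultimate vertex $x_i$ on a shortest $(v_i,x)$-path, which lies in $D(x,1)\subseteq D(w,1)$) already covers both cases, since when $d_G(v_i,x)<r_i$ you still have $d_G(v_i,w)\le d_G(v_i,x_i)+1\le r_i$. You could also simply note that $d_G(x,w)\le 1$ directly from $x\in D(x,1)\subseteq D(w,1)$, which handles the first case in one line. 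None of this affects correctness.
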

%

For a graph $G$ and a subset $M\subseteq V(G)$, the eccentricity function $e_G^M(\cdot)$ is called {\em unimodal} if every vertex $v \in V(G) \setminus C_G(M)$ has a
neighbor~$u$ such that $\ecc_G^M(u) < \ecc_G^M(v)$. 

\begin{lemma}{\bf\cite{FDraganPhD,DraganCenters}}\label{lem:unimodalHelly}
A graph $G$ is Helly if and only if the eccentricity function $e_G^M(\cdot)$ is unimodal on $G$ for every $M \subseteq V(G)$. 
\end{lemma}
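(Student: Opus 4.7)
The plan is to prove both directions separately. For the forward direction, assume $G$ is Helly, fix any $M \subseteq V(G)$, and take $v \in V(G) \setminus C_G(M)$. Setting $r := \ecc_G^M(v)$, the assumption $v \notin C_G(M)$ gives $r \ge rad_G(M) + 1$. The key step is to form the family
\[
\calF := \{D_G(v, 1)\} \cup \{D_G(m, r-1) : m \in M\}
\]
and verify pairwise intersection. For any two disks indexed by $M$, a fixed center $c^* \in C_G(M)$ witnesses $c^* \in D_G(m_1, r-1) \cap D_G(m_2, r-1)$ since $d_G(c^*, m_i) \le rad_G(M) \le r - 1$. For $D_G(v,1)$ against $D_G(m, r-1)$, the neighbor of $v$ lying on a shortest $(v, m)$-path sits at distance at most $r - 1$ from $m$. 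Applying the Helly property of $G$, I obtain a vertex $u$ with $d_G(u, v) \le 1$ and $d_G(u, m) \le r - 1$ for every $m \in M$. Necessarily $u \ne v$ (since the eccentricities differ), so $u$ is a neighbor of $v$ with $\ecc_G^M(u) \le r - 1 < \ecc_G^M(v)$, establishing unimodality.

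For the converse, assume $\ecc_G^M(\cdot)$ is unimodal on $G$ for every $M \subseteq V(G)$, and suppose toward a contradiction that $G$ is not Helly. I would choose a family $\calF = \{D_G(v_i, r_i) : i \in I\}$ of pairwise intersecting disks with empty common intersection, minimizing $\sum_{i \in I} r_i$. Since pairwise intersecting singletons already share a common vertex, $\sum_i r_i \ge 1$, and minimality forces that decrementing any positive $r_i$ by one destroys pairwise intersection, yielding tight equalities $d_G(v_i, v_j) = r_i + r_j$ for the offending pairs. I would then translate this rigid Helly violation into a choice of $M \subseteq V(G)$ for which $\ecc_G^M$ exhibits a local-but-not-global minimum: concretely, a vertex $v$ with $\ecc_G^M(v) > rad_G(M)$ all of whose neighbors have $\ecc_G^M$-value at least $\ecc_G^M(v)$, contradicting the hypothesis.

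The main obstacle is the converse direction, because a Helly violation involves disks with individual radii $r_i$ whereas $\ecc_G^M$ encodes only uniform max-distances to vertices of $M$. Bridging the two requires the minimal-counterexample reduction above, combined with a careful choice of $M$ --- roughly the vertices $v_i$ together with additional witnesses placed on the tight shortest paths between them --- so that the failure of $\calF$ to share a common point translates verbatim into the existence of a non-center vertex from which no neighbor gives strict $M$-eccentricity descent.
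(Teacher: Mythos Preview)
The paper does not prove this lemma; it is quoted from \cite{FDraganPhD,DraganCenters} as a known characterization, so there is no in-paper argument to compare against. Evaluating your proposal on its own merits:

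Your forward direction is correct and is the standard argument: the family $\{D_G(v,1)\}\cup\{D_G(m,r-1):m\in M\}$ is pairwise intersecting (a center $c^*\in C_G(M)$ witnesses the $M$-indexed pairs, and a first step toward $m$ witnesses the mixed pairs), and the Helly property yields the desired neighbor.

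Your converse direction, however, is not a proof but an outline that stops precisely at the hard step. You reduce to a minimal family $\{D_G(v_i,r_i)\}$ and correctly deduce that for each $i$ with $r_i>0$ there is some $j$ with $d_G(v_i,v_j)=r_i+r_j$. But then you say you would ``translate this rigid Helly violation into a choice of $M$'' by taking ``roughly the vertices $v_i$ together with additional witnesses placed on the tight shortest paths between them''. This is where the argument breaks down: the eccentricity function $\ecc_G^M$ encodes only a \emph{single} radius (the common value $\max_{m\in M} d_G(\cdot,m)$), whereas your Helly obstruction involves \emph{different} radii $r_i$. In an arbitrary graph you cannot, in general, manufacture for each $v_i$ a witness $w_i$ at prescribed distance $r-r_i$ from $v_i$ so that $D_G(w_i,r)=D_G(v_i,r_i)$ or anything like it; shortest paths need not extend, and disks of different radii around different centers need not be expressible as equal-radius disks around shifted centers. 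Without this translation, you have no candidate $M$ and no local-minimum vertex, so the contradiction never materializes. The sketch as written does not close this gap, and the phrase ``additional witnesses placed on the tight shortest paths'' does not specify a construction that works.

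In short: the forward implication is fine; the converse requires a genuinely different idea than the one you gesture at, and your proposal does not supply it.
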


The following two results were earlier proven in~\cite{FDraganPhD}  only for $M=V(G)$ and then later extended in~\cite{newDDG2020} to all  $M \subseteq V(G)$. 

\begin{lemma}{\bf\cite{newDDG2020}}\label{lem:mDiamRadInHelly}
Let $G$ be a Helly graph. For any $M \subseteq V(G)$,
$2rad_G(M) - 1 \le diam_G(M) \le 2rad_G(M)$. Moreover, $rad_G(M) = \lceil diam_G(M) / 2 \rceil$.
\end{lemma}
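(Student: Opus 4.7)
The plan is to prove both bounds and then derive the ceiling formula. The upper bound $\mathrm{diam}_G(M) \le 2\,\mathrm{rad}_G(M)$ is immediate from the triangle inequality: if $v \in V(G)$ realizes $\mathrm{rad}_G(M)$, i.e.\ $\max_{u\in M} d_G(v,u) = \mathrm{rad}_G(M)$, then for any $x,y \in M$ we have $d_G(x,y) \le d_G(x,v) + d_G(v,y) \le 2\,\mathrm{rad}_G(M)$. Taking the maximum over $x,y \in M$ gives the bound. This step uses nothing about the Helly property and serves as the ``easy direction.''

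The heart of the argument is the lower bound $\mathrm{diam}_G(M) \ge 2\,\mathrm{rad}_G(M) - 1$, or equivalently $\mathrm{rad}_G(M) \le \lceil \mathrm{diam}_G(M)/2 \rceil$. Set $D := \mathrm{diam}_G(M)$ and $r := \lceil D/2 \rceil$, and consider the family of disks $\calF = \{ D_G(x, r) : x \in M \}$. For any two centers $x,y \in M$, we have $d_G(x,y) \le D \le 2r$, so $D_G(x,r) \cap D_G(y,r) \neq \emptyset$. Thus $\calF$ is a pairwise intersecting family of disks in the Helly graph $G$, and the Helly property yields a vertex $v \in \bigcap_{x \in M} D_G(x, r)$. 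This $v$ satisfies $\ecc_G^M(v) \le r$, so $\mathrm{rad}_G(M) \le r = \lceil D/2 \rceil$.

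Combining the two bounds: from $\mathrm{diam}_G(M) \le 2\,\mathrm{rad}_G(M)$ we get $\lceil \mathrm{diam}_G(M)/2 \rceil \le \mathrm{rad}_G(M)$, and from the Helly argument we have the reverse inequality. Hence $\mathrm{rad}_G(M) = \lceil \mathrm{diam}_G(M)/2 \rceil$, which immediately gives the two-sided inequality $2\,\mathrm{rad}_G(M) - 1 \le \mathrm{diam}_G(M) \le 2\,\mathrm{rad}_G(M)$ by splitting on the parity of $\mathrm{diam}_G(M)$.

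The only subtle point is recognizing that the Helly property is precisely the right tool to turn the pairwise distance bound $d_G(x,y) \le 2r$ (which says pairs of $r$-disks intersect) into a single vertex simultaneously close to all of $M$. Once this translation is made, the proof is essentially a three-line application of the defining property of Helly graphs; there is no real obstacle, and in particular no need to invoke Lemma~\ref{lem:allCentersHellyIsom} or Lemma~\ref{lem:removeDominatingVertex}.
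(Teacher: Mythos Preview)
Your proof is correct. The paper does not supply its own proof of this lemma; it is quoted as a known result from~\cite{newDDG2020} (extending the case $M=V(G)$ from~\cite{FDraganPhD}), so there is no in-paper argument to compare against. Your direct application of the Helly property to the family $\{D_G(x,\lceil \mathrm{diam}_G(M)/2\rceil):x\in M\}$ is the standard and expected route, and the derivation of the two-sided inequality from $\mathrm{rad}_G(M)=\lceil \mathrm{diam}_G(M)/2\rceil$ is clean.
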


\begin{lemma}{\bf\cite{newDDG2020}}\label{lem:formula}
Let $G$ be a Helly graph. For every $v\in V(G)$ and $M \subseteq V(G)$, $\ecc_G^M(v)=d_G(v,C_G(M))+rad_G(M)$ holds.
\end{lemma}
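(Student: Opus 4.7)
The plan is to prove the equality by establishing two inequalities, one via a triangle-inequality argument and the other via the unimodality of the eccentricity function on Helly graphs (Lemma~\ref{lem:unimodalHelly}).

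For the upper bound $\ecc_G^M(v) \le d_G(v,C_G(M)) + rad_G(M)$, I would pick any vertex $c \in C_G(M)$ realizing $d_G(v,C_G(M)) = d_G(v,c)$ and any vertex $u \in M$ realizing $\ecc_G^M(v) = d_G(v,u)$. Then by the triangle inequality
\[
\ecc_G^M(v) = d_G(v,u) \le d_G(v,c) + d_G(c,u) \le d_G(v,C_G(M)) + \ecc_G^M(c) = d_G(v,C_G(M)) + rad_G(M).
\]
Note that this direction does not use the Helly property at all.

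For the lower bound $\ecc_G^M(v) \ge d_G(v,C_G(M)) + rad_G(M)$, the idea is to walk down the eccentricity landscape of $\ecc_G^M(\cdot)$ using unimodality. Let $v_0 := v$. Since $G$ is Helly, Lemma~\ref{lem:unimodalHelly} tells us that $\ecc_G^M(\cdot)$ is unimodal, so as long as $v_i \notin C_G(M)$, there exists a neighbor $v_{i+1}$ with $\ecc_G^M(v_{i+1}) < \ecc_G^M(v_i)$, hence $\ecc_G^M(v_{i+1}) \le \ecc_G^M(v_i) - 1$. Iterating, after at most $k := \ecc_G^M(v) - rad_G(M)$ steps we reach a vertex $v_k$ with $\ecc_G^M(v_k) \le rad_G(M)$, i.e., $v_k \in C_G(M)$. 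The resulting walk is of length at most $k$, so
\[
d_G(v, C_G(M)) \le k = \ecc_G^M(v) - rad_G(M),
\]
which rearranges to the desired lower bound. Combining the two inequalities yields the claimed identity.

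I do not foresee a serious obstacle here: the upper bound is immediate and the lower bound is a direct application of Lemma~\ref{lem:unimodalHelly}, with the only subtlety being to invoke unimodality in the $M$-relative form (which is exactly what Lemma~\ref{lem:unimodalHelly} provides, since it is stated for arbitrary $M \subseteq V(G)$). The result is essentially a packaging of unimodality into a structural identity.
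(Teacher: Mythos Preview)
Your proof is correct. The paper does not actually give its own proof of this lemma; it is stated with a citation to~\cite{newDDG2020} (extending the earlier $M=V(G)$ case from~\cite{FDraganPhD}) and used as a black box, so there is no in-paper argument to compare against. Your approach---the triangle inequality for the upper bound and a descent along the unimodal $M$-eccentricity landscape (Lemma~\ref{lem:unimodalHelly}) for the lower bound---is the standard one and exactly what the cited works rely on.
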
 

\begin{corollary}\label{cor:helly-center-to-disk}
For every Helly graph $G$, any subset $M\subseteq V(G)$ and any integers $\ell\ge 0$ and $k\ge 0$, 
$C^{\ell+k}_G(M)=D_G(C^{k}_G(M)+\ell)=D_G(C_G(M)+k+\ell).$ Furthermore, $diam_G(C^{\ell+k}_G(M))\le diam_G(C^{k}_G(M))+2\ell.$
\end{corollary}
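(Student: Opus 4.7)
The plan is to derive the three equalities and the diameter bound from Lemma~\ref{lem:formula}, which gives the key formula $\ecc_G^M(v)=d_G(v,C_G(M))+rad_G(M)$ valid in any Helly graph.

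First, I would apply this formula to rewrite the set $C^j_G(M)$ for any $j\ge 0$: a vertex $v$ lies in $C^j_G(M)$ if and only if $\ecc_G^M(v)\le rad_G(M)+j$, which by the formula is equivalent to $d_G(v,C_G(M))\le j$, i.e., $v\in D_G(C_G(M),j)$. Taking $j=k+\ell$ gives immediately $C^{k+\ell}_G(M)=D_G(C_G(M),k+\ell)$, which is the third expression in the claim. Taking $j=k$ likewise gives $C^{k}_G(M)=D_G(C_G(M),k)$, a fact I will reuse.

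Next, I would establish $D_G(C^{k}_G(M),\ell)=D_G(C_G(M),k+\ell)$ by a two-sided inclusion using shortest paths, with no further use of the Helly property. For the forward inclusion, any $v\in D_G(C^{k}_G(M),\ell)$ has some $u\in C^{k}_G(M)$ with $d_G(v,u)\le\ell$ and some $c\in C_G(M)$ with $d_G(u,c)\le k$, so $d_G(v,C_G(M))\le k+\ell$ by the triangle inequality. For the reverse inclusion, given $v$ with $d_G(v,C_G(M))\le k+\ell$, take a shortest path from $v$ to a nearest $c\in C_G(M)$ and let $u$ be the vertex on this path at distance $\min(\ell,d_G(v,c))$ from $v$; then $d_G(v,u)\le \ell$ and $d_G(u,C_G(M))\le k$, so $u\in C^{k}_G(M)=D_G(C_G(M),k)$ by the previous paragraph, and hence $v\in D_G(C^{k}_G(M),\ell)$. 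Combined with the first step, this yields the chain of equalities.

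Finally, the diameter bound follows by a direct triangle inequality estimate. Given $u,v\in C^{k+\ell}_G(M)=D_G(C^{k}_G(M),\ell)$, pick $u',v'\in C^{k}_G(M)$ with $d_G(u,u')\le \ell$ and $d_G(v,v')\le \ell$. Then
\[
d_G(u,v)\le d_G(u,u')+d_G(u',v')+d_G(v',v)\le \ell + diam_G(C^{k}_G(M)) + \ell,
\]
and taking the maximum over $u,v$ gives $diam_G(C^{k+\ell}_G(M))\le diam_G(C^{k}_G(M))+2\ell$. I do not anticipate any real obstacle here: the entire argument is a bookkeeping exercise once Lemma~\ref{lem:formula} collapses $C^{j}_G(M)$ into a disk around $C_G(M)$; the only step requiring mild care is picking the intermediate vertex $u$ on a shortest path to witness the equality $D_G(C^{k}_G(M),\ell)=D_G(C_G(M),k+\ell)$.
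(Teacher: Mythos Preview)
Your proposal is correct and follows essentially the same approach as the paper: both use Lemma~\ref{lem:formula} to identify $C^j_G(M)$ with $D_G(C_G(M),j)$, then pick an intermediate vertex on a shortest path to establish $D_G(C^{k}_G(M),\ell)=D_G(C_G(M),k+\ell)$, and finish the diameter bound by the same triangle-inequality estimate. Your write-up is in fact a bit cleaner, since you state the set equalities directly via inequalities rather than arguing at the boundary with exact eccentricity values as the paper does.
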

\begin{proof} Consider a vertex $v$ with $\ecc_G^M(v)=k+\ell+rad_G(M)$ and a vertex $c$ in $C_G(M)$ closest to $v$. By Lemma~\ref{lem:formula}, $\ecc_G^M(v)=d_G(v,C_G(M))+rad_G(M)=d_G(v,c)+rad_G(M)=
k+\ell+rad_G(M)$. Hence, for every vertex $v$, $d_G(v,C_G(M))=k+\ell$ if and only if $\ecc_G^M(v)=k+\ell+rad_G(M)$.  

Let $u$ be a vertex on a shortest path from $v$ to $c$ at distance $k$ from $c$. By Lemma~\ref{lem:formula}, $\ecc_G^M(u)=d_G(u,c)+rad_G(M)=k+rad_G(M)$ and hence  $\ecc_G^M(v)=k+\ell+rad_G(M)=\ecc_G^M(u)+\ell$.  
Therefore, $\ecc_G^M(v)=k+\ell+rad_G(M)$ if and only if   $d_G(v,C^k_G(M))=d_G(v,u)=\ell$.

Let $x,y$ be vertices of $C^{\ell+k}_G(M)$ with $d_G(x,y)=diam_G(C^{\ell+k}_G(M))$. Since $d_G(v,C^k_G(M))\le \ell$ for each $v\in \{x,y\}$, by the triangle inequality, we have $d_G(x,y)\le d_G(x,C^k_G(M))+diam_G(C^k_G(M))+d_G(y,C^k_G(M))\le diam_G(C^k_G(M))+2\ell.$  
\end{proof}

\section{Distances in injective hulls characterize weakly-Helly graphs}\label{inj-hull}
Recall that the \emph{injective hull} of $G$, denoted by $\calH(G)$, is a minimal Helly graph which contains $G$ as an isometric subgraph~\cite{Isbell1964/65,DRESS1984321}.
It turns out that~$\calH(G)$ is unique for every~$G$~\cite{Isbell1964/65}.
When $G$ is known by context, we often let $H := \calH(G)$.
By an equivalent definition of an injective hull~\cite{DRESS1984321} (also called a tight span),
each vertex $f \in V(\calH(G))$ can be represented as a vector with nonnegative integer values $f(x)$ for each $x \in V(G)$,
such that the following two properties hold:
\begin{equation} \label{eq:atLeastDistance}
\forall x,y \in V(G) \ f(x) + f(y) \geq d_G(x,y) 
\end{equation}
\vspace*{-3mm}
\begin{equation} \label{eq:extremalFunctions}
\forall x \in V(G) \ \exists y \in V(G) \ f(x) + f(y) = d_G(x,y) 
\end{equation}

Additionally, there is an edge between two vertices $f,g \in V(\calH(G))$ if and only if their Chebyshev distance is 1, i.e., $\max_{x \in V(G)} \lvert f(x) -g(x) \rvert = 1$.
Thus, $d_H(f,g) = max_{x \in V(G)} \lvert f(x) - g(x) \rvert$.
Notice that if $f \in V(\calH(G))$, then $\{D(x, f(x)) : x \in V(G)\}$ is a family of pairwise intersecting disks.
For a vertex~$z \in V(G)$, define the distance function $d_z$ by setting $d_z(x)=d_G(z,x)$ for any $x \in V(G)$.
By the triangle inequality, each $d_z$ belongs to $V(\calH(G))$.
An isometric embedding of~$G$ into~$\calH(G)$ is obtained by mapping each vertex~$z$ of~$G$ to its distance vector $d_z$.

We classify every vertex $v$ in $V(\calH(G))$ as either a real vertex or a Helly vertex.
A vertex~$f \in V(\calH(G))$ is a \emph{real vertex} provided $f=d_z$ for some $z \in V(G)$, i.e., there is a one-to-one correspondence between $z \in V(G)$ and its representative real vertex $f \in V(\calH(G))$ which uniquely satisfies $f(z) = 0$ and $f(x) = d_G(z,x)$ for all $x \in V(G)$.
When working with~$\calH(G)$, we will use interchangeably the notation $V(G)$ to represent the vertex set in~$G$ as well as the vertex subset of~$\calH(G)$ which uniquely corresponds to the vertex set of~$G$.
Then, a vertex~$v \in V(\calH(G))$ is a real vertex if it belongs to $V(G)$ and a \emph{Helly vertex} otherwise.
Equivalently, a vertex~$h \in V(\calH(G))$ is a Helly vertex provided that $h(x) \ge 1$ for all $x \in V(G)$, that is, a Helly vertex exists only in the injective hull~$\calH(G)$ and not in~$G$.
We will show in Theorem~\ref{thm:closeRealVertex} that a graph $G$ is $\whp$-weakly-Helly if and only if the distance from any Helly vertex in $\calH(G)$ to a closest real vertex in $V(G)$ is no more than $\whp$. We recently learned that this result was independently discovered by Chalopin et al.~\cite{chalopin2020helly}. They use name {\em coarse Helly property} for  $\whp$-weakly-Helly property used here.   

The following properties will be useful to the main result of this section and to later sections. 
A vertex $x$ is called a \emph{peripheral} vertex if
$I(y,x) \not\subset I(y,z)$ for some vertex $y$ and all vertices $z \neq x$.
We show next that the peripheral vertices of $\calH(G)$ are real vertices.
This adheres to the intuitive notion that an injective hull contains all of the Helly vertices ``between'' the vertices of 
$G$, so that the outermost vertices of~$\calH(G)$ are real.

\begin{proposition} \label{prop:outsidePointsAreReal}
  Peripheral vertices of $\calH(G)$ are real.
\end{proposition}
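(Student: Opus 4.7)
The plan is to prove the contrapositive: if $f$ is a Helly vertex of $\calH(G)$, then $f$ is not peripheral. The first observation to make is that $f$ being peripheral via a vertex $y$ is equivalent to the statement that no neighbor $f'$ of $f$ in $\calH(G)$ has $d_{\calH(G)}(y,f') > d_{\calH(G)}(y,f)$. This is because $I(y,f) \subset I(y,z)$ holds if and only if $f \in I(y,z)$, i.e., $d(y,z)=d(y,f)+d(f,z)$; any $z\neq f$ satisfying this produces a neighbor of $f$ on the shortest path from $f$ to $z$ at distance $d(y,f)+1$ from $y$, and conversely any such neighbor plays the role of $z$.

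Set $H:=\calH(G)$, let $f$ be a Helly vertex (so $f\notin V(G)$), and suppose for contradiction that $f$ is peripheral via some $y\in V(H)$. Necessarily $y\neq f$, so $k:=d_H(y,f)\ge 1$. The core step is to exhibit a vertex $z\neq f$ satisfying $D_H(z,1)\supseteq D_H(f,1)$. Once such $z$ is found, Lemma~\ref{lem:removeDominatingVertex} implies that $H-\{f\}$ is Helly and an isometric subgraph of $H$; since $f\notin V(G)$, the set $V(G)$ is still isometric in $H-\{f\}$, contradicting the minimality of $\calH(G)$ as a Helly graph containing $G$ isometrically.

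To produce $z$, I would invoke the Helly property of $H$ on the family
\[
\calF \;=\; \{\, D_H(f',1) : f'\in D_H(f,1)\,\}\;\cup\;\{\, D_H(y,k-1)\,\}.
\]
Any two disks $D_H(f',1)$ and $D_H(f'',1)$ with $f',f''\in D_H(f,1)$ share the vertex $f$. For an intersection of the form $D_H(f',1)\cap D_H(y,k-1)$, peripherality yields $d_H(y,f')\le k$; if $d_H(y,f')\le k-1$ then $f'$ itself lies in the intersection, and otherwise $d_H(y,f')=k$, in which case a neighbor of $f'$ on a shortest path from $y$ to $f'$ lies at distance $k-1$ from $y$ and $1$ from $f'$. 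Hence $\calF$ is pairwise intersecting, and the Helly property yields $z\in V(H)$ with $d_H(z,f')\le 1$ for every $f'\in D_H(f,1)$ and $d_H(z,y)\le k-1 < k$; the strict inequality forces $z\neq f$, while the former condition is exactly $D_H(f,1)\subseteq D_H(z,1)$.

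The main step requiring care is the subcase $d_H(y,f')=k$ of the pairwise-intersection verification, where one must select the appropriate penultimate vertex on a shortest path from $y$ to $f'$ and argue it is within distance $1$ of $f'$ and distance $k-1$ of $y$. Apart from that, the argument is a direct application of the Helly property of $\calH(G)$, Lemma~\ref{lem:removeDominatingVertex}, and the minimality of the injective hull.
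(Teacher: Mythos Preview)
Your argument is correct and follows essentially the same approach as the paper: both apply the Helly property to the family $\{D_H(y,k-1)\}\cup\{D_H(f',1):f'\in D_H(f,1)\}$ to obtain a vertex $z\neq f$ whose unit disk dominates that of $f$, and then invoke Lemma~\ref{lem:removeDominatingVertex} together with the minimality of $\calH(G)$. Your write-up is in fact more careful than the paper's, which simply asserts that these disks are pairwise intersecting; your explicit verification of the case $d_H(y,f')=k$ fills a detail the paper omits.
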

\begin{proof}
  By contradiction, suppose there is a peripheral Helly vertex $u \in V(\calH(G)) \setminus V(G)$.
  By definition, there is a vertex~$s \in V(\calH(G))$ such that, for all $x \in V(\calH(G))$, $I(s,u) \not\subset I(s,x)$.
  Let $k := d(u,s)$.
  Consider pairwise intersecting disks $D(s, k-1)$ and $D(x,1)$ for each $x \in D(u,1)$.
  By the Helly property, there exists vertex $w \in V(\calH(G))$ with $d(w,s)=k-1$ and $D(w,1) \supseteq D(u,1)$.
  By Lemma~\ref{lem:removeDominatingVertex}, $\calH(G) - \{u\}$ is Helly and is an isometric subgraph of $\calH(G)$.
  Since~$u$ is a Helly vertex, $G$ is an isometric subgraph of $\calH(G) - \{u\}$, a contradiction with the minimality of $\calH(G)$.
\end{proof}

Moreover, we show that any shortest path of $\calH(G)$ is a subpath of a shortest path between real vertices, which will later prove a useful property of injective hulls.

\begin{proposition} \label{prop:shortestPathSubsetOfReal}
Let $H$ be the injective hull of $G$.
For any shortest path $P(x,y)$, where $x,y \in V(H)$, there is a shortest path $P(x',y')$, where $x',y' \in V(G)$ such that $P(x',y') \supseteq P(x,y)$.
\end{proposition}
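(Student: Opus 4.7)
The plan is to extend $P(x,y)$ maximally at both endpoints until we reach peripheral vertices, which lie in $V(G)$ by Proposition~\ref{prop:outsidePointsAreReal}.

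First, I would extend on the $y$-side. Consider the (nonempty, finite) set
$\mathcal{S}_y := \{z \in V(H) : I(x,z) \supseteq I(x,y)\}$,
and pick $y' \in \mathcal{S}_y$ that maximizes $d_H(x, y')$. I claim that $y'$ is peripheral with $x$ as its anchor: if, on the contrary, $I(x, y') \subsetneq I(x, u)$ for some $u \neq y'$, then $I(x, y) \subseteq I(x, y') \subsetneq I(x, u)$, so $u \in \mathcal{S}_y$, and moreover $y' \in I(x, u) \setminus I(x, y')$ forces $d_H(x, u) = d_H(x, y') + d_H(y', u) > d_H(x, y')$, contradicting the maximality of $y'$. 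Thus $y' \in V(G)$. Because $y \in I(x, y')$, concatenating $P(x,y)$ with any shortest $(y, y')$-path yields a shortest path $P(x, y')$ having $P(x, y)$ as an initial subpath.

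Second, I would perform the symmetric extension on the $x$-side of $P(x, y')$: pick $x' \in V(H)$ maximizing $d_H(y', x')$ subject to $I(y', x') \supseteq I(y', x)$. The same argument, now with anchor $y'$, shows that $x'$ is peripheral and hence real, and concatenation with any shortest $(x', x)$-path extends $P(x, y')$ to a shortest path $P(x', y')$ that still contains the original $P(x, y)$ as a subpath.

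The main obstacle is ensuring that the specific anchor inherited from the maximal choice (namely $x$ in Step~1 and $y'$ in Step~2) truly witnesses peripherality in the sense of the definition; this reduces to the elementary interval fact that $I(a, b) \subsetneq I(a, c)$ forces $d_H(a, c) > d_H(a, b)$, which I verified above and which makes both extension steps go through cleanly. Finiteness of $V(H)$ guarantees that the maxima defining $y'$ and $x'$ are attained.
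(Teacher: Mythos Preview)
Your argument is correct and follows the same route as the paper: pick $y'$ at maximal distance from $x$ subject to $I(x,y)\subseteq I(x,y')$, observe that this makes $y'$ peripheral with anchor $x$, apply Proposition~\ref{prop:outsidePointsAreReal}, and repeat on the other side with anchor $y'$. One slip to fix: the clause ``$y'\in I(x,u)\setminus I(x,y')$'' is literally false (always $y'\in I(x,y')$); what you need---and what your displayed equation actually uses---is simply $y'\in I(x,u)$ together with $u\neq y'$, both of which follow immediately from $I(x,y')\subsetneq I(x,u)$.
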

\begin{proof}
If $x$ and $y$ are both real vertices, then the proposition is trivially true.
Without loss of generality, let $y$ be a Helly vertex.
Consider a breadth-first search layering where $y$ belongs to layer $L_i$ of BFS($H, x$).
Let $y' \in L_k$ be a vertex with $y \in I(x,y')$ that maximizes $k=d_H(x,y')$.
Then, for any vertex $z \in V(\calH(G))$, $I(x,y') \not\subset I(x,z)$.
By Proposition~\ref{prop:outsidePointsAreReal}, $y' \in V(G)$.
If $x \notin V(G)$, then applying the previous step using BFS($H, y'$) yields vertex $x' \in V(G)$.
\end{proof}

We are now ready to prove the main result of this section.

\begin{theorem} \label{thm:closeRealVertex}
For any vertex $h \in V(\calH(G))$ there is a real vertex $v \in V(G)$ such that $d_{\calH(G)}(h,v) \leq \whp$ if and only if $G$ is an $\whp$-weakly-Helly graph.
\end{theorem}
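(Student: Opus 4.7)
The plan is to prove both directions using the vector representation of vertices in $\calH(G)$ together with the characterization $d_H(f,g) = \max_{x\in V(G)} |f(x)-g(x)|$ of distances in the injective hull. In particular, for a real vertex $v \in V(G)$ viewed inside $\calH(G)$, its associated vector is $d_v(x) = d_G(v,x)$, so for any $h \in V(\calH(G))$ we have $d_H(h,v) = \max_{x\in V(G)} |h(x) - d_G(v,x)|$. The goal therefore reduces to controlling both $h(x) - d_G(v,x)$ and $d_G(v,x) - h(x)$ uniformly in $x$.

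For the forward direction, I would start with a family $\calF = \{D_G(v_i, r_i) : i \in S\}$ of pairwise intersecting disks in $G$. Since $G$ embeds isometrically into $\calH(G)$, the corresponding disks in $\calH(G)$ also pairwise intersect, so by the Helly property in $\calH(G)$ they share a common vertex $h \in V(\calH(G))$. By hypothesis there is a real $v \in V(G)$ with $d_H(h,v) \le \whp$, and then the triangle inequality together with the isometric embedding yields $d_G(v_i,v) = d_H(v_i,v) \le d_H(v_i,h) + d_H(h,v) \le r_i + \whp$, so $v \in \bigcap_i D_G(v_i, r_i + \whp)$, proving the $\whp$-weakly-Helly property.

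For the backward direction, given a vertex $h \in V(\calH(G))$, I would consider the natural family of disks $\calF_h = \{D_G(x, h(x)) : x \in V(G)\}$ in $G$. Property (\ref{eq:atLeastDistance}) gives $h(x)+h(y) \ge d_G(x,y)$, which is exactly the condition that $D_G(x,h(x))$ and $D_G(y,h(y))$ intersect (as noted in the paper). So $\calF_h$ is pairwise intersecting, and applying the $\whp$-weakly-Helly hypothesis produces a vertex $v \in V(G)$ with $d_G(v,x) \le h(x) + \whp$ for every $x \in V(G)$. This immediately gives one side of the Chebyshev bound, namely $d_G(v,x) - h(x) \le \whp$.

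The main obstacle, and the step that genuinely uses both defining properties of a vertex of $\calH(G)$, is establishing the reverse inequality $h(x) - d_G(v,x) \le \whp$ for every $x$. My plan here is to invoke the extremality property (\ref{eq:extremalFunctions}): for each $x$, pick $y = y(x) \in V(G)$ with $h(x) + h(y) = d_G(x,y)$. Combining this equality with the triangle inequality $d_G(x,y) \le d_G(x,v) + d_G(v,y)$ and with the already-proved bound $d_G(v,y) \le h(y) + \whp$ yields $h(x) + h(y) \le d_G(x,v) + h(y) + \whp$, i.e., $h(x) - d_G(v,x) \le \whp$, as required. Thus $d_H(h,v) = \max_x |h(x) - d_G(v,x)| \le \whp$. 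Without property (\ref{eq:extremalFunctions}), the weakly-Helly hypothesis alone would only bound one side of the Chebyshev distance; the fact that every vertex of the tight span is \emph{extremal} in the sense of (\ref{eq:extremalFunctions}) is precisely what lets us bootstrap the one-sided bound into a two-sided one.
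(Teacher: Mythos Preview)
Your proof is correct and follows essentially the same approach as the paper's: both directions proceed identically, and for the key step---bounding $h(x)-d_G(v,x)\le \whp$---you use extremality (\ref{eq:extremalFunctions}) together with the triangle inequality and the already-established bound $d_G(v,y)\le h(y)+\whp$, exactly as the paper does (the paper phrases this step as a contradiction argument, but the content is the same).
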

\begin{proof}
Suppose any Helly vertex in $H := \calH(G)$ is within distance at most $\whp$ from a vertex of $G$.
Consider in~$G$ a family of pairwise intersecting disks $\calF_G = \{ D_G(v, r(v)) : v \in S \subseteq V(G)\}$.
As~$H$ contains $G$ as an isometric subgraph, the disks $\calF_H = \{ D_H(v, r(v) : v \in S \}$ are also pairwise intersecting in~$H$.
By the Helly property, in~$H$ there is a vertex $x \in \bigcap_{v \in S} D_H(v, r(v))$.
By assumption, there is a vertex $u \in V(G)$ such that $d_H(u,x) \leq \whp$.
Thus $u \in \bigcap_{v \in S} D_G(v, r(v)+\whp)$ and so $G$ is $\whp$-weakly-Helly.

Assume now that $G$ is $\whp$-weakly-Helly.
Let $h \in V(H)$ be an arbitrary vertex represented as a vector with nonnegative integer values $h(x)$ for each $x \in V(G)$ satisfying conditions (\ref{eq:atLeastDistance}) and (\ref{eq:extremalFunctions}) from the definition of an injective hull.
Then, $\{D_G(x, h(x)) : x \in V(G)\}$ is a family of pairwise intersecting disks in~$G$.
By the $\whp$-weakly-Helly property, there is a real vertex $z \in V(G)$ belonging to $\bigcap_{x \in V(G)} D_G(x, h(x) + \whp)$.
To establish that $d_H(z,h) \leq \whp$ and complete the proof, we will show that $\max_{t \in V(G)} \lvert z(t) - h(t) \rvert \leq \whp$.

First, we show that $z(t) - h(t) \leq \whp$ for all $t \in V(G)$.
As~$z$ is a real vertex, by definition, $z(t) = d_G(z,t)$ and, by the $\whp$-weakly-Helly property  (recall that $z\in \bigcap_{x \in V(G)} D_G(x, h(x) + \whp)$), $z(t) \le h(t) + \whp$.
Next, we show that $h(t) - z(t) \leq \whp$ for all $t \in V(G)$.
Suppose that $h(x) - z(x) > \whp$ for some $x \in V(G)$.
Then, for all vertices $y \neq x$,
$h(x) + h(y) > z(x) + \whp + h(y) \geq z(x) + z(y) \geq d(x,y)$, a contradiction with condition (\ref{eq:extremalFunctions}).
\end{proof}

The injective hull is also useful to prove that $\whp$-weakly-Helly graphs are closed under pendant vertex addition. 
\begin{lemma}
Let $G+\{x\}$ be the graph obtained from $G$ by adding a vertex~$x$ pendant to any fixed vertex $v \in V(G)$.
Then, $\calH(G + \{x\}) = \calH(G) + \{x\}$.
\end{lemma}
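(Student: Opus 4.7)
The plan is to use the tight-span description of the injective hull from (\ref{eq:atLeastDistance}) and (\ref{eq:extremalFunctions}) to identify $V(\calH(G+\{x\}))$ with $V(\calH(G)) \cup \{x\}$, and then read off the graph structure from the Chebyshev metric. Set $H := \calH(G)$. For each $f \in V(H)$ (a function on $V(G)$), I extend it to $\tilde f$ on $V(G+\{x\})$ by $\tilde f(x) := f(v)+1$; separately, let $d_x$ denote the distance function of $x$ in $G+\{x\}$, so $d_x(y) = d_G(y,v)+1$ for $y \in V(G)$ and $d_x(x)=0$. The central claim to establish is
\[
V(\calH(G+\{x\})) \;=\; \{\tilde f : f \in V(H)\} \cup \{d_x\}.
\]

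For ``$\supseteq$'' I would do a routine verification: both (\ref{eq:atLeastDistance}) and (\ref{eq:extremalFunctions}) for $\tilde f$ follow from the corresponding conditions for $f$, with the extra constraint at $x$ reducing to the pair $(y,v)$ for $f$; in particular, extremality of $\tilde f$ at $x$ is equivalent to extremality of $f$ at $v$ in $V(G)$. The function $d_x$ is checked directly, with $z = x$ as witness at every coordinate.

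The main work goes into ``$\subseteq$''. Given an extremal $\phi$ on $V(G+\{x\})$, I split on $\phi(x)$. If $\phi(x) = 0$, then (\ref{eq:atLeastDistance}) forces $\phi(y) \ge d_G(y,v)+1$ strictly enough that no $V(G)$-witness can realize (\ref{eq:extremalFunctions}) at any $y \in V(G)$, pushing the witness to $x$ and yielding $\phi = d_x$. If $\phi(x) \ge 1$, set $g := \phi|_{V(G)}$; the plan is to show $\phi(x) = g(v)+1$ and $g \in V(H)$. The extremality witness of $\phi$ at $x$ must lie in $V(G)$ and gives $\phi(x) = 1 + \max_{y \in V(G)}(d_G(v,y) - g(y))$, which is at most $g(v)+1$ by (\ref{eq:atLeastDistance}). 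The extremality witness of $\phi$ at $v$ then delivers the matching lower bound: if that witness is $x$ then $\phi(v) = 0$ and $g = d_v$ (so $\phi(x) = 1 = g(v)+1$); otherwise a $V(G)$-witness at $v$ simultaneously gives $\phi(x) \ge g(v)+1$ and extremality of $g$ at $v$. Once $\phi(x) = g(v)+1$ is known, extremality of $g$ at any $y \ne v$ transfers from the extremality witness of $\phi$ at $y$ (either directly, if that witness lies in $V(G)$, or through the pair $(v,y)$, if it is $x$). Hence $g \in V(H)$ and $\phi = \tilde g$.

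Finally, I would compute the Chebyshev distances to recover the graph structure: for $f,g \in V(H)$, $\max_y|\tilde f(y)-\tilde g(y)| = \max(d_H(f,g),|f(v)-g(v)|) = d_H(f,g)$, so the subgraph on $\{\tilde f : f \in V(H)\}$ is exactly $H$; and the Chebyshev distance from $d_x$ to $\tilde f$ is at least $f(v)+1$, equal to $1$ precisely when $f = d_v$, so $x$ is pendant to the image of $v$, yielding $\calH(G+\{x\}) = H+\{x\}$. The main obstacle is the lower bound $\phi(x) \ge \phi(v)+1$ in the case $\phi(x) \ge 1$: the witness at $x$ alone only bounds $\phi(x)$ from above by $\phi(v)+1$, and the matching lower bound requires coupling that witness with the one at $v$.
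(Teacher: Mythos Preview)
Your proposal is correct and follows essentially the same route as the paper: both work with the tight-span description (\ref{eq:atLeastDistance})--(\ref{eq:extremalFunctions}) and establish that every Helly vertex $h$ of $\calH(G+\{x\})$ satisfies $h(x)=h(v)+1$, so that $h|_{V(G)}\in V(\calH(G))$. The only difference is in how the argument is closed off: the paper proves just the inclusion $V(\calH(G+\{x\}))\subseteq V(\calH(G)+\{x\})$ and then invokes ``minimality of $\calH(G)$'' for equality (leaving the edge structure implicit), whereas you verify both inclusions by hand and explicitly compute the Chebyshev distances to recover the graph; this makes your version a bit longer but more self-contained.
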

\begin{proof}
As~$x$ is pendant to~$v$, then all $u \in V(G)$ have $d_{G+\{x\}}(u,x) = d_G(u,v) + 1$.
Let $H_1 := \calH(G + \{x\})$ and $H_2 := \calH(G) + \{x\}$. Note that $H_2$ is a Helly graph containing $G+\{x\}$ as an isometric subgraph. 
We first show that any $h \in V(H_1)$ also belongs to $V(H_2)$.
The statement clearly holds if $h$ is a real vertex, so assume that $h$ is a Helly vertex of $H_1$ represented as a vector with nonnegative integer values for each $u \in V(G+\{x\})$
satisfying conditions (\ref{eq:atLeastDistance}) and (\ref{eq:extremalFunctions}) from the definition of an injective hull.
We will show $h$ also satisfies the conditions under $G$.
By condition (\ref{eq:atLeastDistance}) under $G+\{x\}$, and since $G$ is isometric in $G+\{x\}$,
for all $u,y \in V(G)$, $h(u) + h(y) \ge d_{G+\{x\}}(u,y) = d_G(u,y)$. Thus, $h$ satisfies condition  (\ref{eq:atLeastDistance}) in $G$.
By condition (\ref{eq:extremalFunctions}) under $G+\{x\}$,
for every $u \in V(G)$ there is a vertex $y \in V(G +\{x\})$ with $h(u) + h(y) = d_{G+\{x\}}(u,y)$.
We claim that if $y=x$, then $h(x)=h(v)+1$ and so vertex~$v$ also satisfies $h(u) + h(v) = h(u) + h(y) - 1 = d_{G+\{x\}}(u,y) - 1 = d_G(u,v)$.
On one hand, $h(x) \le h(v) + 1$ since $h(u) + h(x) = d_{G+\{x\}}(u,x) = d_G(u,v) + 1 \le h(u) + h(v) + 1$.
On the other hand, let $z \in V(G) \cup \{x\}$ be a vertex such that $h(z) + h(v) = d_{G+\{x\}}(z,v)$. Note that $z\neq x$ as $h$ is not a real vertex and therefore $h(v) = d_{H_1}(h,v) >0$ and $h(z) = d_{H_1}(h,z) >0$.
Then, $h(x) \ge d_{G+\{x\}}(z,x) - h(z) = d_{G+\{x\}}(z,v) + 1 - h(z) = h(v) + 1$.
With the claim established, $h$ satisfies condition (\ref{eq:extremalFunctions}) in $G$. Thus, $h \in V(H_2)$ and $V(H_1) \subseteq V(H_2)$.
By minimality of $\calH(G)$, $V(H_1)=V(H_2)$.
%
\end{proof}

\begin{corollary}\label{cor:weakHellyPreservedByPendant}
Let $G+\{x\}$ be the graph obtained from $G$ by adding a pendant vertex~$x$ adjacent to any fixed vertex $v \in V(G)$. Then, $\whp(G) = \whp(G+\{x\})$.
\end{corollary}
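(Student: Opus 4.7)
The plan is to combine the preceding lemma with the characterization of weakly-Helly via injective hulls (Theorem~\ref{thm:closeRealVertex}), which states that $\whp(G)$ equals the eccentricity of $V(G)$ inside $\calH(G)$; that is, $\whp(G) = \max_{h \in V(\calH(G))} d_{\calH(G)}(h, V(G))$, and similarly $\whp(G+\{x\}) = \max_{h \in V(\calH(G+\{x\}))} d_{\calH(G+\{x\})}(h, V(G+\{x\}))$. Let $H := \calH(G)$ and $H' := \calH(G+\{x\})$. By the previous lemma, $H' = H + \{x\}$, where $x$ is pendant to the fixed vertex $v \in V(G) \subseteq V(H)$.

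First I would observe that $H$ is an isometric subgraph of $H'$: the only new paths in $H'$ pass through the degree-one vertex $x$, and such paths are longer than the corresponding paths through $v$ alone. Hence for any two vertices $a,b \in V(H)$, $d_{H'}(a,b) = d_H(a,b)$, and moreover $d_{H'}(h, x) = d_{H'}(h,v) + 1 = d_H(h,v) + 1$ for every $h \in V(H)$.

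Next I would compute, for each vertex $h \in V(H') = V(H) \cup \{x\}$, the distance to the real vertex set $V(G+\{x\}) = V(G) \cup \{x\}$. If $h = x$, this distance is $0$. If $h \in V(H)$, then
\[
d_{H'}(h, V(G+\{x\})) = \min\bigl\{d_{H'}(h,V(G)),\, d_{H'}(h,x)\bigr\} = \min\bigl\{d_H(h,V(G)),\, d_H(h,v)+1\bigr\},
\]
and since $v \in V(G)$ we have $d_H(h,v) \ge d_H(h,V(G))$, so the minimum equals $d_H(h, V(G))$.

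Therefore the maximum of these quantities over $V(H')$ equals $\max_{h \in V(H)} d_H(h, V(G)) = \whp(G)$, which by Theorem~\ref{thm:closeRealVertex} gives $\whp(G+\{x\}) = \whp(G)$. There is no substantive obstacle here once the previous lemma is invoked; the only care needed is verifying that $H$ sits isometrically inside $H'$, which follows immediately because $x$ is pendant in $H'$.
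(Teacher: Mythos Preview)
Your proposal is correct and follows precisely the approach the paper intends: the corollary is stated without proof, as an immediate consequence of the preceding lemma ($\calH(G+\{x\}) = \calH(G) + \{x\}$) combined with Theorem~\ref{thm:closeRealVertex}. You have simply spelled out the natural details---that $H$ sits isometrically in $H+\{x\}$ and that the nearest real vertex to any $h\in V(H)$ is unchanged by adjoining the pendant $x$---and these are exactly the steps one would fill in.
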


\section{Diameter, radius, and all eccentricities}\label{sec:diam-rad-ecc}
We establish several bounds on all eccentricities, and the diameter and radius in particular, for $\whp$-weakly-Helly graphs.
As an intermediate step, we relate the diameter, radius, and all eccentricities in~$G$ to their counterparts in~$\calH(G)$. 

\subsection{Eccentricities and centers}
First, we provide a few immediate consequences of Proposition~\ref{prop:outsidePointsAreReal} which establishes that farthest vertices in~$\calH(G)$ are real vertices. That is, for any $v \in V(\calH(G))$, $F_{\calH(G)}(v) \subseteq V(G)$. It follows that all eccentricities in~$G$ and the diameter of~$G$ are preserved in $\calH(G)$, including with respect to any subset $M \subseteq V(G)$ of real vertices. 

\begin{proposition} \label{prop:eccentricityEqual}
Let $H$ be the injective hull of $G$.
For any $M \subseteq V(G)$ and $v \in V(G)$, $\ecc_G^M(v) = \ecc_H^M(v)$.
Moreover, $\ecc_G(v) = \ecc_H(v)$.
\end{proposition}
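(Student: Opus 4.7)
The plan is to lean on two ingredients already available: the isometric embedding $G \hookrightarrow H := \calH(G)$, and Proposition~\ref{prop:outsidePointsAreReal}, which guarantees that peripheral vertices of $H$ are real. With these, both equalities in the statement should fall out quickly.

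For the first equality, I would observe that when $M \subseteq V(G)$ and $v \in V(G)$, every pair $(v,u)$ with $u \in M$ consists of real vertices, so isometricity gives $d_G(v,u) = d_H(v,u)$. Taking the maximum over $u \in M$ yields $\ecc_G^M(v) = \ecc_H^M(v)$ directly, with no further argument needed.

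For the moreover part $\ecc_G(v) = \ecc_H(v)$, the inequality $\ecc_G(v) \le \ecc_H(v)$ is again immediate from isometricity (any $u \in V(G) \subseteq V(H)$ yields $d_G(v,u) = d_H(v,u) \le \ecc_H(v)$), so the real work is in the reverse inequality. Here I would pick any $u \in F_H(v)$ and show that $u$ is peripheral in $H$ with witness $y := v$. Suppose for contradiction that there exists $z \neq u$ with $I(v,u) \subseteq I(v,z)$; then $u \in I(v,z)$, so $d_H(v,z) = d_H(v,u) + d_H(u,z) > d_H(v,u)$, contradicting the choice of $u$ as farthest from $v$. Hence $I(v,u) \not\subseteq I(v,z)$ for every $z \neq u$, so $u$ is peripheral and, by Proposition~\ref{prop:outsidePointsAreReal}, lies in $V(G)$. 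Consequently $\ecc_H(v) = d_H(v,u) = d_G(v,u) \le \ecc_G(v)$.

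The only mildly nontrivial step is recognizing that a farthest vertex of $H$ from $v$ automatically serves as its own witness for peripherality through $y = v$; once this is noted, Proposition~\ref{prop:outsidePointsAreReal} closes the argument. I do not anticipate any significant obstacle.
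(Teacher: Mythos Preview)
Your proof is correct and matches the paper's approach: the paper presents this proposition as an immediate consequence of Proposition~\ref{prop:outsidePointsAreReal}, via the observation that $F_H(v) \subseteq V(G)$ for every $v \in V(H)$, and you have simply made explicit the step that any farthest vertex from $v$ is peripheral with witness $y=v$. Your remark that the first equality $\ecc_G^M(v) = \ecc_H^M(v)$ follows from isometricity alone (without invoking Proposition~\ref{prop:outsidePointsAreReal}) is a clean observation that the paper leaves implicit.
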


\begin{proposition} \label{prop:diameterEqual}
Let $H$ be the injective hull of $G$.
For any $M \subseteq V(G)$, $diam_G(M) = diam_H(M)$.
Moreover, $diam(G) = diam(H)$.
\end{proposition}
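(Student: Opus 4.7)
The plan is to handle the two assertions essentially independently, since they compare different quantities. The first, $diam_G(M) = diam_H(M)$ for $M \subseteq V(G)$, is actually the easier one: unwinding the definitions, both sides equal $\max_{u,v \in M} d(u,v)$ taken in the respective graphs. Because $G$ is an isometric subgraph of $H$ and $M$ consists entirely of vertices of $G$, each pairwise distance agrees, so the two maxima coincide. This step is essentially a one-line calculation and needs no new ideas.

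The second claim, $diam(G) = diam(H)$, compares maxima over two genuinely different vertex sets, so a bit more care is needed. For the inequality $diam(G) \le diam(H)$ I will invoke Proposition~\ref{prop:eccentricityEqual}: every $v \in V(G)$ has $\ecc_G(v) = \ecc_H(v)$, and maximizing over $V(G)$ (a subset of $V(H)$) can only give something at most $diam(H)$. For the reverse direction I will pick $v^* \in V(H)$ with $\ecc_H(v^*) = diam(H)$ and any $u^* \in F_H(v^*)$. The key observation is that $u^*$ must be peripheral in $H$: if $u^* \in I(v^*, z)$ for some $z \ne u^*$, then $d_H(v^*, z) > d_H(v^*, u^*) = \ecc_H(v^*)$, which is impossible. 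Hence by Proposition~\ref{prop:outsidePointsAreReal}, $u^* \in V(G)$, and so
\[
diam(H) = d_H(v^*, u^*) \le \ecc_H(u^*) = \ecc_G(u^*) \le diam(G),
\]
using Proposition~\ref{prop:eccentricityEqual} once more.

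The only place where something nontrivial is needed is the step of promoting a diameter-realizing vertex of $H$ into $V(G)$; a priori a pair of Helly vertices could witness $diam(H)$. Proposition~\ref{prop:outsidePointsAreReal} is precisely tailored for this, via the short peripherality argument sketched above, and once it is applied the rest is bookkeeping based on the two previously stated facts that $G$ embeds isometrically in $H$ and that real-vertex eccentricities are preserved.
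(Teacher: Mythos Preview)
Your proof is correct and follows essentially the same route as the paper. The paper treats both Proposition~\ref{prop:eccentricityEqual} and Proposition~\ref{prop:diameterEqual} as immediate consequences of the fact (derived from Proposition~\ref{prop:outsidePointsAreReal}) that $F_H(v)\subseteq V(G)$ for every $v\in V(H)$; your peripherality argument for $u^*\in F_H(v^*)$ is exactly the verification of this fact in the special case needed, so the two arguments coincide in substance.
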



\begin{proposition} \label{prop:radiusInH}
Let $H$ be the injective hull of an $\whp$-weakly-Helly graph~$G$.
For any $M \subseteq V(G)$,
$rad_G(M) - \whp \le rad_H(M) \le rad_G(M)$.
In particular, $rad(G) - \whp \le rad(H) \le rad(G)$.
\end{proposition}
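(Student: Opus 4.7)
The plan is to prove the two inequalities separately, using Proposition~\ref{prop:eccentricityEqual} for the upper bound and Theorem~\ref{thm:closeRealVertex} (the characterization of $\whp$-weakly-Helly graphs via distances in the injective hull) for the lower bound.

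For the upper bound $rad_H(M)\le rad_G(M)$, I would pick any vertex $c\in C_G(M)$, so that $\ecc_G^M(c)=rad_G(M)$. Since $c\in V(G)\subseteq V(H)$, Proposition~\ref{prop:eccentricityEqual} gives $\ecc_H^M(c)=\ecc_G^M(c)=rad_G(M)$. Taking the minimum of $\ecc_H^M(\cdot)$ over all vertices of $H$ can only lower this value, so $rad_H(M)\le \ecc_H^M(c)=rad_G(M)$.

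For the lower bound $rad_G(M)-\whp\le rad_H(M)$, I would start from a center $h\in V(H)$ with $\ecc_H^M(h)=rad_H(M)$. The vertex $h$ may be a Helly vertex, so it does not a priori give a bound in $G$. Here Theorem~\ref{thm:closeRealVertex} is the key tool: since $G$ is $\whp$-weakly-Helly, there exists a real vertex $v\in V(G)$ with $d_H(h,v)\le \whp$. Using the triangle inequality in $H$ together with the fact that $G$ is isometric in $H$, for every $u\in M\subseteq V(G)$,
\[
d_G(v,u)=d_H(v,u)\le d_H(v,h)+d_H(h,u)\le \whp+rad_H(M).
\]
Hence $\ecc_G^M(v)\le rad_H(M)+\whp$, and since $rad_G(M)\le \ecc_G^M(v)$, the lower bound follows. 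The particular case $M=V(G)$ yields $rad(G)-\whp\le rad(H)\le rad(G)$.

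I do not anticipate a real obstacle here: the only non-routine ingredient is recognizing that the center of $M$ in $H$ might lie outside $V(G)$, which is precisely what Theorem~\ref{thm:closeRealVertex} is designed to handle by approximating any Helly vertex by a nearby real one within distance $\whp$.
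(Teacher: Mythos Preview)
Your proof is correct and follows essentially the same approach as the paper: the upper bound via Proposition~\ref{prop:eccentricityEqual} applied to a central vertex of $G$, and the lower bound via Theorem~\ref{thm:closeRealVertex} to approximate a vertex of $C_H(M)$ by a nearby real vertex, followed by the triangle inequality. The only cosmetic difference is that you spell out the isometry step $d_G(v,u)=d_H(v,u)$ explicitly, which the paper leaves implicit.
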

\begin{proof}
By Proposition~\ref{prop:eccentricityEqual}, the eccentricity of any vertex of~$G$ is preserved in~$H$.
Hence, $rad_H(M) \le rad_G(M)$.
Consider 
any vertex $h \in C_H(M)$. We have $d_H(h,x) \leq rad_H(M)$ for all $x \in M$.
By Theorem~\ref{thm:closeRealVertex}, there is a real vertex $v \in V(G)$ such that $d_H(v,h) \leq \whp$.
By the triangle inequality, $rad_G(M) \leq \ecc_G^M(v) \leq max_{x \in M}\{d_H(v,h) + d_H(h,x)\} \le rad_H(M) + \alpha$.
\end{proof}

Though the diameter is preserved, the radius in~$\calH(G)$ may be smaller than the radius in~$G$ by at most $\whp(G)$. In this case, the radius in~$\calH(G)$ is realized by Helly vertices which are not present in~$G$, resulting in different centers in~$\calH(G)$ and~$G$. In what follows, we establish that, for any $M \subseteq V(G)$, any vertex of $C_G(M)$ is close to a vertex of $C_H(M)$. Moreover, the diameter of the center of $M$ in~$G$ is at most the diameter of the center of $M$  in the injective hull $\calH(G)$ plus $2\whp(G)$.

\begin{lemma} \label{lem:ClGisSubsetOfDiskAroundCH}
Let~$H$ be the injective hull of an $\whp$-weakly-Helly graph~$G$.
For any $M \subseteq V(G)$ and integer $\ell \ge 0$, $C_G^{\ell}(M) \subseteq D_H(C_H(M), \whp+\ell)=D_H(C^{\ell}_H(M),\alpha)$.
\end{lemma}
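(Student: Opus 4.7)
The plan is to chain together the relationships between eccentricities, radii, and centers of $G$ and those of its injective hull $H:=\calH(G)$ already established in Propositions~\ref{prop:eccentricityEqual} and~\ref{prop:radiusInH}, combined with the Helly-graph formula of Lemma~\ref{lem:formula} applied inside $H$. Fix a vertex $v \in C_G^{\ell}(M)$, so $\ecc_G^M(v) \le rad_G(M)+\ell$. By Proposition~\ref{prop:eccentricityEqual}, this inequality transfers verbatim to $H$:
\[
\ecc_H^M(v) = \ecc_G^M(v) \le rad_G(M)+\ell.
\]

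Since $H$ is a Helly graph, Lemma~\ref{lem:formula} (with $M$ viewed as a subset of $V(H)$) yields $\ecc_H^M(v) = d_H(v, C_H(M)) + rad_H(M)$. Rearranging gives
\[
d_H(v, C_H(M)) \;=\; \ecc_H^M(v) - rad_H(M) \;\le\; rad_G(M) + \ell - rad_H(M).
\]
Proposition~\ref{prop:radiusInH} tells us $rad_G(M) - rad_H(M) \le \whp$, so $d_H(v, C_H(M)) \le \whp + \ell$. Hence $v \in D_H(C_H(M), \whp+\ell)$, establishing the inclusion $C_G^{\ell}(M) \subseteq D_H(C_H(M), \whp+\ell)$.

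For the claimed equality $D_H(C_H(M), \whp+\ell) = D_H(C_H^{\ell}(M), \whp)$, I invoke Corollary~\ref{cor:helly-center-to-disk} directly inside the Helly graph $H$. Setting $k=\ell$ (with the role of the corollary's $\ell$ played by $\whp$) gives
\[
D_H(C_H(M), \whp+\ell) \;=\; C_H^{\whp+\ell}(M) \;=\; D_H(C_H^{\ell}(M), \whp),
\]
which is exactly what is needed. I do not anticipate a genuine obstacle: the argument is essentially a bookkeeping chain that combines preservation of eccentricities from $G$ to $H$, the additive gap on radii, the exact Helly formula $\ecc_H^M(v)=d_H(v,C_H(M))+rad_H(M)$, and the dilation identity for concentric centers in Helly graphs.
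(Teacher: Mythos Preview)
Your proof is correct and follows essentially the same approach as the paper: both pass from $\ecc_G^M$ to $\ecc_H^M$ via Proposition~\ref{prop:eccentricityEqual}, use the Helly formula $\ecc_H^M(v)=d_H(v,C_H(M))+rad_H(M)$ from Lemma~\ref{lem:formula}, bound $rad_G(M)-rad_H(M)\le\whp$ via Proposition~\ref{prop:radiusInH}, and finish the set equality with Corollary~\ref{cor:helly-center-to-disk}.
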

\begin{proof}
Consider any vertex $x \in C_G^{\ell}(M)$.
By Proposition~\ref{prop:radiusInH}, $\ecc_G^M(x) \leq rad_G(M) + \ell \leq rad_H(M) + \whp + \ell$.
By Proposition~\ref{prop:eccentricityEqual} and Lemma~\ref{lem:formula} (since $H$ is Helly), $\ecc_G^M(x) = \ecc_H^M(x) = rad_H(M) + d_H(x, C_H(M))$. 
Therefore, $d_H(x,C_H(M)) \leq \whp + \ell$. 
By Corollary~\ref{cor:helly-center-to-disk}, 
$D_H(C_H(M), \whp+\ell)=D_H(C^{\ell}_H(M),\alpha).$
\end{proof}

\begin{theorem}\label{thm:diameterOfCenters}
Let $H$ be the injective hull of an $\whp$-weakly-Helly graph~$G$.
For any integer $\ell \ge 0$ and any $M \subseteq V(G)$,
$diam_G(C_G^{\ell}(M)) - 2\whp  \le diam_H(C_H^{\ell}(M)) \le diam_G(C_G^{\whp + \ell}(M)) + 2\whp$.
\end{theorem}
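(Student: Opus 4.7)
The plan is to prove the two inequalities separately, both via triangle-inequality arguments that transport a diametral pair across the isometric embedding $G \hookrightarrow H$, using Theorem~\ref{thm:closeRealVertex} and Lemma~\ref{lem:ClGisSubsetOfDiskAroundCH} as the transport tools.

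For the upper bound $diam_H(C_H^{\ell}(M)) \le diam_G(C_G^{\whp+\ell}(M)) + 2\whp$, I would first pick a diametral pair $x,y \in C_H^{\ell}(M)$ realizing $diam_H(C_H^{\ell}(M)) = d_H(x,y)$. By Theorem~\ref{thm:closeRealVertex}, there exist real vertices $x',y' \in V(G)$ with $d_H(x,x') \le \whp$ and $d_H(y,y') \le \whp$. The key step is to verify that $x',y' \in C_G^{\whp+\ell}(M)$. For $x'$, the triangle inequality in $H$ gives $\ecc_H^M(x') \le d_H(x',x) + \ecc_H^M(x) \le \whp + rad_H(M) + \ell$; combined with $rad_H(M) \le rad_G(M)$ from Proposition~\ref{prop:radiusInH} and the eccentricity-preservation $\ecc_G^M(x') = \ecc_H^M(x')$ from Proposition~\ref{prop:eccentricityEqual}, this gives $\ecc_G^M(x') \le rad_G(M) + \whp + \ell$, so $x' \in C_G^{\whp+\ell}(M)$; symmetrically for $y'$. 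Then, since $G$ embeds isometrically into $H$, $d_H(x',y') = d_G(x',y') \le diam_G(C_G^{\whp+\ell}(M))$, and one more triangle inequality gives
\[
d_H(x,y) \le d_H(x,x') + d_G(x',y') + d_H(y',y) \le diam_G(C_G^{\whp+\ell}(M)) + 2\whp.
\]

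For the lower bound $diam_G(C_G^{\ell}(M)) - 2\whp \le diam_H(C_H^{\ell}(M))$, I would take a diametral pair $x,y \in C_G^{\ell}(M)$ realizing $diam_G(C_G^{\ell}(M)) = d_G(x,y) = d_H(x,y)$. By Lemma~\ref{lem:ClGisSubsetOfDiskAroundCH}, $C_G^{\ell}(M) \subseteq D_H(C_H^{\ell}(M), \whp)$, so there exist $x^*, y^* \in C_H^{\ell}(M)$ with $d_H(x,x^*) \le \whp$ and $d_H(y,y^*) \le \whp$. The triangle inequality then yields
\[
d_G(x,y) = d_H(x,y) \le d_H(x,x^*) + d_H(x^*, y^*) + d_H(y^*, y) \le diam_H(C_H^{\ell}(M)) + 2\whp,
\]
which is exactly the claimed inequality.

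The only real subtlety is the upper-bound direction, where I must be careful to bound $\ecc_H^M(x')$ by the correct radius (using $rad_H(M) \le rad_G(M)$) to land $x',y'$ in $C_G^{\whp+\ell}(M)$ rather than in some larger center-layer; without this precise accounting the argument would only give a weaker inequality involving $C_G^{2\whp+\ell}(M)$. Both directions are otherwise routine once the two transport lemmas (Theorem~\ref{thm:closeRealVertex} and Lemma~\ref{lem:ClGisSubsetOfDiskAroundCH}) are applied in their natural directions.
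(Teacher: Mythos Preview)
Your proposal is correct and follows essentially the same approach as the paper's proof: the upper bound uses Theorem~\ref{thm:closeRealVertex} to map a diametral pair in $C_H^\ell(M)$ to nearby real vertices in $C_G^{\whp+\ell}(M)$ (via Propositions~\ref{prop:eccentricityEqual} and~\ref{prop:radiusInH}), and the lower bound uses Lemma~\ref{lem:ClGisSubsetOfDiskAroundCH} to map a diametral pair in $C_G^\ell(M)$ to nearby vertices in $C_H^\ell(M)$, in both cases finishing with the triangle inequality. The paper's argument is identical up to variable names.
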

\begin{proof}
Let $d_H(u,v) = diam_H(C_H^\ell(M))$ for some $u,v \in C_H^\ell(M)$. 
By Theorem~\ref{thm:closeRealVertex},
there is a real vertex $u^* \in V(G)$ at distance at most $\whp$ from $u$.
By Proposition~\ref{prop:eccentricityEqual} and Proposition~\ref{prop:radiusInH},
$\ecc_G^M(u^*) = \ecc_H^M(u^*) \le \ecc_H^M(u) + \whp \le rad_H(M) + \ell + \whp \le rad_G(M) + \ell + \whp$.
Similarly, there is a real vertex $v^* \in V(G)$ at distance at most $\whp$ from $v$
with $\ecc_G^M(v^*) \le rad_G(M) + \ell + \whp$.
Both vertices $v^*,u^*$ belong to $C_G^{\whp + \ell}(M)$.
By the triangle inequality,
$diam_H(C_H^\ell(M)) = d_H(u,v) \le 2\whp + d_H(u^*,v^*) \le 2\whp + diam_G(C_G^{\whp + \ell}(M))$.

On the other hand, by Lemma~\ref{lem:ClGisSubsetOfDiskAroundCH},
any vertex of $C_G^{\ell}(M)$ has distance at most $\whp$ to $C^{\ell}_H(M)$.
Let $x,y \in C_G^{\ell}(M)$ 
such that $d_G(x,y) = diam_G(C_G^{\ell}(M))$.
By the triangle inequality, $d_G(x,y)=d_H(x,y) \le diam_H(C^{\ell}_H(M)) + 2\whp$. A small graph depicted on Figure \ref{fig:example_sharpness} demonstrates that this inequality is tight. 
\end{proof}

\begin{center}
\includegraphics[scale=1]{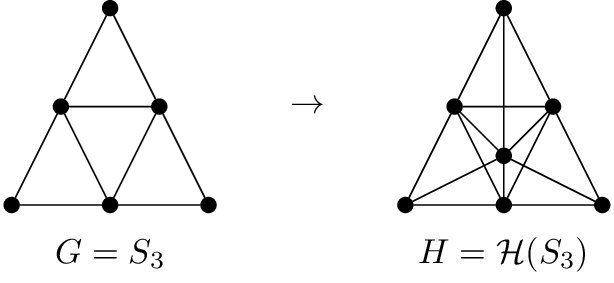}
\captionof{figure}{A graph $G$ (left) and its injective hull $H$ (right) which show that inequalities in Proposition~\ref{prop:radiusInH} and Theorem~\ref{thm:diameterOfCenters} are tight: $\whp(G)=1$, $rad(G)=2$, $rad(H)=1$, $diam(C(H))=0$, $diam(C(G))=2$.}\label{fig:example_sharpness}
\end{center}

\begin{corollary}\label{cor:centersInclusions}
For each $\whp$-weakly-Helly graph $G$ with the injective hull $H={\calH(G)}$ and every $\ell \ge 0$,
$C^{\ell}(H) \cap V(G) \subseteq C^{\ell}(G) \subseteq C^{\ell+\whp}(H) \cap V(G).$
\end{corollary}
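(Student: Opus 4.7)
The plan is to derive both inclusions as immediate consequences of two facts already established: Proposition~\ref{prop:eccentricityEqual}, which gives $\ecc_G(v) = \ecc_H(v)$ for every real vertex $v \in V(G)$, and Proposition~\ref{prop:radiusInH}, which gives the sandwich $rad(G) - \whp \le rad(H) \le rad(G)$. With these two ingredients in hand, the corollary reduces to chaining inequalities between the thresholds $rad(G) + \ell$ and $rad(H) + \ell + c$ for an appropriate additive constant $c$.

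For the first inclusion $C^{\ell}(H) \cap V(G) \subseteq C^{\ell}(G)$, I would take any $v \in C^{\ell}(H) \cap V(G)$, so that $\ecc_H(v) \le rad(H) + \ell$. Using $\ecc_G(v) = \ecc_H(v)$ and $rad(H) \le rad(G)$, conclude that $\ecc_G(v) \le rad(G) + \ell$, which is exactly membership in $C^{\ell}(G)$. No Helly-gap slack is lost here because the radius can only decrease when we pass from $G$ to $H$.

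For the second inclusion $C^{\ell}(G) \subseteq C^{\ell+\whp}(H) \cap V(G)$, I would start from $v \in C^{\ell}(G)$ (so $v \in V(G)$ by definition and $\ecc_G(v) \le rad(G) + \ell$), then use $\ecc_H(v) = \ecc_G(v)$ together with the other half of Proposition~\ref{prop:radiusInH}, namely $rad(G) \le rad(H) + \whp$, to obtain $\ecc_H(v) \le rad(H) + \ell + \whp$. This places $v$ in $C^{\ell+\whp}(H)$, and since $v \in V(G)$ we get $v \in C^{\ell+\whp}(H) \cap V(G)$.

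There is no real obstacle: the corollary is essentially bookkeeping around the two previously proved propositions. The only subtlety worth noting in the write-up is that the $\whp$ slack in the second inclusion is exactly the amount by which the radius can drop when passing to the injective hull, and that it cannot be avoided in general (cf.\ the example in Figure~\ref{fig:example_sharpness}, where $rad(G) - rad(H) = 1 = \whp(G)$).
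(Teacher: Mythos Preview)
Your proof is correct. The first inclusion is argued identically to the paper. For the second inclusion you take a slightly more direct route than the paper does: the paper invokes Lemma~\ref{lem:ClGisSubsetOfDiskAroundCH} to get $d_H(v,C(H)) \le \ell + \whp$ and then applies the Helly formula $\ecc_H(v) = rad(H) + d_H(v,C(H))$ (Lemma~\ref{lem:formula}), whereas you bypass both by simply chaining $\ecc_H(v) = \ecc_G(v) \le rad(G) + \ell \le rad(H) + \whp + \ell$ straight from Propositions~\ref{prop:eccentricityEqual} and~\ref{prop:radiusInH}. Your argument is the more economical one here; the paper's detour through the distance-to-center formula is not needed for this corollary, though it is consistent with how the surrounding results are organized.
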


\begin{center}
\includegraphics[scale=0.8]{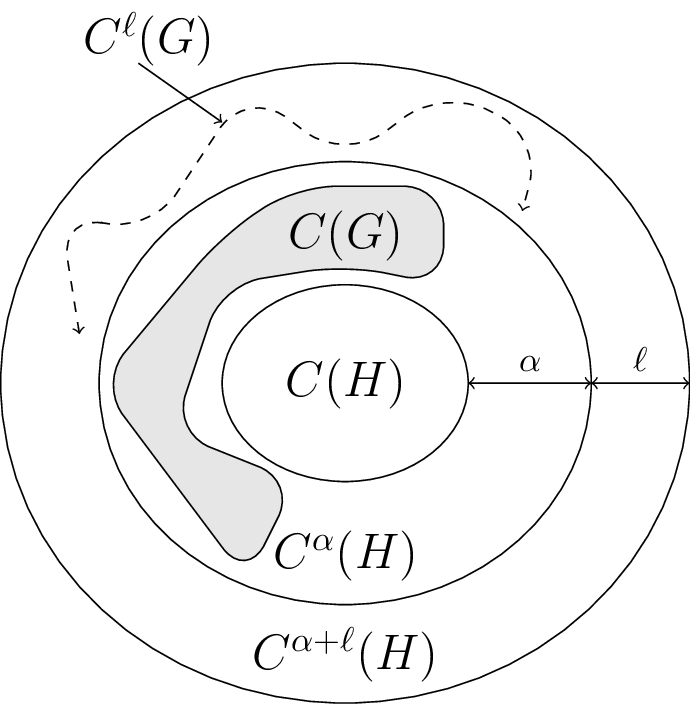}
\captionof{figure}{Inclusions for sets $C(G)$, $C(H)$, $C^\whp(H)$, $C^\ell(G)$, and $C^{\whp+\ell}(H)$ in $H= \calH(G)$.}\label{fig:example_inclusionsCenters}
\end{center}

\begin{proof}
Let $v \in C^{\ell}(H) \cap V(G)$.
By Proposition~\ref{prop:radiusInH} and Proposition~\ref{prop:eccentricityEqual}, $e_G(v) = e_H(v) \le rad(H) + \ell \le rad(G) + \ell$. Hence, $v \in C^{\ell}(G)$.
Consider now a vertex $v \in C^{\ell}(G)$.
It follows from Lemma~\ref{lem:ClGisSubsetOfDiskAroundCH}
that $d_H(v,C(H)) \le \ell + \whp$.
By Lemma~\ref{lem:formula}, $e_H(v) = rad(H) + d_H(v,C(H)) \le rad(H) + \ell + \whp$.
Thus, $v \in C^{\ell+\whp}(H)$.
\end{proof}

The results from this section on inclusions for central sets are illustrated in Figure~\ref{fig:example_inclusionsCenters}.

\subsection{Relation between \hg{} and diameter, radius and graph powers}
We obtain a few lower and upper bounds on the \hg{} $\whp(G)$.
The first follows directly from Proposition~\ref{prop:radiusInH}.

\begin{corollary}\label{cor:lowerBoundOnWHByRad}
Let $H := \calH(G)$. For any $M \subseteq V(G)$, $\whp(G) \geq rad_G(M) - rad_H(M)$. 
\end{corollary}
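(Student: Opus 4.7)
The plan is to invoke Proposition~\ref{prop:radiusInH} directly with $\whp$ specialized to the Helly-gap $\whp(G)$. By the definition of $\whp(G)$ as the minimum value for which $G$ is $\whp$-weakly-Helly, the graph $G$ is in particular $\whp(G)$-weakly-Helly, so Proposition~\ref{prop:radiusInH} applies with $\whp = \whp(G)$. This yields the two-sided bound
\[
rad_G(M) - \whp(G) \;\le\; rad_H(M) \;\le\; rad_G(M)
\]
for every $M \subseteq V(G)$, where $H := \calH(G)$.

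Taking the left inequality and rearranging gives $\whp(G) \ge rad_G(M) - rad_H(M)$, which is exactly the claim of the corollary. No additional machinery or case analysis is needed.

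There is essentially no obstacle here: the content of the corollary is entirely packaged inside Proposition~\ref{prop:radiusInH}. The only ``step'' worth noting is the conceptual point that the corollary is phrased as a \emph{lower bound on} $\whp(G)$ in terms of a radius gap between $G$ and $\calH(G)$, whereas Proposition~\ref{prop:radiusInH} is phrased as an upper bound on the radius drop in terms of $\whp$; the two formulations are trivially equivalent once we substitute $\whp = \whp(G)$. A brief one-line proof therefore suffices.
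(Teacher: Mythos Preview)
Your proposal is correct and matches the paper's approach exactly: the paper simply states that the corollary follows directly from Proposition~\ref{prop:radiusInH}, which is precisely what you do by specializing $\whp$ to $\whp(G)$ and rearranging the left inequality.
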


If $\calH(G)$ were given, one could compute~$\whp(G)$ by computing the maximum distance from a Helly vertex to a closest real vertex. 
However, we provide in Corollary~\ref{cor:whpBoundedByDiam} and Lemma~\ref{lem:whpLowerBoundByDiam2Rad} upper and lower bounds which do not necessitate computing the injective hull.

\begin{corollary} \label{cor:whpBoundedByDiam}
Any graph~$G$ is~$\whp$-weakly-Helly for $\whp \le \lfloor diam(G) / 2 \rfloor$.
\end{corollary}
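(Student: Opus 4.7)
The plan is to invoke Theorem~\ref{thm:closeRealVertex} and reduce the claim to the following geometric fact about the injective hull $H := \calH(G)$: every vertex $h \in V(H)$ is within distance at most $\lfloor diam(G)/2 \rfloor$ of some vertex of $V(G)$. Real vertices satisfy this trivially, so the entire task is to bound $d_H(h, V(G))$ for an arbitrary Helly vertex $h \in V(H) \setminus V(G)$.

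To handle such an $h$, I would apply Proposition~\ref{prop:shortestPathSubsetOfReal} to the degenerate shortest ``path'' consisting of $h$ alone. This yields two real vertices $x', y' \in V(G)$ such that some shortest $(x',y')$-path in $H$ passes through $h$; equivalently, $d_H(x',h) + d_H(h,y') = d_H(x',y')$. Because $G$ embeds isometrically into $H$, one has $d_H(x',y') = d_G(x',y') \le diam(G)$. A pigeonhole argument then forces $\min\bigl(d_H(h,x'),\, d_H(h,y')\bigr) \le \lfloor diam(G)/2 \rfloor$, exhibiting a real vertex at the required distance from $h$. Theorem~\ref{thm:closeRealVertex} then delivers $\whp(G) \le \lfloor diam(G)/2 \rfloor$.

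I do not expect any substantive obstacle. The only mild point worth checking is that Proposition~\ref{prop:shortestPathSubsetOfReal} applies to a single-vertex ``path'', but this is exactly what its proof does: running the BFS-extension construction on both sides of $h$ reaches peripheral endpoints, which are real by Proposition~\ref{prop:outsidePointsAreReal}. Thus the argument is essentially a one-line pigeonhole on the two halves of an enveloping shortest path, backed by the diameter-preservation property $diam(H) = diam(G)$ implicit in the isometric embedding.
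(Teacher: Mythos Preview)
Your proof is correct and follows essentially the same route as the paper: both invoke Proposition~\ref{prop:shortestPathSubsetOfReal} to place an arbitrary Helly vertex on a shortest path between two real vertices, bound that path's length by $diam(G)$ via isometry, and then pigeonhole to get a real vertex within $\lfloor diam(G)/2\rfloor$. The only cosmetic difference is that the paper phrases the argument by contradiction, whereas you give it directly.
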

\begin{proof}
By contradiction, suppose $\whp(G) > \lfloor diam(G) / 2 \rfloor$.
Let $H := \calH(G)$.
By Theorem~\ref{thm:closeRealVertex}, in $H$ there exists a Helly vertex~$u$ with $d_H(u,v) > \lfloor diam(G) / 2 \rfloor$ for all $v \in V(G)$. 
By Proposition~\ref{prop:shortestPathSubsetOfReal}, $u$ belongs to a shortest~$(x,y)$-path for two real vertices~$x,y \in V(G)$.
As $G$ is isometric in $H$,
$d_G(x,y) = d_H(x,y) = d_H(x,u) + d_H(u,y) \ge  2(\lfloor diam(G) / 2 \rfloor + 1) > diam(G)$, a contradiction with $d_G(x,y) \le diam(G)$.
\end{proof}

\begin{lemma}\label{lem:whpLowerBoundByDiam2Rad}
Let $H$ be the injective hull of~$G$, $M$ be any subset of $V(G)$ and $k\ge 0$ be an integer.
If $diam_G(M) = 2rad_G(M) - k$, then $rad_G(M) = rad_H(M) + \lfloor k/2 \rfloor$.
Moreover, $\whp(G) \ge \lfloor (2rad_G(M) - diam_G(M)) / 2 \rfloor$, that is, $2rad_G(M) \ge diam_G(M) \geq 2rad_G(M) - 2\whp(G) -1$.
\end{lemma}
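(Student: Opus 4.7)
The plan is to chain together three facts already established: the diameter is preserved when passing to the injective hull (Proposition~\ref{prop:diameterEqual}), in a Helly graph the radius of any subset equals the ceiling of half its diameter (Lemma~\ref{lem:mDiamRadInHelly}), and the gap between $rad_G(M)$ and $rad_H(M)$ is bounded by $\whp(G)$ (Corollary~\ref{cor:lowerBoundOnWHByRad}, which itself comes from Proposition~\ref{prop:radiusInH}). Each step is an elementary calculation, so the obstacle is essentially just keeping the floors and ceilings straight.

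First I would assume $diam_G(M) = 2\,rad_G(M) - k$ and apply Proposition~\ref{prop:diameterEqual} to conclude $diam_H(M) = 2\,rad_G(M) - k$. Since $H$ is Helly, Lemma~\ref{lem:mDiamRadInHelly} gives $rad_H(M) = \lceil diam_H(M)/2 \rceil = \lceil (2\,rad_G(M) - k)/2 \rceil$. A routine check (splitting on the parity of $k$) shows $\lceil (2r - k)/2 \rceil = r - \lfloor k/2 \rfloor$, so $rad_H(M) = rad_G(M) - \lfloor k/2 \rfloor$, which is the first claim.

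For the moreover part, Corollary~\ref{cor:lowerBoundOnWHByRad} gives $\whp(G) \geq rad_G(M) - rad_H(M) = \lfloor k/2 \rfloor = \lfloor (2\,rad_G(M) - diam_G(M))/2 \rfloor$. The upper inequality $2\,rad_G(M) \geq diam_G(M)$ is the standard triangle-inequality bound valid in every graph (any two vertices of $M$ are within $2\,rad_G(M)$ via a vertex realizing the radius). For the lower bound, set $m := 2\,rad_G(M) - diam_G(M) \geq 0$; the inequality $\whp(G) \geq \lfloor m/2 \rfloor$ forces $m \leq 2\whp(G) + 1$ (otherwise $m \geq 2\whp(G) + 2$ would yield $\lfloor m/2 \rfloor \geq \whp(G) + 1$, a contradiction), which rearranges to $diam_G(M) \geq 2\,rad_G(M) - 2\whp(G) - 1$.

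The only mild subtlety is the equality $\lceil (2r-k)/2 \rceil = r - \lfloor k/2 \rfloor$, which I would verify once in the proof by considering $k$ even and $k$ odd separately; everything else is immediate from the cited results.
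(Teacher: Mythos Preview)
Your proof is correct and follows essentially the same route as the paper: invoke Proposition~\ref{prop:diameterEqual} to transfer the diameter to $H$, apply Lemma~\ref{lem:mDiamRadInHelly} and the identity $\lceil (2r-k)/2\rceil = r - \lfloor k/2\rfloor$ to compute $rad_H(M)$, and then use Corollary~\ref{cor:lowerBoundOnWHByRad}. You spell out the floor/ceiling identity and the derivation of the ``that is'' inequalities more explicitly than the paper does, but the argument is identical.
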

\begin{proof}
By Proposition~\ref{prop:diameterEqual}, $2rad_G(M) - k = diam_G(M) = diam_H(M)$.
Since $H$ is Helly, by Lemma~\ref{lem:mDiamRadInHelly}, $rad_H(M) = \lceil (2rad_G(M) - k)/2 \rceil = rad_G(M) - \lfloor k/2 \rfloor$.
By Corollary~\ref{cor:lowerBoundOnWHByRad}, $\whp(G) \ge  \lfloor k/2 \rfloor$.
\end{proof}

\begin{remark} 
Observe that the \hg{} of a graph~$G$ can be much larger than $rad(G) - rad(H)$ and $\lfloor (2rad(G) - diam(G))/2 \rfloor$.
Consider a graph~$G$ formed by a cycle $C_{4k}$ of size $4k$ and two paths of length $k$ each connected to opposite ends of the cycle, as illustrated in Fig.~\ref{fig:example_alphaFarFromRadiusDifference}.
By Lemma~\ref{lem:whpLowerBoundByDiam2Rad}, $\whp(C_{4k}) \ge k$. 
By Corollary~\ref{cor:weakHellyPreservedByPendant}, $\whp(G)= \whp(C_{4k}) \ge k$.
However, $diam(G) = 4k$ and $rad(G) = 2k$, and by Lemma~\ref{lem:whpLowerBoundByDiam2Rad}, $rad(G) = rad(H)$.
In this case, $rad(G) - rad(H) = \lfloor (2rad(G) - diam(G))/2 \rfloor = 0$.

\begin{center}
\includegraphics[scale=0.9]{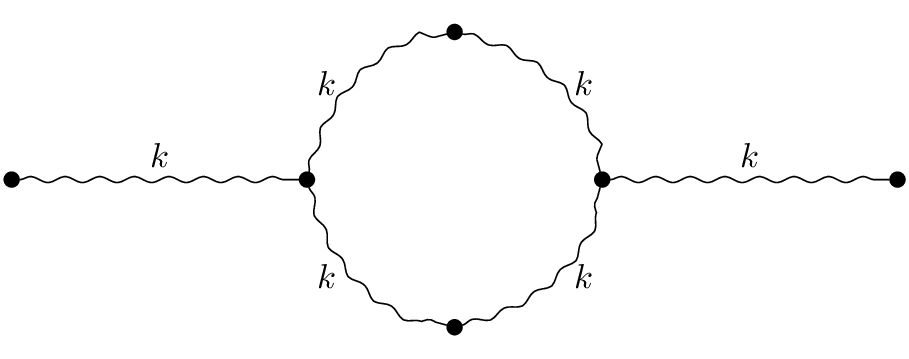}
\captionof{figure}{A graph~$G$ with \hg{} $k$ and $rad(G)=rad(H)=2k$.}\label{fig:example_alphaFarFromRadiusDifference}
\end{center}
\end{remark}

\begin{corollary}
Let $H$ be the injective hull of~$G$ and $M \subseteq V(G)$.
Then, $diam_G(M) \ge 2rad_G(M) - 1$ if and only if $C_G(M) \subseteq C_H(M)$. 
\end{corollary}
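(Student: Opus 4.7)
The plan is to reduce both directions of the equivalence to a single intermediate statement: $diam_G(M) \ge 2\,rad_G(M) - 1$ if and only if $rad_G(M) = rad_H(M)$. Once this intermediate equivalence is in hand, Proposition~\ref{prop:eccentricityEqual} transports the center inclusion back and forth essentially for free.

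For the ``only if'' direction, I would combine the hypothesis with the always-valid bound $diam_G(M) \le 2\,rad_G(M)$ to write $diam_G(M) = 2\,rad_G(M) - k$ for some $k \in \{0,1\}$. Lemma~\ref{lem:whpLowerBoundByDiam2Rad} then gives $rad_G(M) = rad_H(M) + \lfloor k/2 \rfloor = rad_H(M)$, since $\lfloor k/2 \rfloor = 0$ in both cases. Given any $v \in C_G(M)$, Proposition~\ref{prop:eccentricityEqual} yields $\ecc_H^M(v) = \ecc_G^M(v) = rad_G(M) = rad_H(M)$, so $v \in C_H(M)$, establishing the inclusion.

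For the ``if'' direction, I would pick any $v \in C_G(M)$ (nonempty whenever $M$ is). The hypothesis places $v$ in $C_H(M)$, and Proposition~\ref{prop:eccentricityEqual} gives $rad_H(M) = \ecc_H^M(v) = \ecc_G^M(v) = rad_G(M)$. The Helly bound of Lemma~\ref{lem:mDiamRadInHelly} applied in $H$ then says $diam_H(M) \ge 2\,rad_H(M) - 1$, and Proposition~\ref{prop:diameterEqual} transfers this to $diam_G(M) = diam_H(M) \ge 2\,rad_G(M) - 1$.

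I do not expect any real obstacle: all the technical content is already packaged in Propositions~\ref{prop:eccentricityEqual} and \ref{prop:diameterEqual} and in Lemmas~\ref{lem:mDiamRadInHelly} and \ref{lem:whpLowerBoundByDiam2Rad}. The only minor point to verify is that Lemma~\ref{lem:whpLowerBoundByDiam2Rad} handles both $k=0$ and $k=1$ uniformly through $\lfloor k/2 \rfloor = 0$, so that $rad_G(M)=rad_H(M)$ in either case and the argument does not need to split.
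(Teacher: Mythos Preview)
Your proposal is correct and follows essentially the same route as the paper's own proof: both directions pivot on the equivalence $diam_G(M)\ge 2\,rad_G(M)-1 \iff rad_G(M)=rad_H(M)$, invoking Lemma~\ref{lem:whpLowerBoundByDiam2Rad} for the forward direction and Lemma~\ref{lem:mDiamRadInHelly} with Proposition~\ref{prop:diameterEqual} for the converse, with Proposition~\ref{prop:eccentricityEqual} carrying the center inclusion. Your write-up is slightly more explicit in spelling out the $k\in\{0,1\}$ cases, but there is no substantive difference.
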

\begin{proof}
If $diam_G(M) \ge 2rad_G(M) - 1$, then by Lemma~\ref{lem:whpLowerBoundByDiam2Rad},
$rad_G(M)=rad_H(M)$.
Thus, any $v \in C_G(M)$ has $e_H^M(v) \le rad_H(M)$.
On the other hand, if $C_G(M) \subseteq C_H(M)$, since eccentricities are preserved in $H$ by Proposition~\ref{prop:eccentricityEqual},
then also $rad_G(M) = rad_H(M)$.
Since~$H$ is Helly, by Lemma~\ref{lem:mDiamRadInHelly} and Proposition~\ref{prop:diameterEqual}, $diam_G(M) = diam_H(M) \ge 2rad_H(M) - 1 = 2rad_G(M) - 1$ holds.
\end{proof}

Interestingly, the \hg{} of a graph~$G$ decreases in powers of~$G$.
\begin{lemma}
Let~$G$ be an $\whp$-weakly-Helly graph.
For every integer $k \ge 1$, $G^{k}$ is $\lceil \whp / k \rceil$-weakly-Helly.
\end{lemma}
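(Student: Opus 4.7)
The proof plan relies on the basic relationship between distances in $G$ and in its power $G^k$: a shortest path in $G$ of length $d$ can be broken into $\lceil d/k\rceil$ pieces of length at most $k$, each giving a single edge in $G^k$, so $d_{G^k}(u,v)=\lceil d_G(u,v)/k\rceil$. Consequently, for every vertex $v$ and integer radius $r\ge 0$,
\[
D_{G^k}(v,r)=\{u : d_G(u,v)\le rk\}=D_G(v,rk).
\]
This identity will convert the problem about $G^k$ back to one about $G$.

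Now suppose $\mathcal{F}=\{D_{G^k}(v_i,r_i):i\in S\}$ is a pairwise intersecting family of disks in $G^k$. By the displayed identity, $\mathcal{F}$ coincides (as a family of vertex sets) with $\{D_G(v_i,r_ik):i\in S\}$, so the latter is a pairwise intersecting family of disks in $G$. Since $G$ is $\alpha$-weakly-Helly, there exists a vertex $u\in V(G)$ with $d_G(u,v_i)\le r_ik+\alpha$ for every $i\in S$.

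Translating back to $G^k$,
\[
d_{G^k}(u,v_i)=\left\lceil\frac{d_G(u,v_i)}{k}\right\rceil \le \left\lceil\frac{r_ik+\alpha}{k}\right\rceil = r_i+\left\lceil\frac{\alpha}{k}\right\rceil,
\]
so $u\in\bigcap_{i\in S} D_{G^k}(v_i, r_i+\lceil\alpha/k\rceil)$. This is exactly the $\lceil\alpha/k\rceil$-weakly-Helly property for $G^k$.

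There is no real obstacle here; the only slightly delicate point is the final ceiling computation, which is valid because $r_i$ is an integer and therefore $\lceil (r_ik+\alpha)/k\rceil=r_i+\lceil\alpha/k\rceil$. The entire argument is a routine translation through the identity $D_{G^k}(v,r)=D_G(v,rk)$ combined with one application of the hypothesis on $G$.
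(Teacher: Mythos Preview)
Your proof is correct and follows essentially the same approach as the paper: both translate a pairwise intersecting disk system in $G^k$ to one in $G$ via the identity $D_{G^k}(v,r)=D_G(v,rk)$, apply the $\alpha$-weakly-Helly property of $G$, and then translate the common point back using $d_{G^k}(u,v)=\lceil d_G(u,v)/k\rceil$ together with the ceiling identity $\lceil (r_ik+\alpha)/k\rceil=r_i+\lceil\alpha/k\rceil$.
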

\begin{proof}
Let $\calF = \{D_{G^k}(v, r(v)) : v \in S\}$ be a system of disks  which pairwise intersect in $G^k$ so that any two vertices $u,v \in S$ satisfy $d_{G^k}(u,v) \le r(u) + r(v)$.
Then, their distances in $G$ satisfy $d_G(u,v) \le k(r(u) + r(v))$.
Consider now a corresponding system of disks in $G$ centered at same vertices, defined as $\calM = \{D_G(v, kr(v)) : v \in S\}$.
$\calM$ is a family of pairwise intersecting disks of $G$ as any two vertices $u,v \in S$
satisfy $d_G(u,v) \le kr(u) + kr(v)$.
As $G$ is $\whp$-weakly-Helly, there exists a common vertex $z \in \cap \{D_G(v, kr(v) + \whp) : v \in S\}$.
Since, for any $v \in S$, $d_G(z,v) \le kr(v) + \whp$,
necessarily, $d_{G^k}(z,v) \le r(v) + \lceil \whp / k \rceil$.
Hence, vertex~$z$ intersects all disks of~$\calF$ when the radii of each disk is extended by  $\lceil \whp / k \rceil$.
Therefore, $G^k$ is $\lceil \whp / k \rceil$-weakly-Helly.
\end{proof}

The results of this subsection are summarized in Theorem~\ref{thm:hGapBounds}.
\begin{theorem}\label{thm:hGapBounds}
Let~$G$ be an arbitrary graph. Then, the following holds:
\begin{itemize}[nolistsep,noitemsep]
    \item[i)] $\lfloor (2rad(G) - diam(G)) / 2 \rfloor \le \whp(G) \le \lfloor diam(G) / 2 \rfloor$, and
    \item[ii)] $\whp(G^k) \le \lceil \whp(G) / k \rceil$.
\end{itemize}
\end{theorem}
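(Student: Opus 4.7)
The plan is to assemble Theorem~\ref{thm:hGapBounds} from the three results already proven in this subsection, essentially by specializing them. Part (i) packages Corollary~\ref{cor:whpBoundedByDiam} together with the ``moreover'' clause of Lemma~\ref{lem:whpLowerBoundByDiam2Rad}, while part (ii) is exactly the content of the preceding lemma on graph powers.

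For the upper bound in part (i), I would cite Corollary~\ref{cor:whpBoundedByDiam} directly: it was shown there (by contradiction using Proposition~\ref{prop:shortestPathSubsetOfReal} and the fact that $G$ embeds isometrically into $\calH(G)$) that if $\whp(G)$ exceeded $\lfloor diam(G)/2 \rfloor$, a Helly vertex on a shortest $(x,y)$-path between real vertices would force $d_G(x,y) > diam(G)$. For the lower bound in part (i), I would apply Lemma~\ref{lem:whpLowerBoundByDiam2Rad} with $M := V(G)$, so that $rad_G(M) = rad(G)$ and $diam_G(M) = diam(G)$; the ``moreover'' clause of that lemma then yields precisely $\whp(G) \ge \lfloor (2rad(G) - diam(G))/2 \rfloor$.

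For part (ii), I would invoke the immediately preceding lemma, which establishes that if $G$ is $\whp$-weakly-Helly then $G^k$ is $\lceil \whp/k\rceil$-weakly-Helly. Taking $\whp := \whp(G)$ and applying the definition of $\whp(\cdot)$ as a minimum gives $\whp(G^k) \le \lceil \whp(G)/k \rceil$.

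Since all three ingredients are already in hand, there is no real obstacle — the theorem is a one-line consolidation. The only minor care is to apply Lemma~\ref{lem:whpLowerBoundByDiam2Rad} to $M=V(G)$ without needing its first (``radii in $H$'') conclusion, since part (i) of the theorem makes no reference to the injective hull.
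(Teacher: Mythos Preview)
Your proposal is correct and matches the paper's approach exactly: the paper presents Theorem~\ref{thm:hGapBounds} as a summary statement of the subsection with no separate proof, implicitly relying on Corollary~\ref{cor:whpBoundedByDiam}, Lemma~\ref{lem:whpLowerBoundByDiam2Rad} (with $M=V(G)$), and the lemma on powers, just as you describe.
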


This Theorem~\ref{thm:hGapBounds} generalizes some known results for Helly graphs. Recall that, in Helly graphs, $\lfloor (2rad(G) - diam(G)) / 2 \rfloor =0$ holds and that every power of a Helly graph is a Helly graph as well~\cite{FDraganPhD}.  We know also (see also Section \ref{sec:cases}) that the Helly-gap of a chordal graph or a distance-hereditary graph is at most 1 and the Helly-gap of a $\delta$-hyperbolic graph is at most $2\delta$. Hence, Theorem~\ref{thm:hGapBounds} generalizes some known results on those graphs, too. For every chordal graph $G$ as well as for every distance-hereditary graph $G$, $\lfloor (2rad(G) - diam(G)) / 2 \rfloor \le 1$ holds~\cite{chepoi:center-triang,yushmanovMetricGraphProperties,10.1007/3-540-58218-5_34}.  For every $\delta$-hyperbolic graph  $G$, $\lfloor (2rad(G) - diam(G)) / 2 \rfloor \le 2\delta$ holds~\cite{Chepoi_2008,Dragan2018RevisitingRD}.  

\subsection{The eccentricity function is almost unimodal in $\whp$-weakly-Helly graphs}
The \emph{locality} $loc(v)$ of a vertex~$v$ is defined as the minimum distance from~$v$ to a vertex of strictly smaller eccentricity. Recall that the eccentricity function $\ecc_G(\cdot)$ is {\em unimodal} in $G$ if every vertex $v \in V(G) \setminus C(G)$ has a
neighbor~$u$ such that $\ecc_G(u) < \ecc_G(v)$, i.e., $loc(v) = 1$.
In a graph~$G$ with a unimodal eccentricity function, any local minimum of the eccentricity function (i.e., a vertex whose eccentricity is not larger than the eccentricity of any of its neighbors) is the global minimum of the eccentricity function in~$G$ (i.e., it is a central vertex).
Recall that Helly graphs are characterized by the property that every eccentricity function $\ecc_G^M$ is unimodal for any $M \subseteq V(G)$; therefore, $\ecc_G^M(v) = d(v,C_G(M)) + rad_G(M)$ holds (see Lemma~\ref{lem:formula}).
A natural question for $\whp$-weakly-Helly graphs is whether similar results on the unimodality of the eccentricity function hold up to a function of~$\whp$, that is, if any vertex $v \in V(G) \setminus C^{\whp}(G)$ has $loc(v) \le f(\whp)$.
The following lemmas answer in the positive.

\begin{lemma} \label{lem:eccentricityIsAlmostUnimodal}
Let $G$ be an $\whp$-weakly-Helly graph and let $M \subseteq V(G)$.
If there is a vertex~$v \in V(G)$ such that $\ecc_G^M(v) > rad_G(M) + \whp$, then there is a vertex $u \in D_G(v, 2\whp + 1)$ with $\ecc_G^M(u) < \ecc_G^M(v)$.
\end{lemma}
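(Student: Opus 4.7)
The plan is to lift the problem into the injective hull $H := \calH(G)$, use the fact that the eccentricity function behaves nicely there (Helly graph), and then descend back to $G$ via the closeness of real vertices guaranteed by Theorem~\ref{thm:closeRealVertex}.

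First, by Proposition~\ref{prop:eccentricityEqual}, $\ecc_H^M(v) = \ecc_G^M(v)$, and by Proposition~\ref{prop:radiusInH}, $rad_H(M) \le rad_G(M)$. Combined with the hypothesis, this gives $\ecc_H^M(v) > rad_H(M) + \whp$. Since $H$ is Helly, Lemma~\ref{lem:formula} yields
\[
\ecc_H^M(v) \;=\; d_H(v, C_H(M)) + rad_H(M),
\]
so $d_H(v, C_H(M)) > \whp$, i.e., $d_H(v, C_H(M)) \ge \whp + 1$.

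Next, I would take a shortest path in $H$ from $v$ to $C_H(M)$ and let $w$ be the vertex on this path at distance exactly $\whp + 1$ from $v$ (this vertex exists by the previous step). Since $w$ lies on a shortest path from $v$ to $C_H(M)$, we have $d_H(w, C_H(M)) = d_H(v, C_H(M)) - (\whp+1)$, and applying Lemma~\ref{lem:formula} again,
\[
\ecc_H^M(w) \;=\; d_H(w, C_H(M)) + rad_H(M) \;=\; \ecc_H^M(v) - (\whp+1).
\]
However, $w$ may be a Helly vertex and therefore not available in $G$. Here Theorem~\ref{thm:closeRealVertex} enters: there is a real vertex $u \in V(G)$ with $d_H(w, u) \le \whp$.

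Finally, I would use two triangle-inequality estimates in $H$ (which equal distances in $G$ whenever both endpoints are real). For the distance bound, $d_G(v,u) = d_H(v,u) \le d_H(v,w) + d_H(w,u) \le (\whp+1) + \whp = 2\whp+1$, so $u \in D_G(v, 2\whp+1)$. For the eccentricity bound, for each $x \in M$,
\[
d_H(u,x) \;\le\; d_H(u,w) + d_H(w,x) \;\le\; \whp + \ecc_H^M(w),
\]
so $\ecc_G^M(u) = \ecc_H^M(u) \le \whp + \ecc_H^M(v) - (\whp+1) = \ecc_G^M(v) - 1$, which is strictly less than $\ecc_G^M(v)$, as required. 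The only delicate point is ensuring that the vertex on the path is chosen precisely at distance $\whp+1$ so that the ``cost'' $\whp$ of jumping from $w$ to a nearby real vertex $u$ is overcome by the ``gain'' $\whp+1$ in eccentricity; this is the main obstacle, and the strict inequality $\ecc_G^M(v) > rad_G(M)+\whp$ is exactly what makes such a choice possible.
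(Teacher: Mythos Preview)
Your proof is correct but follows a different route than the paper's proof of this particular lemma. The paper argues \emph{directly in $G$}: it builds the system of disks $\calF=\{D(v,\whp+1)\}\cup\{D(w,\ecc_G^M(v)-1-\whp):w\in M\}$, verifies pairwise intersection using $\ecc_G^M(v)\ge rad_G(M)+\whp+1$ (for $M$--$M$ pairs) and $d(v,w)\le \ecc_G^M(v)$ (for $v$--$M$ pairs), and then applies the $\whp$-weakly-Helly property to obtain $u$ with $d(u,v)\le 2\whp+1$ and $d(u,w)\le \ecc_G^M(v)-1$ for all $w\in M$. No injective hull is invoked.

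Your argument instead passes through $H=\calH(G)$: from $\ecc_G^M(v)>rad_G(M)+\whp$ and $rad_H(M)\le rad_G(M)$ you deduce $d_H(v,C_H(M))>\whp$, then walk $\whp+1$ steps toward $C_H(M)$ and pull the intermediate point back to a real vertex via Theorem~\ref{thm:closeRealVertex}. This is precisely the paper's proof of the companion Lemma~\ref{lem:ifTooFarFromCH}, preceded by the observation that the hypothesis here implies the hypothesis there. In effect you have shown that Lemma~\ref{lem:eccentricityIsAlmostUnimodal} is a corollary of Lemma~\ref{lem:ifTooFarFromCH}, which the paper treats as two independent statements summarized together in Theorem~\ref{thm:unimodality}. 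The direct disk argument buys independence from the injective-hull machinery (only the definition of $\whp$-weakly-Helly is used); your route is shorter once that machinery is available and makes the logical dependency between the two lemmas explicit.
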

\begin{proof}
Let $S = M \cup \{v\}$.
Consider in~$G$ a system of disks $\calF = \{D(u, \rho_u) : u \in S \}$,
where the radii are defined as $\rho_w = \ecc_G^M(v) - 1 - \whp$ for any $w \in M$, and $\rho_v = \whp + 1$.
We assert that all disks of $\calF$ are pairwise intersecting.
Clearly for any $w \in M$, disks $D(v, \rho_v)$ and $D(w, \rho_w)$ intersect as 
$d(w,v) \leq \ecc_G^M(v)$.
We now show that for any $w,w' \in M$ the disks $D(w, \rho_w)$ and $D(w', \rho_{w'})$ intersect.

Consider a vertex $c \in C_G(M)$.
By choice of $v$, $\ecc_G^M(v) \geq rad_G(M) + \whp + 1 = \ecc_G^M(c) + \whp +1$.
By the triangle inequality, for any two vertices $w,w' \in M$ we have
$d(w, w') \leq d(w,c) + d(w', c)
		  \leq \ecc_G^M(c) + \ecc_G^M(c)
		  \leq (\ecc_G^M(v) - 1 - \whp) + (\ecc_G^M(v) - 1 - \whp)
		  = \rho_w + \rho_{w'}$.
Then, by the $\whp$-weakly-Helly property, the system $\calF$ of pairwise intersecting disks
has a common intersection when radii of all disks are extended by $\whp$.
Therefore, there is a vertex $u$ such that $d(u,v) \leq 2\whp + 1$ and $d(u,w) \leq \ecc_G^M(v) - 1$ for all $w \in M$.
\end{proof}

For $\alpha =0$ we obtain a result known for Helly graphs.  As $\delta$-hyperbolic graphs are $(2\delta)$-weakly-Helly (this follows from a result in \cite{DBLP:conf/approx/ChepoiE07}, see also Section \ref{sec:cases}), Lemma \ref{lem:eccentricityIsAlmostUnimodal} also greatly generalizes a result known for $\delta$-hyperbolic graphs (see~\cite{ourManuscriptHyperbolicityTerrain}). 

\begin{lemma} \label{lem:ifTooFarFromCH}
Let $H$ be the injective hull of an $\whp$-weakly-Helly graph~$G$, and let $M \subseteq V(G)$.
If there is a vertex $v \in V(G)$ such that $d_H(v, C_H(M)) > \whp$, then there is a real vertex $u \in D_G(v,2\whp+1)$ such that $\ecc_G^M(u) < \ecc_G^M(v)$.
\end{lemma}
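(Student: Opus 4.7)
The plan is to work inside the injective hull $H$, find a witness vertex $w$ that is much closer than $v$ to $C_H(M)$, then pull $w$ back to a real vertex $u$ of $G$ using Theorem~\ref{thm:closeRealVertex}. The whole point is that since $H$ is Helly, the eccentricity function $\ecc_H^M(\cdot)$ satisfies the exact formula from Lemma~\ref{lem:formula}, so moving one step along a shortest path toward $C_H(M)$ decreases $\ecc_H^M$ by exactly $1$; this gives us enough slack to absorb an $\whp$-perturbation when we move to a real vertex and still end up strictly below $\ecc_H^M(v)$.

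First I would pick a vertex $c \in C_H(M)$ closest to $v$ in $H$ and set $d := d_H(v,c) \geq \whp+1$ by hypothesis. Walking along any shortest $(v,c)$-path in $H$, let $w$ be the vertex at distance $\whp+1$ from $v$, so $d_H(w,c) = d - \whp - 1$. Since $H$ is Helly, Lemma~\ref{lem:formula} applied in $H$ gives $\ecc_H^M(w) = d_H(w,C_H(M)) + rad_H(M) \le (d-\whp-1)+rad_H(M)$, and likewise $\ecc_H^M(v) = d + rad_H(M)$, so $\ecc_H^M(w) \le \ecc_H^M(v) - \whp - 1$.

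Next, by Theorem~\ref{thm:closeRealVertex} applied to the Helly vertex $w$, there exists a real vertex $u \in V(G)$ with $d_H(u,w) \le \whp$. The triangle inequality then yields $d_H(u,v) \le d_H(u,w) + d_H(w,v) \le 2\whp + 1$, and since $G$ is isometric in $H$, also $d_G(u,v) \le 2\whp+1$, placing $u \in D_G(v, 2\whp+1)$. For the eccentricity, the triangle inequality gives $\ecc_H^M(u) \le \ecc_H^M(w) + d_H(u,w) \le \ecc_H^M(v) - \whp - 1 + \whp = \ecc_H^M(v) - 1$. Finally, Proposition~\ref{prop:eccentricityEqual} transfers this back to $G$: $\ecc_G^M(u) = \ecc_H^M(u) < \ecc_H^M(v) = \ecc_G^M(v)$.

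I do not expect a real obstacle here, since each ingredient is already in place: the Helly equation for $\ecc_H^M$, the fact that Helly vertices are within $\whp$ of real ones, and the isometric embedding $G \hookrightarrow H$. The only subtlety is making sure the slack we spend moving from $w$ to the real vertex $u$ (which costs up to $\whp$ in both distance and eccentricity) is strictly less than the slack we gained by moving $v$ to $w$ (which gains $\whp+1$ in eccentricity while costing $\whp+1$ in distance); choosing the walking distance to $w$ as exactly $\whp+1$ is what makes both the $2\whp+1$ distance bound and the strict eccentricity inequality come out simultaneously tight.
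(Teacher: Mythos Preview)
Your proof is correct and follows essentially the same route as the paper: pick $w$ on a shortest $(v,c)$-path in $H$ at distance $\whp+1$ from $v$, use Lemma~\ref{lem:formula} to drop $\ecc_H^M$ by $\whp+1$, then apply Theorem~\ref{thm:closeRealVertex} to find a nearby real vertex $u$ and Proposition~\ref{prop:eccentricityEqual} to transfer the inequality back to $G$. One cosmetic slip: you call $w$ ``the Helly vertex,'' but $w$ need not be Helly---this is harmless since Theorem~\ref{thm:closeRealVertex} applies to every vertex of $\calH(G)$.
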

\begin{proof}
Let $c \in C_H(M)$ be a vertex closest in $H$ to $v$, and assume that $d_H(v, c) \geq \whp + 1$.
Let $w \in V(H)$ be a vertex on a shortest $(v,c)$-path of $H$  such that $d_H(v,w) = \whp + 1$.
Since $H$ is Helly, by Proposition~\ref{prop:eccentricityEqual} and Lemma~\ref{lem:formula}, $\ecc_G^M(v) = \ecc_H^M(v) = d_H(v, C_H(M)) + rad_H(M) = d_H(v,w) + d_H(w,c) + rad_H(M) = \whp + 1 + \ecc_H^M(w)$.
Therefore, $\ecc_H^M(w) = \ecc_G^M(v) - \whp - 1$.
Since $G$ is $\whp$-weakly-Helly, by Theorem~\ref{thm:closeRealVertex},  there is a vertex $u \in V(G)$ such that $d_H(u,w) \leq \whp$.
Hence, $\ecc_G^M(u) = \ecc_H^M(u) \leq \ecc_H^M(w) + \whp \leq \ecc_G^M(v) - 1$. See Fig.~\ref{fig:proofifTooFarFromCH} for an illustration.
\end{proof}

\begin{center}
\includegraphics[scale=1]{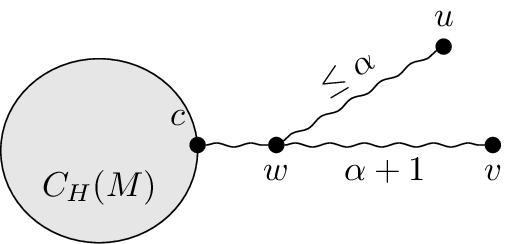}
\captionof{figure}{Illustration to the proof of Lemma~\ref{lem:ifTooFarFromCH}.}\label{fig:proofifTooFarFromCH}
\end{center}

Theorem~\ref{thm:unimodality} summarizes the results of Lemma~\ref{lem:eccentricityIsAlmostUnimodal} and Lemma~\ref{lem:ifTooFarFromCH}.
\begin{theorem}\label{thm:unimodality}
Let $H$ be the injective hull of an $\whp$-weakly-Helly graph~$G$, and let $M \subseteq V(G)$.
If there is a vertex $v \in V(G)$ such that $\ecc_G^M(v) > rad_G(M) + \whp$ or $d_H(v, C_H(M)) > \whp$, then there is a real vertex $u \in D_G(v,2\whp+1)$ such that $\ecc_G^M(u) < \ecc_G^M(v)$.
\end{theorem}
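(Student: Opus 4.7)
The plan is to observe that Theorem~\ref{thm:unimodality} is a direct consolidation of Lemmas~\ref{lem:eccentricityIsAlmostUnimodal} and~\ref{lem:ifTooFarFromCH}, so I would simply split on which disjunct of the hypothesis holds. If $\ecc_G^M(v) > rad_G(M) + \whp$, I would invoke Lemma~\ref{lem:eccentricityIsAlmostUnimodal} directly to obtain a vertex $u \in D_G(v, 2\whp+1)$ with $\ecc_G^M(u) < \ecc_G^M(v)$. If instead $d_H(v, C_H(M)) > \whp$, I would invoke Lemma~\ref{lem:ifTooFarFromCH} to obtain the same conclusion. The output of each lemma coincides with what the theorem asks for (a real vertex within $G$-distance $2\whp+1$ of $v$ with strictly smaller $M$-eccentricity), so no further reconciliation is needed.

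It is worth noting that the two hypotheses are not logically independent. Chaining Proposition~\ref{prop:eccentricityEqual}, Lemma~\ref{lem:formula} applied inside the Helly graph $H$, and Proposition~\ref{prop:radiusInH} yields
\[
\ecc_G^M(v) = \ecc_H^M(v) = d_H(v, C_H(M)) + rad_H(M) \le d_H(v, C_H(M)) + rad_G(M),
\]
so $\ecc_G^M(v) > rad_G(M) + \whp$ forces $d_H(v, C_H(M)) > \whp$; the second disjunct is the weaker condition. Hence a single appeal to Lemma~\ref{lem:ifTooFarFromCH} would already suffice to prove the theorem as stated. I would nevertheless keep the two-case presentation, because Lemma~\ref{lem:eccentricityIsAlmostUnimodal} has a proof that stays \emph{inside} $G$ and uses only the $\whp$-weakly-Helly property, exhibiting that the first (stronger) hypothesis can be handled without detouring through the injective hull, whereas the second hypothesis genuinely requires~$H$.

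There is essentially no obstacle in this argument: both supporting lemmas have already done the substantive work of constructing the required disk system (in the proof of Lemma~\ref{lem:eccentricityIsAlmostUnimodal}) or of walking one step down the shortest path toward $C_H(M)$ and then pulling back to a real vertex via Theorem~\ref{thm:closeRealVertex} (in the proof of Lemma~\ref{lem:ifTooFarFromCH}). The only bookkeeping item is to confirm that both lemmas' conclusions match the theorem's conclusion verbatim, which they do.
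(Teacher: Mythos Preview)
Your proposal is correct and matches the paper's own proof, which simply states that the theorem summarizes Lemmas~\ref{lem:eccentricityIsAlmostUnimodal} and~\ref{lem:ifTooFarFromCH}. Your additional observation that the first hypothesis implies the second (so Lemma~\ref{lem:ifTooFarFromCH} alone would suffice) is a nice extra remark not made explicit in the paper.
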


\section{Estimating all eccentricities}\label{sec:estimatingEcc}
This section provides upper and lower bounds on the eccentricity of a vertex~$v$ based on a variety of conditions: the distance from~$v$ to a closest almost central vertex (i.e., a closest vertex with eccentricity at most $rad_G(M) + \whp$) and whether $v$ is a farthest vertex from some other vertex and if $diam(C_G^{2\whp}(M))$ is bounded. We also prove that any $\whp$-weakly-Helly graph has an eccentricity approximating spanning tree where the additive approximation error depends on the diameter of the set $C^{\whp}(G)$.
Finally, we describe the eccentricity terrain in $\whp$-weakly-Helly graphs, that is, how vertex eccentricities change along vertices of a shortest path to $C_G^{\whp}(M)$. 

\subsection{Using distances to $C_G^{\whp}(M)$}

\begin{lemma} \label{lem:eccUBound}
Let~$G$ be a graph, $M \subseteq V(G)$, and $k \ge 0$.
For every vertex $x \in V(G)$, $\ecc_G^M(x) \le d(x,C_G^k(M)) + rad_G(M) + k$ holds.
\end{lemma}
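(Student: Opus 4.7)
The plan is to apply the triangle inequality through a nearest witness vertex in $C_G^k(M)$. Let $y\in C_G^k(M)$ be a vertex attaining $d(x,y)=d(x,C_G^k(M))$. By the very definition of $C_G^k(M)$, we have $\ecc_G^M(y)\le rad_G(M)+k$, so every vertex of $M$ lies within distance $rad_G(M)+k$ of $y$.

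Then for an arbitrary $u\in M$, the triangle inequality gives
\[
d(x,u)\;\le\;d(x,y)+d(y,u)\;\le\;d(x,C_G^k(M))+\ecc_G^M(y)\;\le\;d(x,C_G^k(M))+rad_G(M)+k.
\]
Taking the maximum over $u\in M$ yields $\ecc_G^M(x)\le d(x,C_G^k(M))+rad_G(M)+k$, which is precisely the claim.

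There is no real obstacle here: the statement follows directly from the triangle inequality combined with the definitions of $C_G^k(M)$ and $\ecc_G^M$. Notably, the argument does not use the $\whp$-weakly-Helly property at all, so the bound holds for any graph $G$, any subset $M$, and any nonnegative integer $k$. The matching lower bound (which will presumably come in a companion lemma) is the direction that genuinely requires the weakly-Helly hypothesis, via Theorem~\ref{thm:unimodality} or Lemma~\ref{lem:ClGisSubsetOfDiskAroundCH}.
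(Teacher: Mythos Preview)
Your proof is correct and follows essentially the same approach as the paper: both pick a closest vertex $x_p\in C_G^k(M)$ and apply the triangle inequality through it, the only cosmetic difference being that the paper fixes a farthest vertex $y\in F_G^M(x)$ first and bounds $d(x,y)\le d(x,x_p)+d(x_p,y)$, whereas you bound $d(x,u)$ for every $u\in M$ and then take the maximum. Your closing remark that this direction needs no weakly-Helly hypothesis is also consistent with the paper's treatment.
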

\begin{proof}
Let $\ecc_G^M(x) = d(x,y)$ where $y \in F_G^M(x)$,
and let $x_p \in C_G^k(M)$ be closest to~$x$.
By the triangle inequality,
$\ecc_G^M(x) \le d(x,x_p) + d(x_p,y) \le d(x,C_G^k(M)) + rad_G(M) + k$.
\end{proof}

\begin{lemma} \label{lem:eccLBound}
Let $G$ be an $\whp$-weakly-Helly graph and $M \subseteq V(G)$.
For every vertex $x \in V(G)$, $\ecc_G^M(x) \ge d(x,C_G^{\whp}(M)) + rad_G(M) - \whp$ holds.
\end{lemma}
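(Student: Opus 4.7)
The plan is to lift the problem to the injective hull $H := \calH(G)$, where the eccentricity function is exactly described by Lemma~\ref{lem:formula}, and then project back to $G$ using Theorem~\ref{thm:closeRealVertex}. The key observation is that on any shortest path toward $C_H(M)$ in $H$, the $H$-eccentricity decreases by exactly one at each step, so we can walk from $x$ toward $C_H(M)$ and stop at precisely the right level.

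Concretely, I would set $k := d_H(x, C_H(M))$, fix a shortest path $x = x_0, x_1, \ldots, x_k = c$ in $H$ to a closest vertex $c \in C_H(M)$, and note that by Proposition~\ref{prop:eccentricityEqual} and Lemma~\ref{lem:formula} we have $\ecc_H^M(x_i) = (k-i) + rad_H(M)$ for every $i$. Next I would pick the index $i^* := k - \bigl(rad_G(M) - rad_H(M)\bigr)$. Using Proposition~\ref{prop:radiusInH} together with the trivial bound $\ecc_G^M(x) \ge rad_G(M)$ and the identity $k = \ecc_G^M(x) - rad_H(M)$ (which follows from Proposition~\ref{prop:eccentricityEqual} and Lemma~\ref{lem:formula} applied at $x$), one checks that $0 \le i^* \le k$, and that $\ecc_H^M(x_{i^*}) = rad_G(M)$ exactly.

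Then I would invoke Theorem~\ref{thm:closeRealVertex} to obtain a real vertex $u \in V(G)$ with $d_H(u, x_{i^*}) \le \whp$. The triangle inequality combined with Proposition~\ref{prop:eccentricityEqual} yields $\ecc_G^M(u) = \ecc_H^M(u) \le \ecc_H^M(x_{i^*}) + \whp = rad_G(M) + \whp$, so $u \in C_G^{\whp}(M)$. Since $G$ is isometric in $H$, we have $d_G(x,u) = d_H(x,u) \le i^* + \whp = \ecc_G^M(x) - rad_G(M) + \whp$, and rearranging gives the claim.

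The main obstacle, conceptually, is calibrating where to stop along the shortest path. A naive attempt, pulling back $c \in C_H(M)$ itself via Theorem~\ref{thm:closeRealVertex}, gives a real vertex of $G$-eccentricity at most $rad_H(M) + \whp$, which may fall short of placing $x$ within distance $\ecc_G^M(x) - rad_G(M) + \whp$ of $C_G^{\whp}(M)$ by the full gap $rad_G(M) - rad_H(M)$, which can be as large as $\whp$; this would cost an additive $\whp$ in the final bound. The trick is to stop the descent exactly at the eccentricity level $rad_G(M)$ rather than at $rad_H(M)$, so that the $\whp$-slack afforded by Theorem~\ref{thm:closeRealVertex} lands us precisely inside $C_G^{\whp}(M)$ without any further loss.
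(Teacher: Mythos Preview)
Your proof is correct and follows essentially the same approach as the paper's own proof: lift to $H=\calH(G)$, walk along a shortest $(x,C_H(M))$-path to the vertex at $H$-eccentricity exactly $rad_G(M)$ (your $x_{i^*}$ is the paper's $y$), then pull back via Theorem~\ref{thm:closeRealVertex} to a real vertex in $C_G^{\whp}(M)$ (your $u$ is the paper's $y^*$), and conclude by the triangle inequality. The paper phrases the ``stopping point'' in terms of distance from the center ($d_H(h,y)=rad_G(M)-rad_H(M)$) rather than your index $i^*$, but the two descriptions coincide.
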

\begin{proof}
Let $H := \calH(G)$ and $h \in C_H(M)$ be a closest vertex to~$x$ in~$H$.
By Lemma~\ref{lem:formula} and Proposition~\ref{prop:eccentricityEqual}, $d_H(x,h) = d_H(x,C_H(M)) = \ecc_H^M(x) - rad_H(M) = \ecc_G^M(x) - rad_H(M) \ge rad_G(M) - rad_H(M)$.
Then, let~$y$ be a vertex on a shortest $(x,h)$-path with $d_H(h,y) = rad_G(M) - rad_H(M)$.
By Lemma~\ref{lem:formula}, $\ecc_H^M(y) = d_H(y,C_H(M)) + rad_H(M) = d_H(y,h) + rad_H(M) = rad_G(M)$.
By Theorem~\ref{thm:closeRealVertex}, there is a real vertex $y^* \in V(G)$ such that $d_H(y,y^*) \leq \whp$.
Applying Proposition~\ref{prop:eccentricityEqual} one obtains $\ecc_G^M(y^*) = \ecc_H^M(y^*) \le rad_G(M) + \whp$.
By the triangle inequality,
$d_H(x,y) \ge d_H(x,y^*) - d_H(y,y^*)  \geq d(x,C_G^\whp(M)) - \whp$.
Thus, $\ecc_G^M(x) = d_H(x,h) + rad_H(M) = d_H(x,y) + rad_G(M) \geq  d(x,C_G^\whp(M)) + rad_G(M) - \whp$.
\end{proof}

Applying Lemma~\ref{lem:eccUBound} with $k=\whp$ and Lemma~\ref{lem:eccLBound}, we obtain the following approximation
of eccentricities in $\whp$-weakly-Helly graphs.
\begin{theorem}\label{thm:eccentricities}
Let $G$ be an $\whp$-weakly-Helly graph. For every $M \subseteq V(G)$, every $x \in V(G)$ satisfies
$$d(x,C_G^\whp(M)) + rad_G(M) - \whp   \leq \ecc_G^M(x) \leq d(x,C_G^\whp(M)) + rad_G(M) + \whp.$$
\end{theorem}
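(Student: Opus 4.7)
The plan is to obtain the theorem as an immediate consequence of the two preceding lemmas, by applying Lemma~\ref{lem:eccUBound} with $k=\whp$ for the upper bound and Lemma~\ref{lem:eccLBound} for the lower bound. Conceptually, the upper bound is an elementary triangle inequality argument that holds for arbitrary graphs (no weak-Helly assumption is needed), whereas the lower bound is where the $\whp$-weakly-Helly hypothesis does real work, via the injective hull characterization from Theorem~\ref{thm:closeRealVertex}.

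For the upper bound, I would pick a vertex $x_p \in C_G^{\whp}(M)$ closest to $x$ and a farthest vertex $y \in F_G^M(x)$, then use $\ecc_G^M(x) = d(x,y) \le d(x,x_p) + d(x_p,y) \le d(x,C_G^{\whp}(M)) + (rad_G(M)+\whp)$, since $x_p$ has eccentricity at most $rad_G(M)+\whp$ with respect to $M$. This is exactly Lemma~\ref{lem:eccUBound} with $k=\whp$.

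For the lower bound, which is the more substantive direction, I would invoke Lemma~\ref{lem:eccLBound} whose proof strategy is as follows: pass to $H := \calH(G)$, take $h \in C_H(M)$ closest to $x$ in $H$, and use Lemma~\ref{lem:formula} (applicable since $H$ is Helly) together with Proposition~\ref{prop:eccentricityEqual} to write $d_H(x,h) = \ecc_G^M(x) - rad_H(M)$. Then walk along a shortest $(x,h)$-path to a vertex $y$ at Helly-distance $rad_G(M)-rad_H(M)$ from $h$, so that $\ecc_H^M(y) = rad_G(M)$. By Theorem~\ref{thm:closeRealVertex}, $y$ has a real neighbor $y^* \in V(G)$ within Helly-distance $\whp$, which forces $y^* \in C_G^{\whp}(M)$. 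A triangle inequality in $H$ then gives $d(x,C_G^{\whp}(M)) \le d_H(x,y) + \whp$, and combining with $\ecc_G^M(x) = d_H(x,y) + rad_G(M)$ yields the claimed lower bound.

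The main (and only) obstacle is really the lower bound, since it secretly depends on a delicate argument in the injective hull; but since that work is already packaged in Lemma~\ref{lem:eccLBound}, the proof of Theorem~\ref{thm:eccentricities} itself reduces to a one-line appeal to the two lemmas, concatenating the inequalities into the single two-sided estimate.
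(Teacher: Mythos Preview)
Your proposal is correct and matches the paper's approach exactly: the paper also derives Theorem~\ref{thm:eccentricities} by applying Lemma~\ref{lem:eccUBound} with $k=\whp$ for the upper bound and Lemma~\ref{lem:eccLBound} for the lower bound, and your sketches of how those two lemmas work mirror the paper's own arguments.
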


If $C_G^\whp(M)$ is known in advance, then one obtains a linear time additive $2\whp$-approximation of all eccentricities by using a breadth-first search from $C_G^\whp(M)$ ($BFS(C_G^\whp(M))$), to obtain all distances to $C_G^\whp(M)$, and a $BFS(c)$ from any fixed $c \in C_G^\whp(M)$ that has a neighbor not in $C_G^\whp(M)$, to compute $rad_G(M) + \whp$. 
Then, for each vertex~$v$, set an approximate eccentricity $\hat{e}^M(v) = d_G(v,C_G^\whp(M)) + rad_G(M) + \whp$.
By Theorem~\ref{thm:eccentricities}, $e_G^M(v) \le \hat{e}^M(v) \le e_G^M(v) + 2\whp$.

By Corollary~\ref{cor:helly-center-to-disk}, any Helly graph~$G$ satisfies $C_G^\ell(M) = D(C_G(M), \ell)$. This also extends to $\whp$-weakly-Helly graphs in the following way. 

\begin{corollary} \label{lem:ClGisSubsetOfDiskAroundCkG}
Let~$G$ be an $\whp$-weakly-Helly graph.
For any $M \subseteq V(G)$ and any integer $\ell \ge 0$, $C_G^\ell(M) \subseteq D_G(C_G^\whp(M), \whp+\ell)$.
\end{corollary}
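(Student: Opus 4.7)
The plan is to observe that this corollary is an essentially immediate consequence of the lower bound on eccentricity provided by Lemma~\ref{lem:eccLBound}, which was just established. I would not need any new machinery beyond that lemma and the definition of $C_G^\ell(M)$.

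Specifically, I would take an arbitrary vertex $v \in C_G^\ell(M)$; by definition this means $\ecc_G^M(v) \le rad_G(M) + \ell$. Applying Lemma~\ref{lem:eccLBound} to $v$ yields
\[
d_G(v, C_G^\whp(M)) + rad_G(M) - \whp \;\le\; \ecc_G^M(v) \;\le\; rad_G(M) + \ell.
\]
Cancelling $rad_G(M)$ and rearranging gives $d_G(v, C_G^\whp(M)) \le \whp + \ell$, which is exactly the statement that $v \in D_G(C_G^\whp(M), \whp + \ell)$. Since $v$ was arbitrary in $C_G^\ell(M)$, this establishes the desired inclusion.

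There is no real obstacle here: the entire content of the corollary is packaged into Lemma~\ref{lem:eccLBound}. The only minor thing to note is that the statement is trivially vacuous if $\ell < 0$ or if $C_G^\ell(M) = \emptyset$, and when $\ell \le \whp$ the inclusion $C_G^\ell(M) \subseteq C_G^\whp(M) \subseteq D_G(C_G^\whp(M), \whp+\ell)$ is immediate without needing the lemma at all; so the interesting regime is $\ell > \whp$, which the above one-line argument handles uniformly. Thus the proof is essentially a single application of Lemma~\ref{lem:eccLBound} followed by arithmetic.
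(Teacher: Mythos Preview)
Your proof is correct and essentially identical to the paper's own argument: the paper also takes an arbitrary $x \in C_G^\ell(M)$, applies the lower bound from Theorem~\ref{thm:eccentricities} (which is exactly Lemma~\ref{lem:eccLBound}) to get $d(x,C_G^\whp(M)) + rad_G(M) - \whp \le \ecc_G^M(x) \le rad_G(M) + \ell$, and cancels to obtain $d(x,C_G^\whp(M)) \le \whp + \ell$.
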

\begin{proof}
If $x \in C_G^\ell(M)$ then,
by Theorem~\ref{thm:eccentricities},  
$d(x,C_G^\whp(M)) + rad_G(M) - \whp   \leq \ecc_G^M(x)\le \ell+rad_G(M)$. Hence, $d(x,C_G^\whp(M))  \leq \whp+\ell.$ 
\end{proof}  

\medskip

We can restate the lower bound on $\ecc_G^M(x)$ in Theorem~\ref{thm:eccentricities} by using thinnes $\kappa(\calH(G))$ of metric intervals of a graph's injective hull.
\begin{lemma} \label{lem:eccUpperBoundForIntervalThinGraphs}
Let~$H$ be the injective hull of a graph~$G$. For any $M \subseteq V(G)$, every $x \in V(G)$ satisfies
$\ecc_G^M(x) \geq d_G(x,C_G^{\kappa(H)}(M)) + rad_G(M)$.
\end{lemma}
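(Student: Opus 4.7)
The plan is to exhibit, for each $x \in V(G)$, a real vertex $y \in V(G)$ with $\ecc_G^M(y) \le rad_G(M) + \kappa(H)$ and $d_G(x,y) = \ecc_G^M(x) - rad_G(M)$; this immediately forces $y \in C_G^{\kappa(H)}(M)$ and hence $d_G(x, C_G^{\kappa(H)}(M)) \le \ecc_G^M(x) - rad_G(M)$, which is the desired inequality. The subtlety is that the natural candidate for $y$ lives on a shortest path in $H$, and may be a Helly vertex; thinness of intervals in $H$ will let me transfer the estimate to a truly real vertex.

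First I would set up the anchor. Since $H$ is Helly, Lemma~\ref{lem:formula} produces a vertex $h \in C_H(M)$ with $d_H(x,h) = \ecc_H^M(x) - rad_H(M)$. Picking any $z \in F_H^M(x)$ (which lies in $M \subseteq V(G)$), the triangle inequality together with $d_H(h,z) \le rad_H(M)$ forces equality $d_H(x,h) + d_H(h,z) = d_H(x,z)$ and $d_H(h,z) = rad_H(M)$; so $h$ lies on a shortest $(x,z)$-path in $H$. Because $x,z \in V(G)$ and $G$ is isometric in $H$, I may take a shortest $(x,z)$-path $P$ whose vertices all belong to $V(G)$, of length $\ecc_G^M(x)$ (by Proposition~\ref{prop:eccentricityEqual}).

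Next I would choose $y \in P$ so that $d_G(x,y) = \ecc_G^M(x) - rad_G(M)$ (nonnegative since $\ecc_G^M(x) \ge rad_G(M)$) and bound its eccentricity by comparing it with the vertex $y'$ on the shortest $(x,h)$-subpath in $H$ at the same distance $\ecc_G^M(x) - rad_G(M)$ from $x$. The comparison vertex $y'$ is between $x$ and $h$ since $rad_H(M) \le rad_G(M)$ (Proposition~\ref{prop:radiusInH}), and $d_H(y',h) = rad_G(M) - rad_H(M)$. Lemma~\ref{lem:formula} applied in the Helly graph $H$ then gives $\ecc_H^M(y') \le d_H(y', C_H(M)) + rad_H(M) \le d_H(y',h) + rad_H(M) = rad_G(M)$.

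Finally, both $y$ and $y'$ lie in the same slice $S_{\ecc_G^M(x) - rad_G(M)}(x,z)$ of the interval $I_H(x,z)$, so by thinness $d_H(y,y') \le \kappa(H)$. Combining with Proposition~\ref{prop:eccentricityEqual}, $\ecc_G^M(y) = \ecc_H^M(y) \le \ecc_H^M(y') + d_H(y,y') \le rad_G(M) + \kappa(H)$, so $y \in C_G^{\kappa(H)}(M)$ and the claim follows. The main obstacle I anticipate is the one described above: a shortest $(x,h)$-path in $H$ need not lie in $G$, so I cannot simply take $y$ on it; extending through $h$ to a real farthest vertex $z$ (to secure a realizable path in $G$) and then using interval thinness to match the real $y \in P$ with the ideal $y'$ on the $H$-side is exactly the maneuver that makes the argument go through.
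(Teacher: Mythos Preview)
Your proposal is correct and follows essentially the same route as the paper: pick a farthest vertex $z\in M$ from $x$, show the closest $h\in C_H(M)$ lies on a geodesic $[x,z]$, take a real geodesic $P\subseteq V(G)$ from $x$ to $z$, and compare the real vertex $y\in P$ at level $\ecc_G^M(x)-rad_G(M)$ with the (possibly Helly) vertex $y'$ at the same level on the $H$-side using interval thinness to bound $\ecc_G^M(y)\le rad_G(M)+\kappa(H)$. The paper's proof is identical up to renaming ($t,y,y^*$ there correspond to your $z,y',y$).
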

\begin{proof}
We apply the same idea as in the proof of Lemma~\ref{lem:eccLBound} with vertex~$x$, where $e_G^M(x) = e_H^M(x)= d_H(x,t)$ for some vertex $t \in M$. 
Let $h \in C_H(M)$ be a closest vertex to~$x$,
and let $y$ be a vertex on a shortest $(x,h)$-path with $d_H(h,y) = rad_G(M) - rad_H(M)$ and $e_H^M(y) = rad_G(M)$.
Then, since $G$ is isometric in $H$, there is a shortest $(x,t)$-path $P$ in $H$ such that each vertex $v \in P$ belongs to $V(G)$. Note also that, by Lemma \ref{lem:formula}, $h$ is on a shortest path from $x$ to $t$ in $H$.  
Let $y^* \in P$ such that $d_G(y^*,t) = rad_G(M)$
so that $y$ and $y^*$ belong to the same interval slice $S_{rad_G(M)}(t,x)$ in~$H$.
Therefore, $d_H(y, y^*) \leq \kappa(H)$ and, by Proposition~\ref{prop:eccentricityEqual}, $\ecc_G^M(y^*) = \ecc_H^M(y^*) \leq e_H^M(y)+ \kappa(H) = rad_G(M) + \kappa(H)$.
Hence, $\ecc_G^M(x) = d_H(x,y) + rad_G(M) =  d_H(x,y^*) + rad_G(M) \geq d_H(x,C_G^{\kappa(H)}(M)) + rad_G(M)$ as $y^*\in C_G^{\kappa(H)}(M)$.
\end{proof}

As in a $\delta$-hyperbolic graph $G$, ${\kappa({\cal H}(G))}\le 2\delta$ (see Subsection~\ref{subsec:hyperbolic}), 
Lemma \ref{lem:eccUpperBoundForIntervalThinGraphs} generalizes a known result for $\delta$-hyperbolic graphs: $\ecc_G(x) \geq d_G(x,C^{2\delta}(G)) + rad(G)$ for all $x\in V(G)$~\cite{ourManuscriptHyperbolicityTerrain}. Note that similar results are known for chordal graphs: $\ecc_G(x)\ge d_G(x,C(G)) + rad(G) - 1$ for every $x\in V(G)$~\cite{Dragan2017EccentricityAT}, and for distance-hereditary graphs:  $\ecc_G(x)= d_G(x,C^1(G)) + rad(G)+1$ for every $x\in V(G)\setminus C(G)$  ~\cite{DBLP:journals/corr/abs-1907-05445}.   

\subsection{A vertex furthest from some other vertex}

Recall that in a Helly graph $G$,  for every vertex $x \in V$ and every farthest vertex $y \in F(x)$, $\ecc(y) \ge 2rad(G) - diam(C(G))$ holds~\cite{FDraganPhD}.

\begin{theorem}\label{lem:eccWrtDiameterOfCenter}
Let~$G$ be an $\whp$-weakly-Helly graph. Then, for every $M \subseteq V(G)$ and every $x \in V(G)$, each vertex $y \in F_G^M(x)$ satisfies $$\ecc_G^M(y) \ge 2rad_G(M) - diam_G(C_G^{2\whp}(M)) - 2\whp,$$ $$\ecc_G^M(y) \ge 2rad_G(M) - diam_G(C_G^{\whp}(M)) - 4\whp.$$
\end{theorem}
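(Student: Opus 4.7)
The plan is to prove the two stated inequalities by different routes: the first by a direct triangle-inequality argument in $G$ using Lemma~\ref{lem:eccLBound}, and the second by descending from the classical Helly-graph bound inside the injective hull $H := \calH(G)$.

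For the first inequality, I would choose $c_1, c_2 \in C_G^{\whp}(M)$ as the closest vertices in $G$ to $x$ and $y$, respectively. Lemma~\ref{lem:eccLBound} applied at $x$ yields $d_G(x,c_1) \le \ecc_G^M(x) - rad_G(M) + \whp$, and applied at $y$ yields $d_G(y,c_2) \le \ecc_G^M(y) - rad_G(M) + \whp$. Since $C_G^{\whp}(M) \subseteq C_G^{2\whp}(M)$, we also have $d_G(c_1, c_2) \le diam_G(C_G^{2\whp}(M))$. Plugging these into the triangle inequality $\ecc_G^M(x) = d_G(x,y) \le d_G(x,c_1) + d_G(c_1,c_2) + d_G(c_2,y)$ and rearranging gives $\ecc_G^M(y) \ge 2 rad_G(M) - diam_G(C_G^{2\whp}(M)) - 2\whp$.

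For the second inequality, I first observe that $y \in F_H^M(x)$, since $y \in M \subseteq V(G)$ and $d_H(x,y) = d_G(x,y) = \ecc_G^M(x) = \ecc_H^M(x)$ (by the isometry of $G$ in $H$ together with Proposition~\ref{prop:eccentricityEqual}). In the Helly graph $H$ I would establish the $M$-version of the classical bound $\ecc_H^M(y) \ge 2 rad_H(M) - diam_H(C_H(M))$ as follows: pick $c_x, c_y \in C_H(M)$ closest in $H$ to $x$ and $y$ respectively. Lemma~\ref{lem:formula} applied to $x$ gives $d_H(x, c_x) + rad_H(M) = \ecc_H^M(x) = d_H(x, y)$, which forces $d_H(c_x, y) = rad_H(M)$ (since $y \in M$ and $c_x \in C_H(M)$ imply $d_H(c_x, y) \le rad_H(M)$). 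Lemma~\ref{lem:formula} applied to $y$ gives $d_H(c_y, y) = \ecc_H^M(y) - rad_H(M)$. The triangle inequality $d_H(c_x, y) \le d_H(c_x, c_y) + d_H(c_y, y)$ with $d_H(c_x, c_y) \le diam_H(C_H(M))$ then yields the claimed Helly-graph bound.

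To descend back to $G$, I combine $\ecc_G^M(y) = \ecc_H^M(y)$ (Proposition~\ref{prop:eccentricityEqual}), $rad_H(M) \ge rad_G(M) - \whp$ (Proposition~\ref{prop:radiusInH}), and $diam_H(C_H(M)) \le diam_G(C_G^{\whp}(M)) + 2\whp$ (Theorem~\ref{thm:diameterOfCenters} with $\ell = 0$). Substituting these three bounds into the Helly-graph inequality yields the second stated bound $\ecc_G^M(y) \ge 2 rad_G(M) - diam_G(C_G^{\whp}(M)) - 4\whp$. The main care required is bookkeeping the additive $\whp$-losses each time one moves between $G$ and $H$ or between $C_H(M)$ and $C_G^{\whp}(M)$; no step presents a conceptual obstacle given the tools developed in Sections~\ref{inj-hull}--\ref{sec:estimatingEcc}.
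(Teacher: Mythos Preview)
Your proof is correct. Both inequalities are established soundly, but your route differs from the paper's in an instructive way.

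For the \emph{first} inequality you stay in $G$ and invoke Lemma~\ref{lem:eccLBound} twice (at $x$ and at $y$) plus a triangle inequality through $c_1,c_2\in C_G^{\whp}(M)$. The paper instead works inside $H=\calH(G)$: it takes $b_h,c_h\in C_H(M)$ closest to $x,y$, moves out along shortest paths by $rad_G(M)-rad_H(M)$ to vertices $b,c\in C_H^{\whp}(M)$, obtains $\ecc_G^M(y)\ge 2rad_G(M)-diam_H(C_H^{\whp}(M))$, and only then transfers back to $G$ via Theorem~\ref{thm:diameterOfCenters} (with $\ell=\whp$ for the first bound, $\ell=0$ for the second). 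Your argument is shorter because Lemma~\ref{lem:eccLBound} has already absorbed the injective-hull work. One remark: your step ``since $C_G^{\whp}(M)\subseteq C_G^{2\whp}(M)$'' is an unnecessary weakening. You already have $c_1,c_2\in C_G^{\whp}(M)$, so $d_G(c_1,c_2)\le diam_G(C_G^{\whp}(M))$ directly, and your computation in fact yields the sharper inequality
\[
\ecc_G^M(y)\ \ge\ 2\,rad_G(M)-diam_G(C_G^{\whp}(M))-2\whp,
\]
which simultaneously implies \emph{both} inequalities of the theorem (the first because $diam_G(C_G^{\whp}(M))\le diam_G(C_G^{2\whp}(M))$, the second because $2\whp\le 4\whp$). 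So your first argument alone already suffices and is tighter than what is stated.

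For the \emph{second} inequality your derivation of the Helly-graph bound $\ecc_H^M(y)\ge 2\,rad_H(M)-diam_H(C_H(M))$ is clean (the equality $d_H(c_x,y)=rad_H(M)$ is justified exactly as you say), and the transfer back to $G$ via Propositions~\ref{prop:eccentricityEqual}, \ref{prop:radiusInH} and Theorem~\ref{thm:diameterOfCenters} is the same endgame the paper uses. The only difference is that the paper reaches the intermediate bound $\ecc_G^M(y)\ge 2\,rad_G(M)-diam_H(C_H(M))-2\whp$ by its explicit $b,c$ construction rather than by quoting the classical Helly inequality and shifting radii; the two routes are equivalent in content.
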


\begin{proof}
Let $H := \calH(G)$ and let $e_G^M(y) = d_G(y,w)$ for some $w \in M$.
As $e_H^M(\cdot)$ is unimodal, in~$H$ there is a closest to~$x$ vertex $b_h \in C_H(M)$ such that $e_G^M(x) = d_H(x,b_h) + rad_H(M)$.
Similarly, in~$H$ there is a closest to~$y$ vertex $c_h \in C_H(M)$ such that $e_G^M(y) = d_H(y,c_h) + rad_H(M)$.
Then, $d_H(x,C_H(M)) = d_H(x,b_h) = \ecc_H^M(x) - rad_H(M) \ge rad_G(M) - rad_H(M)$ and, by symmetry, $d_H(y,c_h) \ge rad_G(M) - rad_H(M)$.
Now let $c \in I(c_h,y)$ be a vertex in~$H$ such that $d_H(c_h,c) = rad_G(M) - rad_H(M)$, and also
$b \in I(b_h,x)$ be a vertex in~$H$ such that $d_H(b_h,b) = rad_G(M) - rad_H(M)$, as illustrated in Fig.~\ref{fig:proofEccWrtDiameterOfCenter}.
By Proposition~\ref{prop:radiusInH}, $e_H^M(b) \le rad_G(M) - rad_H(M) + \ecc_H^M(b_h) = rad_G(M) \le rad_H(M) + \whp$.
By symmetry, $e_H^M(c) \le rad_H(M) + \whp$,
and so both~$b$ and~$c$ belong to $C_H^\whp(M)$.
By the triangle inequality, $rad_G(M) = d_H(b,y) \le d_H(b,c) + d_H(c,y) = d_H(b,c) + d_H(w,y) - rad_G(M)$.
That is, $\ecc_G^M(y) = d_H(w,y) \ge 2rad_G(M) - d_H(b,c) \ge 2rad_G(M) - diam_H(C_H^\whp(M))$. By Corollary~\ref{cor:helly-center-to-disk}, as $H$ is Helly, 
$\ecc_G^M(y) \ge 2rad_G(M) - diam_H(C_H^\whp(M))=2rad_G(M) - diam_H(C_H(M))-2\alpha$.
Applying now Theorem~\ref{thm:diameterOfCenters} with~$\ell=\whp$, we get $diam_H(C_H^\whp(M)) \le 2\whp + diam_G(C_G^{2\whp}(M))$ and hence $\ecc_G^M(y) \ge 2rad_G(M) -  diam_G(C_G^{2\whp}(M)) - 2\whp$.
Applying Theorem~\ref{thm:diameterOfCenters} with~$\ell=0$, we get $diam_H(C_H(M)) \le 2\whp + diam_G(C_G^{\whp}(M))$ and hence $\ecc_G^M(y) \ge 2rad_G(M) -  diam_G(C_G^{\whp}(M)) - 4\whp$.
%
%
%
\end{proof}

It is known that for every vertex $x$, any vertex $y\in F_G(x)$ satisfies  $\ecc_G(y) \ge 2rad(G) - 3$ if $G$ is a chordal graph or a distance-hereditary graph~\cite{10.1007/3-540-58218-5_34,DBLP:conf/esa/ChepoiD94} 
and $\ecc_G(y) \ge diam(G)-2\delta\ge 2rad(G)-6\delta-1$  if $G$ is a $\delta$-hyperbolic graph\cite{Chepoi_2008,Dragan2018RevisitingRD}. Furthermore, for every chordal or distance-hereditary graph $G$, $diam_G(C(G))\leq 3$ holds~\cite{10.1016/S0012-365X(02)00630-1,yushmanovMetricGraphProperties,chepoi:center-triang} and for every $\delta$-hyperbolic graph $G$,  
$diam_G(C^{2\delta}(G))\le 8\delta+ 1$ and $diam_G(C(G))\le 4\delta+ 1$ hold~\cite{Chepoi_2008,Dragan2018RevisitingRD}.  Recall that the diameter of the center of a Helly graph cannot be bounded. Although $C(G)$ itself is Helly and is isometric in $G$ for a Helly graph $G$~\cite{FDraganPhD}, $C(G)$ can have an arbitrarily large diameter as any Helly graph is the center of some Helly graph~\cite{FDraganPhD}. Thus, 
$diam_G(C^{\whp(G)}(G))$ cannot be bounded by a function of  $\whp(G)$ (even for $\whp(G)=0$, i.e., for Helly graphs). 

\begin{center}
\includegraphics[scale=1]{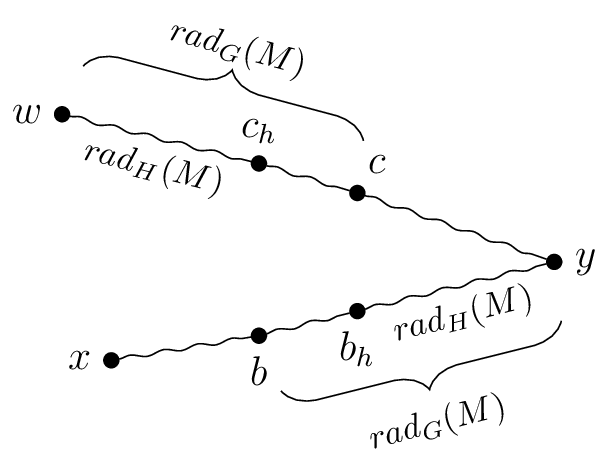}
\captionof{figure}{Illustration to the proof of Theorem~\ref{lem:eccWrtDiameterOfCenter}.}\label{fig:proofEccWrtDiameterOfCenter}
\end{center}

\subsection{Eccentricity approximating spanning tree}
An \emph{eccentricity t-approximating spanning tree} is a spanning tree $T$ such that $\ecc_T(v) - \ecc_G(v) \leq t$ holds for all $v \in V(G)$.
Such a tree tries to approximately preserve only distances from $v$ to its farthest vertices, allowing for a larger increase in distances to nearby vertices.
Many classes of graphs have an eccentricity $t$-approximating spanning tree, where~$t$ is a small constant (e.g., see~\cite{nandakumar90,Prisner_2000,Dragan2017EccentricityAT,ourManuscriptHyperbolicityTerrain,Chepoi2018FastAO} and papers cited therein).
We will show that $\alpha$-weakly-Helly graphs have an eccentricity $t$-approximating spanning tree where~$t$ depends linearly on $\alpha$ and on the diameter of $C^\whp(G)$.

%

%

\begin{lemma}\label{lem:existsVertexGoodForTree}
Let $H$ be the injective hull of an $\whp$-weakly-Helly graph~$G$.
For every $M \subseteq V(G)$, there is a real vertex $c^* \in V(G)$ such that, for every vertex $x \in V(G)$, $d_G(x,c^*) + \ecc_G^M(c^*) - \ecc_G^M(x) \leq \lceil \frac{diam_H(C_H(M))}{2} \rceil + 2\whp$ holds.
\end{lemma}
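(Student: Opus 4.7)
The plan is to pass to the injective hull $H=\calH(G)$, where everything is Helly and the eccentricity function is perfectly unimodal, select a ``center of the center'' vertex $c\in C_H(M)$ that minimizes the maximum distance to all vertices of $C_H(M)$, and then pull $c$ back into $V(G)$ using Theorem~\ref{thm:closeRealVertex}. The real vertex $c^*$ obtained this way will be close to $c$, near-central in $H$ (hence in $G$), and close to an optimal ``spanning tree root'' whose distance profile is controlled via Lemma~\ref{lem:formula}.

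In detail, I would proceed as follows. First, by Lemma~\ref{lem:allCentersHellyIsom}, the graph induced by $C_H(M)$ is itself Helly and isometric in $H$, so by Lemma~\ref{lem:mDiamRadInHelly} applied to $C_H(M)$, there exists a vertex $c\in C_H(M)$ with $d_H(c,c')\le \lceil diam_H(C_H(M))/2 \rceil$ for every $c'\in C_H(M)$. Second, by Theorem~\ref{thm:closeRealVertex} applied to the Helly vertex $c$ (or to $c$ itself if $c\in V(G)$), pick a real vertex $c^*\in V(G)$ with $d_H(c,c^*)\le \whp$. Third, for an arbitrary $x\in V(G)$, let $c_x\in C_H(M)$ be a closest vertex in $H$ to $x$; by Lemma~\ref{lem:formula} (since $H$ is Helly) and Proposition~\ref{prop:eccentricityEqual},
\[
\ecc_G^M(x)=\ecc_H^M(x)=d_H(x,c_x)+rad_H(M).
\]
Fourth, apply the triangle inequality in $H$ to bound $d_G(x,c^*)=d_H(x,c^*)\le d_H(x,c_x)+d_H(c_x,c)+d_H(c,c^*)\le d_H(x,c_x)+\lceil diam_H(C_H(M))/2\rceil+\whp$, and bound $\ecc_G^M(c^*)=\ecc_H^M(c^*)\le d_H(c^*,c)+\ecc_H^M(c)\le \whp+rad_H(M)$ since $c\in C_H(M)$. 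Fifth, add these two estimates and subtract $\ecc_G^M(x)=d_H(x,c_x)+rad_H(M)$; the $d_H(x,c_x)$ and $rad_H(M)$ terms cancel, leaving exactly $\lceil diam_H(C_H(M))/2\rceil+2\whp$.

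I do not anticipate a serious obstacle: the structure is essentially a two-step triangle-inequality calculation, with each application paying one factor of $\whp$, one when moving from the near-optimal Helly vertex $c$ to its real surrogate $c^*$ for the eccentricity bound, and one when routing the path $x\to c_x\to c\to c^*$ for the distance bound. The only point requiring care is justifying that the ``center of $C_H(M)$'' in the sense of Lemma~\ref{lem:mDiamRadInHelly} really exists as a vertex (not merely as a distance value), which is why invoking Lemma~\ref{lem:allCentersHellyIsom} first to guarantee that $C_H(M)$ is Helly and isometric in $H$ is crucial; otherwise the value $\lceil diam_H(C_H(M))/2\rceil$ would not be realized by an actual vertex and an additional additive error could creep in.
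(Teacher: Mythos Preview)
Your proposal is correct and follows essentially the same approach as the paper: pick a center-of-the-center vertex $c$ in $H$, pull it back to a real vertex $c^*$ via Theorem~\ref{thm:closeRealVertex}, and run the two triangle inequalities you describe, with the cancellation going through exactly as you wrote. The only cosmetic difference is that the paper selects $c\in C_H(C_H(M))$ (the center in $H$ of the set $C_H(M)$) and then uses $\ecc_H^M(c)=rad_H(M)$, whereas you work inside the isometric Helly subgraph $C_H(M)$ and pick $c$ there directly; your phrasing makes the membership $c\in C_H(M)$ (needed for the bound $\ecc_H^M(c^*)\le rad_H(M)+\whp$) explicit rather than implicit.
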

\begin{proof}
Let $c \in V(H)$ be a vertex belonging to $C_H(C_H(M))$.
By Theorem~\ref{thm:closeRealVertex}, there is a real vertex $c^* \in V(G)$ such that $d_H(c,c^*) \le \whp$.

We first establish a few useful inequalities.
Since $H$ is Helly, by Lemma~\ref{lem:allCentersHellyIsom}, $C_H(M)$ and $C_H(C_H(M))$ are also Helly and isometric in~$H$.
Thus, by Lemma~\ref{lem:mDiamRadInHelly}, $rad_H(M) = \lceil diam_H(M) / 2 \rceil$ and $rad_H(C_H(M)) = \lceil diam_H(C_H(M)) / 2 \rceil$.
Let $c_x \in C_H(M)$ be a closest vertex to $x$.
By Lemma~\ref{lem:formula}, $\ecc_G^M(x) = \ecc_H^M(x) =  d_H(x,C_H(M)) + rad_H(M) = d_H(x,c_x) + \lceil diam_H(M) / 2 \rceil$.
By 
the triangle inequality, $\ecc_G^M(c^*) = \ecc_H^M(c^*) \le \ecc_H^M(c) + \whp = \lceil diam_H(M) / 2 \rceil + \whp$.
Since $G$ is isometric in~$H$, $d_G(x,c^*) = d_H(x,c^*) \le d_H(x,c_x) + d_H(c_x,c) + d_H(c,c^*) \le d_H(x,c_x) + rad_H(C_H(M)) + \whp = d_H(x,c_x) + \lceil diam_H(C_H(M)) / 2 \rceil + \whp$.
We are now ready to prove the claim.
\begin{align*}
    \ecc_G^M(x) = \ecc_H^M(x)
    &= d_H(x,c_x) + \lceil diam_H(M) / 2 \rceil \\
    &\ge (d_G(x,c^*) - \lceil diam_H(C_H(M)) / 2 \rceil - \whp) + \lceil diam_H(M) / 2 \rceil \\
    &\ge d_G(x,c^*) - \lceil diam_H(C_H(M)) / 2 \rceil + \ecc_G^M(c^*) - 2\whp.
\end{align*}

Therefore, $d_G(x,c^*) + \ecc_G^M(c^*) - \ecc_G^M(x) \le 
diam_H(C_H(M)) / 2 \rceil + 2\whp$.
\end{proof}

\begin{theorem}\label{thm:spanningTree}
Let $G$ be an $\whp$-weakly-Helly graph.
For every $M \subseteq V(G)$, $G$ has a spanning tree $T$ such that, for all $x \in V(G)$, $\ecc_T^M(x) - \ecc_G^M(x) \le \lceil diam_G(C_G^\whp(M)) / 2 \rceil + 3\whp$.
\end{theorem}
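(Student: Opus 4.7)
The plan is to root a BFS spanning tree at the vertex $c^*$ supplied by Lemma~\ref{lem:existsVertexGoodForTree} and then bound the tree-eccentricity of every vertex $x$ by a straight-line argument through $c^*$. Concretely, I would let $T$ be any BFS tree of $G$ rooted at $c^*$. Since $T$ is a BFS tree with root $c^*$, distances from $c^*$ are preserved: $d_T(c^*,x)=d_G(c^*,x)$ for every $x \in V(G)$, and consequently $\ecc_T^M(c^*)=\max_{y\in M}d_T(c^*,y)=\max_{y\in M}d_G(c^*,y)=\ecc_G^M(c^*)$.

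Next I would pass from an arbitrary vertex $x$ to $c^*$ and then out to $M$ along $T$. By the triangle inequality in $T$,
\[
\ecc_T^M(x)\;\le\;d_T(x,c^*)+\ecc_T^M(c^*)\;=\;d_G(x,c^*)+\ecc_G^M(c^*).
\]
Now apply Lemma~\ref{lem:existsVertexGoodForTree} to the right-hand side to obtain
\[
\ecc_T^M(x)-\ecc_G^M(x)\;\le\;d_G(x,c^*)+\ecc_G^M(c^*)-\ecc_G^M(x)\;\le\;\Bigl\lceil\tfrac{diam_H(C_H(M))}{2}\Bigr\rceil+2\whp.
\]

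The last step is to translate the bound from the injective hull back to $G$. Applying Theorem~\ref{thm:diameterOfCenters} with $\ell=0$ gives $diam_H(C_H(M))\le diam_G(C_G^{\whp}(M))+2\whp$, so
\[
\Bigl\lceil\tfrac{diam_H(C_H(M))}{2}\Bigr\rceil\;\le\;\Bigl\lceil\tfrac{diam_G(C_G^{\whp}(M))}{2}\Bigr\rceil+\whp.
\]
Combining the two displayed inequalities yields $\ecc_T^M(x)-\ecc_G^M(x)\le \lceil diam_G(C_G^{\whp}(M))/2\rceil+3\whp$ for every $x\in V(G)$, as required.

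There is no real obstacle here: the heavy lifting was done in Lemma~\ref{lem:existsVertexGoodForTree} (which extracts a ``near-median'' real vertex $c^*$ from the center-of-center in $H$) and in Theorem~\ref{thm:diameterOfCenters} (which transfers the diameter of central sets from $H$ back to $G$). The only mild care needed is to remember that the tree-eccentricity from $c^*$ equals $\ecc_G^M(c^*)$, which is exactly what a BFS-rooted-at-$c^*$ spanning tree provides.
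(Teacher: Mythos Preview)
Your proof is correct and follows essentially the same approach as the paper: both take a BFS tree rooted at the vertex $c^*$ from Lemma~\ref{lem:existsVertexGoodForTree}, bound $\ecc_T^M(x)$ via the triangle inequality through $c^*$, and then convert the $diam_H(C_H(M))$ term to $diam_G(C_G^{\whp}(M))$ using Theorem~\ref{thm:diameterOfCenters} with $\ell=0$. Your write-up is in fact slightly more explicit in justifying $\ecc_T^M(c^*)=\ecc_G^M(c^*)$.
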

\begin{proof}
By Lemma~\ref{lem:existsVertexGoodForTree}, there exists a vertex~$c^* \in V(G)$ such that any vertex~$x \in V(G)$ satisfies $d_G(x,c^*) + \ecc_G^M(c^*) - \ecc_G^M(x) \leq \lceil diam_H(C_H(M)) / 2 \rceil + 2\whp$.
Let $T$ be a BFS tree of $G$ rooted at~$c^*$.
Then, any vertex~$x \in V(G)$ has $\ecc_T^M(x) - \ecc_G^M(x) \le d_T(x,c^*) + \ecc_T^M(c^*) - \ecc_G^M(x)$.
As distances to~$c^*$ are preserved in~$T$, one obtains
$\ecc_T^M(x) - \ecc_G^M(x) \le \lceil diam_H(C_H(M)) / 2 \rceil + 2\whp$.
Applying Theorem~\ref{thm:diameterOfCenters} with $\ell = 0$, $diam_H(C_H(M)) \le diam_G(C_G^\whp(M)) + 2\whp$.
Hence, $\ecc_T^M(x) - \ecc_G^M(x)  \le \lceil (diam_G(C_G^\whp(M)) + 2\whp)/2) \rceil + 2\whp$, establishing the desired result.
\end{proof}

It is known that every chordal graph and every distance-hereditary graph has an eccentricity 2-approximating spanning tree~\cite{Prisner_2000,Dragan2017EccentricityAT,DBLP:journals/ipl/Dragan20,doi:10.1137/S0895480195295471} and  every $\delta$-hyperbolic graph has an eccentricity  $(4\delta+ 1)$-approximating spanning  tree~\cite{ourManuscriptHyperbolicityTerrain}. These graph classes have additional nice properties that allowed to get for them sharper results than the one of Theorem \ref{thm:spanningTree} which holds for general $\whp$-weakly-Helly graphs.

\subsection{Eccentricity terrain}

We can describe the behavior of the eccentricity function in a graph in terms of graph's \emph{eccentricity terrain}, that is, how vertex eccentricities change along vertices of a shortest path to $C_G^{\whp}(M)$. 
We define an ordered pair of vertices $(u,v)$, where $(u,v) \in E$,
as an \emph{up-edge} if $\ecc_G^M(u) < \ecc_G^M(v)$,
as a \emph{down-edge} if $\ecc_G^M(u) > \ecc_G^M(v)$, and
as a \emph{horizontal-edge} if $\ecc_G^M(u) = \ecc_G^M(v)$.
Let $\up(P(y,x))$, $\horiz(P(y,x))$, and $\down(P(y,x))$ denote the number of up-edges, horizontal-edges, and down-edges along a shortest path $P(y,x)$ from $y \in V(G)$ to $x \in V(G)$.

We note that Lemma~\ref{lem:upHorizontalEdgesBound} is shown in~\cite{ourManuscriptHyperbolicityTerrain} for the eccentricity function $\ecc_G(\cdot)$. For completeness, we provide a proof that it holds for $\ecc_G^M(\cdot)$ for any $M \subseteq V(G)$.

\begin{lemma}\label{lem:upHorizontalEdgesBound}~\cite{ourManuscriptHyperbolicityTerrain} Let $G$ be an arbitrary graph and $M \subseteq V(G)$. For any shortest path $P(y,x)$ of $G$ from a vertex $y$ to a vertex $x$ the following holds:
\setlist{nolistsep}
\begin{itemize}[noitemsep]
  \item[i)] $\down(P(y,x)) - \up(P(y,x)) = \ecc_G^M(y) - \ecc_G^M(x)$, and
  \item[ii)] $2\up(P(y,x)) + \horiz(P(y,x)) = d_G(y,x) - (\ecc_G^M(y) - \ecc_G^M(x))$.
\end{itemize}
\end{lemma}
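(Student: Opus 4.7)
The plan is to reduce both identities to two elementary observations: first, that for any edge $(u,v) \in E(G)$ the eccentricities $\ecc_G^M(u)$ and $\ecc_G^M(v)$ differ by at most~$1$, so that, when orienting the edges of $P(y,x)$ from $y$ toward $x$, every oriented edge is exactly one of: up-edge contributing $+1$ to the running eccentricity, down-edge contributing $-1$, or horizontal-edge contributing $0$; second, that the length of the path is simply $\up(P) + \down(P) + \horiz(P) = d_G(y,x)$.

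For the first observation I would note that, for any $w \in M$, the triangle inequality gives $d_G(u,w) \le d_G(u,v) + d_G(v,w) = 1 + d_G(v,w) \le 1 + \ecc_G^M(v)$, whence $\ecc_G^M(u) \le \ecc_G^M(v) + 1$, and symmetrically $\ecc_G^M(v) \le \ecc_G^M(u) + 1$. Thus every edge of $P(y,x)$ falls into exactly one of the three categories, and the eccentricity strictly changes by $\pm 1$ on up/down-edges and stays constant on horizontal-edges.

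For part~(i), I would telescope the eccentricity differences along $P(y,x) = (y = v_0, v_1, \dots, v_k = x)$:
\[
\ecc_G^M(x) - \ecc_G^M(y) = \sum_{i=1}^{k}\bigl(\ecc_G^M(v_i) - \ecc_G^M(v_{i-1})\bigr) = \up(P(y,x)) - \down(P(y,x)),
\]
where each $+1$ contribution corresponds to an up-edge and each $-1$ to a down-edge, horizontal edges contributing $0$. Rearranging yields $\down(P(y,x)) - \up(P(y,x)) = \ecc_G^M(y) - \ecc_G^M(x)$, proving~(i). For part~(ii), I would combine $\up(P(y,x)) + \down(P(y,x)) + \horiz(P(y,x)) = d_G(y,x)$ with the identity from~(i), substituting $\down(P(y,x)) = \up(P(y,x)) + \ecc_G^M(y) - \ecc_G^M(x)$ to obtain $2\up(P(y,x)) + \horiz(P(y,x)) = d_G(y,x) - (\ecc_G^M(y) - \ecc_G^M(x))$.

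There is essentially no obstacle here: the only non-trivial ingredient is the one-step Lipschitz property of the eccentricity function with respect to $M$, which is immediate from the triangle inequality and does not depend on any Helly-type assumption. The proof is therefore a short combinatorial bookkeeping argument that holds for every graph $G$, every $M \subseteq V(G)$, and every shortest path of $G$.
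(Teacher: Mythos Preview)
Your proof is correct and follows essentially the same approach as the paper: both arguments telescope the eccentricity differences along the path (the paper phrases this as an induction on $d_G(y,v)$ with a case analysis on the edge type, which is the same telescoping written out longhand) and then derive (ii) from (i) together with $\up+\down+\horiz=d_G(y,x)$. Your presentation is in fact tidier, and you make explicit the $1$-Lipschitz property of $\ecc_G^M$ that the paper's case analysis uses implicitly.
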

\begin{proof}
We use an induction on $d_G(y,v)$ for any vertex $v \in P(y,x)$.
First, assume that $v$ is adjacent to $y$.
If~$(y,v)$ is an up-edge, then $\ecc_G^M(y) - \ecc_G^M(v) = -1$ and $\down(P(y,v)) - \up(P(y,v)) = -1$.
If~$(y,v)$ is a horizontal-edge, then $\ecc_G^M(y) - \ecc_G^M(v) = 0$ and $\down(P(y,v)) - \up(P(y,v)) = 0$.
If~$(y,v)$ is a down-edge, then $\ecc_G^M(y) - \ecc_G^M(v) = 1$ and $\down(P(y,v)) - \up(P(y,v)) = 1$.
Now consider an arbitrary vertex $v \in P(y,x)$ and assume, by induction, that $\ecc_G^M(y) - \ecc_G^M(v) = \down(P(y,v)) - \up(P(y,v))$.
Let vertex $u \in P(y,x)$ be adjacent to $v$ with $d(y,u) = d(y,v) + 1$.
By definition, $\down(P(y,u)) = \down(P(y,v)) + \down((v,u))$ and
$\up(P(y,u)) = \up(P(y,v)) + \up((v,u))$.
We consider three cases based on the classification of edge $(v,u)$.

If~$(v,u)$ is an up-edge, then $\ecc_G^M(u) = \ecc_G^M(v) + 1$, $\up((v,u)) = 1$, and $\down((v,u)) =0=  \up((v,u)) - 1$.
By the inductive hypothesis,
$\down(P(y,u)) = \down(P(y,v)) + \down((v,u))
                       = \up(P(y,v)) + \ecc_G^M(y) - \ecc_G^M(v) + \up((v,u)) - 1
                       = \up(P(y,u)) + \ecc_G^M(y) - \ecc_G^M(v) - 1
                       = \up(P(y,u)) + \ecc_G^M(y) - \ecc_G^M(u)$.

If~$(v,u)$ is a horizontal-edge, then $\ecc_G^M(u) = \ecc_G^M(v)$, $\up((v,u)) = 0=\down((v,u))$.
By the inductive hypothesis,
$\down(P(y,u)) = \down(P(y,v)) + \down((v,u))
                       = \up(P(y,v)) + \ecc_G^M(y) - \ecc_G^M(v) + \up((v,u))
                       = \up(P(y,u)) + \ecc_G^M(y) - \ecc_G^M(v)
                       = \up(P(y,u)) + \ecc_G^M(y) - \ecc_G^M(u)$.

If~$(v,u)$ is a down-edge, then $\ecc_G^M(u) = \ecc_G^M(v) - 1$, $\up((v,u)) = 0$, and  $\down((v,u)) = \up((v,u)) + 1$.
By the inductive hypothesis,
$\down(P(y,u)) = \down(P(y,v)) + \down((v,u))
                       = \up(P(y,v)) + \ecc_G^M(y) - \ecc_G^M(v) + \up((v,u)) + 1
                       = \up(P(y,u)) + \ecc_G^M(y) - \ecc_G^M(v) + 1
                       = \up(P(y,u)) + \ecc_G^M(y) - \ecc_G^M(u)$.

This completes the proof of (i) that $\down(P(y,x)) - \up(P(y,x)) = \ecc_G^M(y) - \ecc_G^M(x)$.
To complete the proof of (ii), it now suffices to see that
$\up(P(y,x)) + \horiz(P(y,x)) + (\up(P(y,x)) + \ecc_G^M(y) - \ecc_G^M(x)) = \up(P(y,x)) + \horiz(P(y,x)) + \down(P(y,x)) = d_G(y,x)$.
Hence,  $2\up(P(y,x)) + \horiz(P(y,x)) = d_G(y,x) - (\ecc_G^M(y) - \ecc_G^M(x))$.
\end{proof}

\begin{theorem}\label{thm:upHorizontalEdgesBoundWH}
Let $G$ be an $\whp$-weakly-Helly graph and $M \subseteq V(G)$. For any shortest path $P(y,x)$ of $G$ from a vertex $y \notin C_G^\whp(M)$ to a closest vertex $x \in C_G^\whp(M)$, $2\up(P(y,x)) + \horiz(P(y,x)) \leq 2\whp$ holds.
\end{theorem}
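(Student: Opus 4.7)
The plan is to combine the identity from Lemma~\ref{lem:upHorizontalEdgesBound}(ii) with the two-sided eccentricity estimate of Theorem~\ref{thm:eccentricities}. Specifically, since $P(y,x)$ is a shortest path, Lemma~\ref{lem:upHorizontalEdgesBound}(ii) gives the equality
\[
2\up(P(y,x)) + \horiz(P(y,x)) \;=\; d_G(y,x) - \bigl(\ecc_G^M(y) - \ecc_G^M(x)\bigr),
\]
so it suffices to show $d_G(y,x) + \ecc_G^M(x) - \ecc_G^M(y) \le 2\whp$.

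The first bound comes from the fact that $x$ is a \emph{closest} vertex of $C_G^\whp(M)$ to $y$, so $d_G(y,x) = d_G(y, C_G^\whp(M))$. Applying the lower bound of Theorem~\ref{thm:eccentricities} to $y$ yields
\[
\ecc_G^M(y) \;\ge\; d_G(y,C_G^\whp(M)) + rad_G(M) - \whp \;=\; d_G(y,x) + rad_G(M) - \whp.
\]
The second bound is immediate from $x \in C_G^\whp(M)$, which gives $\ecc_G^M(x) \le rad_G(M) + \whp$. Subtracting these two inequalities yields
\[
d_G(y,x) + \ecc_G^M(x) - \ecc_G^M(y) \;\le\; (rad_G(M) + \whp) - (rad_G(M) - \whp) \;=\; 2\whp,
\]
which combined with the identity from Lemma~\ref{lem:upHorizontalEdgesBound}(ii) finishes the proof.

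I do not anticipate a real obstacle here; the work was essentially done in Theorem~\ref{thm:eccentricities}, which packages both the injective-hull lifting and the $\whp$-weakly-Helly property into a single inequality. The role of the hypothesis $y \notin C_G^\whp(M)$ is not to make either inequality tighter but simply to guarantee $d_G(y,x) \ge 1$, so that the shortest path $P(y,x)$ is nontrivial and the statement is nonvacuous; the same identity trivially holds with value $0$ when $y \in C_G^\whp(M)$.
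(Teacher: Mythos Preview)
Your proof is correct and follows essentially the same approach as the paper: both combine the identity of Lemma~\ref{lem:upHorizontalEdgesBound}(ii) with the lower bound of Theorem~\ref{thm:eccentricities} applied to $y$ and the bound $\ecc_G^M(x) \le rad_G(M)+\whp$ from $x\in C_G^\whp(M)$. The paper in fact asserts the equality $\ecc_G^M(x)=rad_G(M)+\whp$ (using that $y\notin C_G^\whp(M)$ forces $x$ to lie on the boundary of $C_G^\whp(M)$), but your use of the inequality is sufficient and arguably cleaner.
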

\begin{proof}
Let $x \in C_G^\whp(M)$ be closest to $y$. Hence, $\ecc_G^M(x) = rad_G(M) + \whp$.
Applying Theorem~\ref{thm:eccentricities}, $\ecc_G^M(y) \geq d_G(y,C_G^\whp(M)) + rad_G(M) - \whp$.
Hence, $d_G(y,x) = d_G(y,C_G^\whp(M)) \le \ecc_G^M(y) - rad_G(M) + \whp = \ecc_G^M(y) - \ecc_G^M(x) + 2\whp$.
Applying Lemma~\ref{lem:upHorizontalEdgesBound}, we  obtain the desired result.
\end{proof}

As a consequence of Theorem~\ref{thm:upHorizontalEdgesBoundWH},
at most $2\whp(G)$ non-descending edges can occur along every shortest path from any vertex~$y \notin C^{\whp}(G)$ to $C^{\whp}(G)$. Hence, in any shortest path to $C^{\whp}(G)$, the number of vertices with locality more than 1 does not exceed $2\whp(G)$. These kind of results were only known for chordal graphs and distance-hereditary graphs (at most one non-descending edge, i.e., horizontal-edge~\cite{Dragan2017EccentricityAT,DBLP:journals/corr/abs-1907-05445})  and for $\delta$-hyperbolic graphs (at most $4\delta$ non-descending edges~\cite{ourManuscriptHyperbolicityTerrain}). 

\section{Classes of graphs having small \hg{}}\label{sec:cases}
Let $G$ be a graph and let $\alpha:=\alpha(G)$ be its Helly-gap.  
We now focus on the upper bound on $\whp$ for some special graph classes and identify those classes which exhibit a topological Helly-likeness, that is, classes which have small $\alpha$.
Recall that, by definition, Helly graphs are 0-weakly-Helly.
It was also recently shown that each Helly vertex in the injective hull of a distance-hereditary graph is adjacent to a real vertex~\cite{ourManuscriptInjectiveHulls};
thus distance-hereditary graphs are 1-weakly-Helly (this result follows also from an earlier result on existence of $r$-dominating cliques in distance-hereditary graphs\cite{10.1007/3-540-58218-5_34}). {\em  Distance-hereditary graphs} can be defined as the graphs where each induced path is a shortest path.


We consider several other graph classes and summarize our findings in Table~\ref{table:kWeakHellyClasses}.

\begin{table}[h]
\begin{center}
\begin{tabular}{ |p{5cm}|p{3cm}|  } 
 \hline
 \textbf{Graph class}     & \textbf{\hg{}} \\         \hline
 Helly                    & $\whp(G)=0$ \\                             \hline
 Distance-hereditary      & $\whp(G) \leq 1$ \\                        \hline
 $k$-Chordal              & $\whp(G) \leq \lfloor k/2 \rfloor$ \\      \hline
 Chordal                  & $\whp(G) \leq 1$ \\                        \hline
 AT-free                  & $\whp(G) \leq 2$ \\                        \hline
 $n \times n$ Rectilinear grid & $\whp(G) = 1$ \\                           \hline
 $\delta$-Hyperbolic      & $\whp(G) \leq 2\delta $ \\                 \hline
 Cycle $C_n$              & $\whp(G) = \lfloor n/4 \rfloor$ \\         \hline
 $\alpha_i$-Metric        & $\whp(G) \le \lceil i/2 \rceil$ \\        \hline
 Tree-breadth $tb(G)$     & $\whp(G) \leq tb(G)$  \\            \hline
 Tree-length $tl(G)$      & $\whp(G) \leq tl(G)$  \\          \hline
 Tree-width $tw(G)$       & no relation  \\          \hline
 Bridged                  & unbounded\\           \hline
\end{tabular}
\end{center}
\caption{The bound on the \hg{} $\whp(G)$ for various graph classes.}\label{table:kWeakHellyClasses}
\end{table}

\subsection{Rectilinear grids}\label{subsec:grids} Let $G$ be an $n \times n$ rectilinear grid (the cartesian product of two paths of length $n$). $G$ can be isometrically embedded into a $2n \times 2n$ king grid $H$ (the strong product of two paths of length $2n$, and a natural subclass of Helly graphs) wherein the four extreme vertices of $G$ are the midpoints of each side of $H$ (see Fig.~\ref{fig:planrGridInKingGrid}). One obtains $\calH(G)$ by removing from $H$ any peripheral vertices that are not present in $G$. Therefore, each Helly vertex $h$ of $\calH(G)$ corresponds to one of the $n^2$ squares of the grid $G$, where $h$ is adjacent to the four corners of the square and to each of the Helly vertices corresponding to the 2-4 adjacent squares. Hence, the \hg{} of $G$ is 1.

\begin{center}
\includegraphics[scale=1]{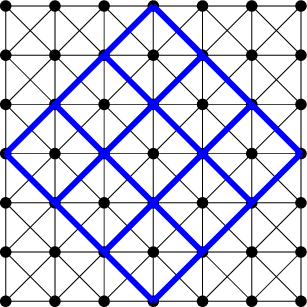}
\captionof{figure}{An $n \times n$ rectilinear grid (shown in bold blue lines) is isometrically embedded into a $2n \times 2n$ king grid (shown in thin black lines).}\label{fig:planrGridInKingGrid}
\end{center}

\subsection{Graphs of bounded 
hyperbolicity}\label{subsec:hyperbolic}
Gromov~\cite{Gromov1987} defines $\delta$-hyperbolic graphs via a simple 4-point condition:
for any four vertices $u,v,w,x$, the two larger of the three distance sums $d(u,v) + d(w,x)$, $d(u,w) + d(v,x)$, and $d(u,x) + d(v,w)$ differ by at most $2\delta \geq 0$.
The smallest value $\delta$ for which $G$ is $\delta$-hyperbolic is called the \emph{hyperbolicity} of $G$ and is denoted by $\delta(G)$. As we have mentioned earlier, we omit~$G$ from subindices when the graph is known by context.

Recall that $\kappa(G)$ denotes the smallest value $\kappa$ for which all intervals of~$G$ are $\kappa$-thin.
The following lemma is a folklore and easy to show using the definition of hyperbolicity.
\begin{lemma} \label{half-thin}
 For any graph $G$, $\kappa(G) \leq {2}\delta(G)$.
\end{lemma}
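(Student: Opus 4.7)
The plan is to reduce the statement to an immediate application of the 4-point condition defining hyperbolicity. Fix any interval $I(x,y)$, any level $\ell$ with $0 \le \ell \le d(x,y)$, and any two vertices $u,v \in S_\ell(x,y)$. By definition of an interval slice, $d(x,u) = d(x,v) = \ell$ and $d(u,y) = d(v,y) = d(x,y) - \ell$. I will apply the 4-point condition to the quadruple $\{x,y,u,v\}$.

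The key observation is that two of the three distance sums collapse to the same value. Namely, $d(x,u) + d(y,v) = d(x,v) + d(y,u) = d(x,y)$, while the third sum is $d(x,y) + d(u,v)$. So the third sum is at least as large as the other two (which are equal), hence it is one of the two larger sums, and any of the other two is the second. The 4-point condition then forces the difference, which is exactly $d(u,v)$, to be at most $2\delta(G)$. Since this holds for every interval and every slice, $\kappa(G) \le 2\delta(G)$.

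The ``hard part'' is essentially nonexistent here: the entire argument is just recognizing that a slice of an interval produces a particularly symmetric 4-tuple for which two of the three Gromov sums coincide. The only minor care needed is in quoting the 4-point condition in the form ``the two larger sums differ by at most $2\delta$,'' matching it correctly against the identified largest sum $d(x,y) + d(u,v)$ and the (equal) other two sums $d(x,y)$. No separate case analysis and no inductive argument is required.
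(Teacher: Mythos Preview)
Your proof is correct and essentially identical to the paper's own argument: both apply the 4-point condition to $x,y,u,v$ with $u,v \in S_\ell(x,y)$, observe that two of the three distance sums equal $d(x,y)$ while the third is $d(x,y)+d(u,v)$, and conclude $d(u,v)\le 2\delta(G)$.
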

\begin{proof}
    Consider any interval $I(x,y)$ in $G$ and arbitrary two vertices $u,v\in S_k(x,y)$. Consider the three distance sums
    $S_1=d(x,y)+d(u,v)$, $S_2=d(x,u)+d(y,v)$, $S_3=d(x,v)+d(y,u)$. As $u,v\in S_k(x,y)$, we have $S_2=S_3=d(x,y)\leq S_1$. Hence,
    $2 \delta(G)\ge S_1-S_2= d(x,y)+d(u,v)-d(x,y)=d(u,v)$ for any two vertices from the same slice of $G$, i.e., $2 \delta(G)\ge \kappa(G)$.
\end{proof}

We now show that the \hg{} of any graph~$G$ is at most $\kappa(\calH(G))$.

\begin{lemma} \label{lem:hDeficitAndThinness}
For any vertex $h \in V(\calH(G))$ there is a real vertex $v \in V(G)$ such that $d(h,v) \leq \kappa(\calH(G))$.
\end{lemma}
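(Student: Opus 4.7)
The plan is to exploit Proposition~\ref{prop:shortestPathSubsetOfReal}, which says every shortest path of $H:=\calH(G)$ extends to a shortest path between real vertices, together with the isometric embedding of $G$ into $H$ so that any two vertices of $V(G)$ are connected by a shortest path whose \emph{every} vertex is real. The intuition is: pin $h$ between two real vertices on a shortest path of $H$, note that a parallel shortest path exists entirely inside $V(G)$, and then use interval thinness to bound the distance between the point on each path at the same ``height''.

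If $h\in V(G)$ there is nothing to prove, so assume $h$ is a Helly vertex. I would apply Proposition~\ref{prop:shortestPathSubsetOfReal} to the degenerate shortest path consisting of $h$ alone (or, more transparently, first BFS from $h$ to find a real vertex $y'\in V(G)$ with $h\in I_H(h,y')$ maximal, then BFS from $y'$ to find a real vertex $x'\in V(G)$ with $h\in I_H(x',y')$). This yields real vertices $x',y'\in V(G)$ such that $h$ lies on a shortest $(x',y')$-path in $H$; set $k:=d_H(x',h)$, so $h\in S_k(x',y')$.

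Next, since $G$ is an isometric subgraph of $H$, we have $d_G(x',y')=d_H(x',y')$, and any shortest $(x',y')$-path in $G$ is also a shortest $(x',y')$-path in $H$. Let $v$ be the vertex at position $k$ along such a $G$-shortest path from $x'$ to $y'$. Then $v\in V(G)$ is a real vertex and, because the path is also a shortest path in $H$, $v\in S_k(x',y')$ in $H$. Thus $h$ and $v$ lie in the same interval slice of $H$, so by the definition of interval thinness,
\[
d_H(h,v)\le \kappa(H)=\kappa(\calH(G)),
\]
which is exactly what we wanted.

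No step strikes me as a genuine obstacle; the only subtle point is making sure that Proposition~\ref{prop:shortestPathSubsetOfReal} can be invoked with the starting ``path'' being just the vertex $h$ itself, which the proof of that proposition clearly permits by iterating the BFS-extension trick twice. Once both endpoints are forced to be real, the isometry of $G$ in $H$ plus the definition of $\kappa(H)$ finishes the argument immediately.
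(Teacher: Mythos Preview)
Your proof is correct and follows essentially the same approach as the paper: apply Proposition~\ref{prop:shortestPathSubsetOfReal} to place $h$ on a shortest path between two real vertices, use the isometry of $G$ in $H$ to obtain an all-real shortest path between the same endpoints, and then invoke the definition of $\kappa(H)$ on the common interval slice. The paper's proof is just a terser version of yours with slightly different notation.
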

\begin{proof}
Let $H := \calH(G)$ and consider a vertex $h \in V(H)$.
By Proposition~\ref{prop:shortestPathSubsetOfReal}, $h$ belongs to a shortest path $P_H(x,y)$ where $x,y \in V(G)$.
Let $d_H(x,h)=\ell$.
Because $G$ is isometric in $H$, there is a shortest path $P_G(x,y):= (v_0, ...., v_k)$, where $x=v_0$, $y=v_k$ and all $v_i$ in $P_G$ are real vertices.
By assumption, $d_H(h, v_\ell) \leq \kappa(H)$.
\end{proof}

\begin{corollary}
Any graph~$G$ is $\whp$-weakly-Helly for $\whp \le \kappa(\calH(G))$.
\end{corollary}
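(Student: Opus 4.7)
The plan is to derive this corollary immediately by combining the two immediately preceding results, so essentially no new work is required. First I would invoke Lemma~\ref{lem:hDeficitAndThinness}, which guarantees that for every vertex $h \in V(\calH(G))$ there exists a real vertex $v \in V(G)$ with $d_{\calH(G)}(h,v) \le \kappa(\calH(G))$. In other words, the ``deficit'' between the Helly vertices of the injective hull and the actual vertices of $G$ is controlled by the interval thinness of $\calH(G)$.

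Next I would apply the characterization given by Theorem~\ref{thm:closeRealVertex}, which states that $G$ is $\whp$-weakly-Helly if and only if every vertex of $\calH(G)$ is within distance $\whp$ from some real vertex of $G$. Setting $\whp := \kappa(\calH(G))$, Lemma~\ref{lem:hDeficitAndThinness} verifies exactly the hypothesis of this characterization. Hence $G$ is $\kappa(\calH(G))$-weakly-Helly, and therefore $\whp(G) \le \kappa(\calH(G))$, as claimed.

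There is no real obstacle here: the substantive work was already done in establishing Proposition~\ref{prop:shortestPathSubsetOfReal} (used in proving Lemma~\ref{lem:hDeficitAndThinness}) and Theorem~\ref{thm:closeRealVertex}. The only small point worth making explicit in the write-up is that, combined with Lemma~\ref{half-thin}, this chain of inequalities $\whp(G) \le \kappa(\calH(G)) \le 2\delta(\calH(G))$ recovers (and refines) the bound $\whp(G) \le 2\delta$ for $\delta$-hyperbolic graphs mentioned earlier in the paper, provided one also uses the (standard) fact that $\delta(\calH(G)) = \delta(G)$. I would mention this consequence briefly after the corollary, since it ties the present result back to the running examples.
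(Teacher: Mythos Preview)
Your proposal is correct and matches the paper's own proof essentially verbatim: the paper simply states that the result follows from Theorem~\ref{thm:closeRealVertex} and Lemma~\ref{lem:hDeficitAndThinness}. Your additional remark about chaining with Lemma~\ref{half-thin} and $\delta(\calH(G))=\delta(G)$ to recover the $2\delta$ bound is also exactly what the paper does in the paragraph following the corollary.
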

\begin{proof}
The result follows from Theorem~\ref{thm:closeRealVertex} and Lemma~\ref{lem:hDeficitAndThinness}.
\end{proof}
Note that an $n \times n$ rectilinear grid gives an example of a graph where $\alpha=1$ and $\kappa(\calH(G))=2n$. 

Urs Lang~\cite{ursLang} has established that the $\delta$-hyperbolicity of a graph $G$ is preserved in $H=\calH(G)$. 
As hyperbolicity is preserved in $\calH(G)$,  by Lemma~\ref{lem:hDeficitAndThinness} and Lemma~\ref{half-thin},
for any Helly vertex $h \in V(H)$ there is a real vertex $v \in V(G)$ such that $d_H(h,v) \leq \kappa(\calH(G)) \le 2\delta(\calH(G)) = 2\delta(G)$.
Hence, $\delta$-hyperbolic graphs are $2\delta$-weakly-Helly. The latter result follows also from \cite{DBLP:conf/approx/ChepoiE07}.

\begin{remark}
Several graph classes are $\delta$-hyperbolic for a bounded value $\delta$, including $k$-chordal graphs and graphs without an asteroidal triple (AT). 
Three pairwise non-adjacent vertices of a graph form an AT if every two of them are connected by a path that avoids the neighborhood of the third.
A graph is $k$-chordal provided it has no induced cycle of length greater than $k$.
It is known~\cite{Wu2011} that $k$-chordal graphs have hyperbolicity at most $\frac{\lfloor k/2 \rfloor}{2}$.
Therefore, $k$-chordal graphs are $\lfloor k/2 \rfloor$-weakly-Helly.
Chordal graphs, which are exactly the 3-chordal graphs, are 1-weakly-Helly  (this result follows also from an earlier result on existence of $r$-dominating cliques in chordal  graphs\cite{DBLP:journals/dm/DraganB96}).
The 2-weakly-Helly graphs also include all 5-chordal graphs such as AT-free graphs,
all 4-chordal graphs such as weakly-chordal and cocomparability graphs, 
as well as all 1-hyperbolic graphs including interval graphs and permutation graphs. The definitions of graph classes not given here can be found in \cite{doi:10.1137/1.9780898719796}. 
\end{remark}

\begin{remark}
Observe that $\whp(G)$ can be arbitrarily smaller than the $\delta$-hyperbolicity of a graph. Consider a $(2r \times 2r)$ king grid $G$, i.e., the strong product of two paths each of even length~$2r$.
King grids form a natural subclass of Helly graphs, and therefore $\calH(G) = G$.
Thus, 
$G$ is 0-weakly-Helly, whereas $\delta(G)=r$.
\end{remark}

\subsection{Cycles}\label{subsec:cycles}
Recall that $C_n$ denotes a cycle of length~$n$.

\begin{lemma}
The \hg{} of a cycle $C_n$ is $\lfloor n/4 \rfloor$.
\end{lemma}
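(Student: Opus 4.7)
The plan is to invoke Theorem~\ref{thm:hGapBounds}(i) and observe that both the lower and the upper bound collapse to the same value $\lfloor n/4 \rfloor$. First I would note that $C_n$ is vertex-transitive and, for any vertex $v$, the farthest point on the cycle is at distance $\lfloor n/2 \rfloor$. Hence every eccentricity equals $\lfloor n/2 \rfloor$, which gives
\[
\operatorname{rad}(C_n) \;=\; \operatorname{diam}(C_n) \;=\; \lfloor n/2 \rfloor .
\]

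Substituting these equalities into Theorem~\ref{thm:hGapBounds}(i), which says $\lfloor (2\operatorname{rad}(G)-\operatorname{diam}(G))/2\rfloor \le \whp(G) \le \lfloor \operatorname{diam}(G)/2 \rfloor$, the lower and upper bounds both become $\lfloor \lfloor n/2 \rfloor / 2 \rfloor$. A short case analysis on the residue of $n$ modulo $4$ (checking $n=4k, 4k+1, 4k+2, 4k+3$) verifies that $\lfloor \lfloor n/2 \rfloor / 2 \rfloor = \lfloor n/4 \rfloor$ in every case, so $\whp(C_n) = \lfloor n/4 \rfloor$.

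There is really no obstacle here: the machinery of Theorem~\ref{thm:hGapBounds} already does all the work, because on a cycle the diameter-radius gap is as large as possible (namely zero), simultaneously pushing the lower bound up and the upper bound of that same theorem is automatically matched. If one wanted a self-contained argument bypassing Theorem~\ref{thm:hGapBounds}, the upper bound could be obtained constructively: given pairwise intersecting disks $\{D(v_i,r_i)\}$ on $C_n$, pick any vertex $c$ minimizing $\max_i (d(c,v_i)-r_i)$; the pairwise intersection condition $d(v_i,v_j)\le r_i+r_j$ on a cycle of diameter $\lfloor n/2\rfloor$ forces this minimum to be at most $\lfloor n/4 \rfloor$, and the matching lower bound would come from exhibiting two antipodal singleton disks whose common enlargement meets only at radius $\lfloor n/4 \rfloor$. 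But the invocation of Theorem~\ref{thm:hGapBounds} is cleaner and requires no further computation.
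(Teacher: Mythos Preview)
Your proposal is correct and essentially identical to the paper's proof: the paper also computes $rad(C_n)=diam(C_n)=\lfloor n/2\rfloor$ and then invokes the two halves of Theorem~\ref{thm:hGapBounds}(i) (stated there separately as Corollary~\ref{cor:whpBoundedByDiam} and Lemma~\ref{lem:whpLowerBoundByDiam2Rad}) to pin $\whp(C_n)$ at $\lfloor n/4\rfloor$. The additional self-contained sketch you offer is unnecessary but harmless.
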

\begin{proof}
Let $G$ be a cycle of length~$n$.
Clearly, $diam(G) = \lfloor n/2 \rfloor = rad(G)$.
By Proposition~\ref{cor:whpBoundedByDiam}, $G$ is $\whp$-weakly-Helly for $\whp \le \lfloor diam(G)/2 \rfloor = \lfloor n/4 \rfloor$.
By Lemma~\ref{lem:whpLowerBoundByDiam2Rad}, $\whp \geq \lfloor (2rad(G) - diam(G))/2 \rfloor = \lfloor n /4 \rfloor$.
\end{proof}

It should also be noted that for a self-centered graph (i.e., a graph where all vertex eccentricities are equal), from $\lfloor (2rad(G) - diam(G)) / 2 \rfloor \le \whp(G) \le \lfloor diam(G) / 2 \rfloor$ and since for a self-centered graph $G$,  $rad(G) = diam(G)$, we get $\whp(G) = \lfloor diam(G) / 2 \rfloor=\lfloor rad(G) / 2 \rfloor$. Cycles are self-centered graphs.   


\subsection{Graphs with an $\alpha_i$-metric}\label{subsec:alphaimetric}
Introduced by Chepoi and Yushmanov~\cite{chepoi:center-triang,yushmanovMetricGraphProperties}, a graph~$G$ is said to have an $\alpha_i$-metric if it satisfies the following: for any $x,y,z,v \in V(G)$ such that $z \in I(x,y)$ and $y \in I(z,v)$, $d_G(x,v) \ge d_G(x,y) + d_G(y,v) - i$ holds. 
For every graph $G$ with an $\alpha_i$-metric, $diam(G)\ge 2rad(G)-i-1$ holds~\cite{yushmanovMetricGraphProperties}. Several graph classes have an $\alpha_i$-metric~\cite{yushmanovMetricGraphProperties}. Ptolemaic graphs are exactly the graphs with $\alpha_0$-metric~\cite{yushmanovMetricGraphProperties}. Chordal graphs are a subclass of graphs with $\alpha_1$-metric~\cite{chepoi:center-triang}. All graphs with $\alpha_1$-metric are characterized in~\cite{yushmanovMetricGraphProperties}.
In a private communication, discussing the results of this paper,  Victor Chepoi asked if graphs with an $\alpha_i$-metric are $\whp$-weakly-Helly for an $\whp$ depending only on $i$. 
The following lemma answers this question in the affirmative. 
\begin{lemma}
Any graph $G$ with an $\alpha_i$-metric is $\whp$-weakly-Helly for $\whp \le \lceil i/2 \rceil$.
\end{lemma}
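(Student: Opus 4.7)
The plan is to apply Theorem~\ref{thm:closeRealVertex}: it suffices to show that every vertex $h$ of $H := \calH(G)$ admits some real vertex of $G$ within $H$-distance at most $\lceil i/2 \rceil$. Fix $h \in V(H)$; by Proposition~\ref{prop:shortestPathSubsetOfReal}, there exist real vertices $x, y \in V(G)$ with $h$ lying on a shortest $(x,y)$-path of $H$. Set $k := d_H(x, h)$ and $m := d_G(x, y)$, so $h$ belongs to the slice $S_k(x, y)$ of $H$.

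The first step is a structural lemma for $\alpha_i$-metric graphs: every interval slice of $G$ is $i$-thin, i.e., $d_G(u, u') \le i$ for any two real vertices $u, u' \in V(G) \cap S_k(x, y)$. The intended argument extends the shortest $(u', y)$-path just past $y$ by a single vertex $w$ (chosen so that $y \in I(u, w)$) and applies the $\alpha_i$-metric inequality to the quadruple $(x, u, y, w)$: from $u \in I(x, y)$ and $y \in I(u, w)$ the condition yields $d_G(x, w) \ge d_G(x, y) + d_G(y, w) - i$; comparing with the triangle bound $d_G(x, w) \le d_G(x, u') + d_G(u', w)$ and exploiting $u' \in I(x, y)$ forces $d_G(u, u') \le i$. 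Since $G$ is isometric in $H$, the same bound $d_H(u, u') \le i$ transfers to real slice vertices in $H$.

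The second step converts slice-thinness $i$ into the sharper bound $\lceil i/2 \rceil$ via the Helly property of $H$. Let $R := V(G) \cap S_k(x, y)$. The disks $\{D_H(r, \lceil i/2 \rceil) : r \in R\}$ pairwise intersect in $H$, since their centers have pairwise $H$-distance at most $i$, so by the Helly property they share a common vertex of $V(H)$. To conclude that $h$ itself, and not merely some auxiliary common vertex, lies within distance $\lceil i/2 \rceil$ of a real vertex, I would show that $h$ sits on a shortest path of $H$ between two real slice vertices $u_1, u_2 \in R$; then $d_H(h, u_1) + d_H(h, u_2) = d_H(u_1, u_2) \le i$, and the nearer of $u_1, u_2$ satisfies $d_H(h, u_j) \le \lfloor i/2 \rfloor \le \lceil i/2 \rceil$.

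The principal obstacle is this sandwiching step: exhibiting a pair $u_1, u_2 \in R$ on a shortest path of $H$ through $h$. Proposition~\ref{prop:shortestPathSubsetOfReal} only produces real endpoints of a shortest path through $h$, and they need not lie in the particular slice $S_k(x, y)$. The intended workaround uses the extremality condition~(\ref{eq:extremalFunctions}) on $h$'s distance vector: for the coordinate at $x$ there is a witness $z' \in V(G)$ with $h(x) + h(z') = d_G(x, z')$, and one can truncate the resulting geodesic $x$-$h$-$z'$ (and its mirror through $y$) to produce endpoints that do lie in $R$. Establishing slice-thinness in the first step is similarly delicate, because the $\alpha_i$-metric inequality demands the precise geodesic-extension configuration $(z, y, v)$, so the extension vertex $w$ past $y$ must be selected so that the inequality can be applied to both $u$ and $u'$ in a symmetric way.
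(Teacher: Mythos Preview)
Your approach through the injective hull is quite different from the paper's, which works directly with the disk-intersection definition and never touches $\calH(G)$. Unfortunately both of the steps you flag as delicate are genuine gaps.

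In Step 1, the chain of inequalities you write down never produces $d_G(u,u')$. From $u\in I(x,y)$ and $y\in I(u,w)$ the $\alpha_i$-condition gives $d_G(x,w)\ge d_G(x,y)+d_G(y,w)-i$; combining with $d_G(x,w)\le d_G(x,u')+d_G(u',w)=k+(m-k+1)=m+1$ only yields $m+1\ge m+1-i$, which is vacuous. The quantity $d_G(u,u')$ simply does not enter. There is also the preliminary issue of producing a vertex $w$ adjacent to $y$ with $y\in I(u,w)$ and $y\in I(u',w)$ simultaneously; if $y$ is peripheral no such extension need exist. (A correct slice-thinness argument can be given, but it proceeds by walking along a shortest $(u,u')$-path and applying the $\alpha_i$-inequality at the first step where the distance to $x$ or to $y$ changes---not via an extension past $y$.)

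Step 2 is the more serious problem. Even granting that real vertices in $S_k(x,y)\cap V(G)$ are pairwise within distance $i$, nothing forces the Helly vertex $h$ to lie on a geodesic of $H$ between two of them. The set $R=S_k(x,y)\cap V(G)$ may contain a single vertex, in which case there is no pair to sandwich $h$; and the extremality witnesses from condition~(\ref{eq:extremalFunctions}) give geodesics through $h$ with real endpoints, but there is no reason those endpoints land in the particular slice $S_k(x,y)$. Controlling distances inside $R$ says nothing about $d_H(h,R)$ without a bound on $\kappa(H)$, and the $\alpha_i$-property of $G$ is not known to transfer to $H$.

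The paper avoids all of this by induction on the number of disks: given pairwise intersecting disks $\{D(v,r(v)):v\in M\}$, pick $y\in M$, take $c$ in $\bigcap_{v\ne y}D(v,r(v)+\whp)$ closest to $y$, and if $d(c,y)>r(y)+\whp$ step one vertex toward $y$ to a neighbour $c'$. Minimality of $c$ produces some $x$ with $c\in I(x,c')$ and $d(c,x)=r(x)+\whp$; since $c'\in I(c,y)$, a single application of the $\alpha_i$-inequality to $(x,c,c',y)$ gives $d(x,y)>r(x)+r(y)+2\whp-i$, contradicting pairwise intersection once $\whp=\lceil i/2\rceil$. This is both shorter and sidesteps the injective hull entirely.
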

\begin{proof}
We prove by induction on the number $k$ of disks in a family of pairwise intersecting disks $\calF = \{D(v,r(v)) : v \in M \subseteq V(G)\}$.
Let $y \in M$ and pick a vertex $c$ which belongs to $\bigcap_{v \in M \setminus \{y\}}D(v,r(v)+\whp)$ such that $c$ is closest to $y$.
If $d(c,y) \le r(y) + \whp$, then $c$ is common to all disks of $\calF$ inflated by $\whp$ and we are done.
Assume now that $d(c,y) > r(y) + \whp$.
Let $c' \in S_1(c,y)$.
By choice of $c$, there is a disk $D(x,r(x)) \in \calF$ such that $d(c,x) = r(x) + \whp = d(c',x) - 1$.
Hence, $c \in I(x,c')$ and $c' \in I(c,y)$. 
Applying $\alpha_i$-metric to $x,c,c',y$, one obtains $d(x,y) \ge d(x,c) + d(c,y) - i > (r(x) + \whp) + (r(y) + \whp) - i=r(x) + r(y) + 2\whp -i$. If now $i\le 2\whp$, we get $d(x,y)> r(x) + r(y)$, contradicting the fact that disks $D(x,r(x))$ and $D(y,r(y))$ intersect. Thus, for 
every $i$ (even or odd) such that $i\le 2\whp$, $d(c,y) \le r(y) + \whp$ must hold. That is, $G$ is $\whp$-weakly-Helly for $\whp = \lceil i/2 \rceil$ (pick $\whp=i/2$, when $i$ is even,  and $\whp=(i+1)/2$, when $i$ is odd).
\end{proof}

\begin{remark}
Observe that a graph $G$ with an $\alpha_i$-metric can have Helly-gap $\whp(G)$ that is arbitrarily smaller than $\lceil i/2 \rceil$. Consider a $(1 \times \ell)$ rectilinear grid $G$. Let $(x,y)$ and $(z,v)$ be edges on extreme ends of $G$ so that $d(x,z) = d(y,v) = \ell$ and $d(y,z) = d(x,v) = \ell+1$. Then, $z \in I(x,v)$, $v \in I(z,y)$, and $1 = d(x,y) \ge d(x,v) + d(v,y) - i = 2\ell +1 - i$.
Thus, $G$ has an $\alpha_i$-metric for $i \ge 2\ell$ whereas $\whp(G) = 1$.
\end{remark}

\subsection{Graphs of bounded tree-breadth, tree-length, or tree-width}\label{subsec:treeDecompisitionParameters}
A \emph{tree-decomposition} $(\calT,T)$~\cite{ROBERTSON1986309} for a graph $G=(V,E)$ is a family $\calT = \{ B_1, B_2, ... \}$ of subsets of $V$, called \emph{bags}, such that $\calT$ forms a tree $T$ with the bags in $\calT$ as nodes which satisfy the following conditions:
(i) each vertex is contained in a bag,
(ii) for each edge $(u,v) \in E$, $\calT$ has a bag $B$ with $u,v \in B$, and
(iii) for each vertex $v \in V$, the bags containing $v$ induce a subtree of $T$.
The width of a tree decomposition is the size of its largest bag minus one.
A tree decomposition has breadth $\rho$ if, for each bag $B$, there is a vertex $v$ in $G$ such that $B \subseteq D_G(v,\rho)$.
A tree decomposition has length $\lambda$ if the diameter in~$G$ of each bag $B$ is at most $\lambda$.
The \emph{tree-width} $tw(G)$~\cite{ROBERTSON1986309}, \emph{tree-breadth} $tb(G)$~\cite{DBLP:journals/algorithmica/DraganK14} and  \emph{tree-length} $tl(G)$~\cite{DOURISBOURE20072008} are the minimum width, breadth, and length, respectively, among all possible tree decompositions of $G$.
By definition, $tb(G) \le tl(G) \le 2tb(G)$, as for any graph $G$ and any set $M\subseteq V(G)$, $rad_G(M)\le diam_G(M)\le 2rad_G(M)$ holds.

\begin{lemma} Any graph $G$ is $\alpha$-weakly-Helly for $\alpha \le tb(G)$ and $\alpha \le tl(G)$.
\end{lemma}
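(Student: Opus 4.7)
The plan is to exploit the Helly property of subtrees of a tree. Fix an optimal tree decomposition $(\calT,T)$ of $G$, of length $\lambda:=tl(G)$ (respectively, of breadth $\rho:=tb(G)$), and let $\calF=\{D_G(v,r(v)):v\in S\}$ be any family of pairwise intersecting disks in $G$. To each $v\in S$ I associate the collection of bags $\tilde T_v := \{B\in \calT : B\cap D_G(v,r(v))\ne \emptyset\}$, and aim to show that $\{\tilde T_v : v\in S\}$ is a family of pairwise intersecting subtrees of $T$.

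First I would verify that each $\tilde T_v$ induces a subtree of $T$. Write $\tilde T_v = \bigcup_{u\in D_G(v,r(v))} T_u$, where $T_u:=\{B\in\calT:u\in B\}$ is the (connected) subtree of bags containing $u$. For any $u\in D_G(v,r(v))$, pick a shortest path $v=u_0,u_1,\dots,u_k=u$ with $k\le r(v)$; every $u_i$ lies in $D_G(v,r(v))$, and consecutive $u_i,u_{i+1}$ share a bag by the edge-covering axiom of tree decompositions, so $T_{u_i}\cap T_{u_{i+1}}\ne\emptyset$. Hence $\bigcup_i T_{u_i}$ is a connected subgraph of $T$ contained in $\tilde T_v$ which links $T_v$ to $T_u$; since $u$ was arbitrary, $\tilde T_v$ is connected. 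Pairwise intersection is then immediate: if $z\in D_G(v,r(v))\cap D_G(u,r(u))$, any bag containing $z$ lies in $\tilde T_v\cap \tilde T_u$.

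By the classical Helly property for subtrees of a tree, there exists a common bag $B^*\in \bigcap_{v\in S}\tilde T_v$, and for each $v\in S$ I may pick $w_v\in B^*\cap D_G(v,r(v))$. For the tree-length bound, choose any $w\in B^*$; then $d_G(w,v)\le d_G(w,w_v)+d_G(w_v,v)\le \lambda+r(v)$, so $w\in\bigcap_{v\in S}D_G(v,r(v)+\lambda)$, giving $\whp(G)\le\lambda=tl(G)$. For the tree-breadth bound, invoke the definition of breadth to pick a vertex $c\in V(G)$ with $B^*\subseteq D_G(c,\rho)$; then $d_G(c,w_v)\le\rho$ yields $d_G(c,v)\le \rho+r(v)$, so $c\in\bigcap_{v\in S}D_G(v,r(v)+\rho)$ and $\whp(G)\le\rho=tb(G)$. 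The only genuinely delicate step is establishing connectivity of $\tilde T_v$ in $T$; after that, the conclusion is a routine application of tree Helly combined with the triangle inequality.
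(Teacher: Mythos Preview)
Your proof is correct and follows essentially the same approach as the paper: associate to each disk the set of bags meeting it, use the Helly property for subtrees of a tree to get a common bag, and then use a vertex in (or the center of) that bag together with the triangle inequality. You spell out the connectivity of $\tilde T_v$ more carefully than the paper does, and you prove the tree-length bound directly rather than deducing it from $tb(G)\le tl(G)$, but these are presentational differences, not a different method.
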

\begin{proof}
Let $G$ have tree-breadth $tb(G) \le \rho$.
Consider a corresponding  tree-decomposition $(\calT,T)$ of $G$ of breadth $\rho$, and let $\calF = \{ D_G(v, r(v)) : v \in S \subseteq V(G)\}$ be a family of disks of $G$ that pairwise intersect.
Observe that bags containing vertices of a disk induce a subtree of $T$ and the subtrees of $T$ corresponding to disks of $\calF$ pairwise intersect in $T$. 
If the subtrees of a tree pairwise intersect, then they have a common node in $T$.
Therefore, there is a bag $B \in \calT$ such that each disk in $\calF$ intersects $B$ in $G$.
Let vertex $w$ be the center of bag $B$, i.e., $B \subseteq D_G(w, \rho)$. 
So, each $v \in S$ satisfies $w \in D_G(v, r(v) + \rho)$.
Hence, $G$ is $\alpha$-weakly-Helly where $\alpha \le tb(G) \le tl(G)$.
\end{proof}

\begin{remark}
While $\alpha:=\alpha(G)$ is upper bounded by tree-breadth and tree-length, $\alpha$ can be arbitrarily far from tree-width and arbitrarily smaller than tree-length.
Consider the $(r \times r)$ rectilinear grid $G$ which is 1-weakly-Helly but $tw(G)=r$ (cf.~\cite{ROBERTSON198449}) and $tl(G) = 2r$ (cf.~\cite{DOURISBOURE20072008,Adcock_2016}).
On the other hand, let $G$ be a cycle of size $4k$;
the \hg{} of $G$ is $k$, yet $tw(G) = 2$.
\end{remark}

\subsection{Bridged graphs}\label{subsec:bridged}
A graph is \emph{bridged}\cite{FARBER1987249} if it contains no isometric cycles of length greater than 3.
As any isometric subgraph is an induced subgraph, bridged graphs form a superclass of chordal graphs.
Interestingly, we find that although chordal graphs are 1-weakly-Helly, the \hg{} of a bridged graph can be arbitrarily large.
Consider the bridged graph $G$ in Fig.~\ref{fig:bridgedUnbounded} with each side of length $s=4k$ for some integer~$k$.
Clearly the disks centered at $x,y,z$ with radius $2k$ pairwise intersect and have no common intersection.
However, only extending all radii by $k$ yields middle vertex $u$ as a common intersection.
Thus, the \hg{} of $G$ is at least $k$, where $k$ can be arbitrarily large.

\begin{center}
\includegraphics[scale=1]{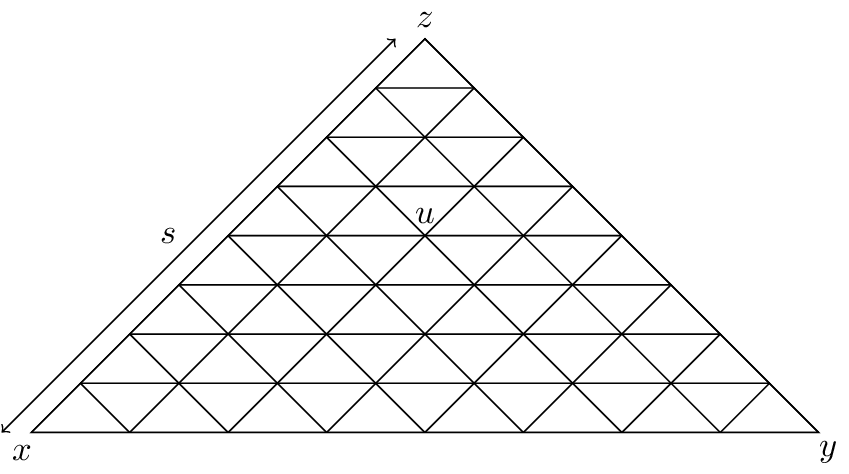}
\captionof{figure}{Example that bridged graphs have arbitrarily large \hg{}.}\label{fig:bridgedUnbounded}
\end{center}

\section{Conclusion}\label{sec:conclusion}
We introduced a new metric parameter for a graph, the Helly-gap $\alpha(G)$. We characterized $\whp$-weakly-Helly graphs with respect to distances in their injective hulls and used this as a tool to generalize many eccentricity related results known for Helly graphs to  much larger
family of graphs, $\whp$-weakly-Helly
graphs. In particular, we related the diameter, radius, center, and all eccentricities in~$G$ to their counterparts in~$\calH(G)$. We provided estimates on $\whp(G)$ and on eccentricities based on a variety of conditions, thereby generalizing some  eccentricity related results known not only for Helly graphs but also for distance-hereditary graphs, chordal graphs, and $\delta$-hyperbolic graphs. Several additional graph classes are identified which have a bounded \hg{}. 

Two interesting questions remain open.
First, is there a characterization of $\whp(G)$ using only the radius and diameter of every~$M \subseteq V(G)$? Is it true that $\whp(G)=\max_{M \subseteq V(G)}\lfloor (2rad_G(M) - diam_G(M)) / 2 \rfloor$?
As it was recently shown \cite{newDDG2020}, the equality $\max_{M \subseteq V(G)}\lfloor (2rad_G(M) - diam_G(M)) / 2 \rfloor=0$ characterizes the Helly graphs (i.e., the graphs with $\alpha(G)=0$): $G$ is a Helly graph if and only if $\max_{M \subseteq V(G)}\lfloor (2rad_G(M) - diam_G(M)) / 2 \rfloor=0$. Second, as $\whp$-weakly-Helly graphs are a far reaching superclass of $\delta$-hyperbolic graphs, under what additional conditions are $\whp$-weakly-Helly graphs reduced to $\delta$-hyperbolic graphs for some $\delta$ dependent on $\whp$?

\medskip
\noindent
{\bf Acknowledgment:} 
We are grateful to Victor Chepoi for reading an earlier version of this paper and providing a few interesting remarks. 
\bibliographystyle{abbrv}
\bibliography{bibliography}

\end{document}